\newtheorem{theorem}{Theorem}
\newtheorem{proposition}{Proposition}
\newtheorem{corollary}{Corollary}
\newtheorem{lemma}{Lemma}
\newtheorem{remark}{Remark}
\DeclareMathOperator*{\argmax}{arg\,max}
\begin{document}

\begin{frontmatter}

\title{Bayesian inference for multivariate extreme value distributions}
\runtitle{Bayesian inference for multivariate extremes}


\author{\fnms{Cl\'ement} \snm{Dombry}\ead[label=e1]{clement.dombry@univ-fcomte.fr}}
\address{Univ. Bourgogne Franche--Comt\'e\\
Laboratoire de Math\'ematiques de Besan\c{c}on\\
UMR CNRS 6623, 16 Route de Gray\\
25030 Besan\c{c}on cedex, France\\
\printead{e1}}

\author{\fnms{Sebastian} \snm{Engelke}\ead[label=e2]{sebastian.engelke@epfl.ch}}
\address{\'Ecole Polytechnique F\'ed\'erale de Lausanne\\
  EPFL-FSB-MATHAA-STAT, Station 8\\
  1015 Lausanne, Switzerland\\
\printead{e2}}

\author{\fnms{Marco} \snm{Oesting}\ead[label=e3]{oesting@mathematik.uni-siegen.de}}
\address{Universit\"at Siegen\\
  Department Mathematik\\
  Walter-Flex-Str.~3\\
  57068 Siegen, Germany\\
\printead{e3}}

\runauthor{Dombry et al.}

\begin{abstract}
  Statistical modeling of multivariate and spatial extreme events has attracted
  broad attention in various areas of science. Max-stable distributions and 
  processes are the natural class of models for this purpose, and many 
  parametric families have been developed and successfully applied. Due to 
  complicated likelihoods, the efficient statistical inference is still an 
  active area of research, and usually composite likelihood methods based on 
  bivariate densities only are used. Thibaud et al. [2016, 
  \emph{Ann.~Appl.~Stat.}, to appear] use a Bayesian approach to fit a 
  Brown--Resnick process to extreme temperatures. In this paper, we extend this 
  idea to a methodology that is applicable to general max-stable distributions 
  and that uses full likelihoods. We further provide simple conditions for the
  asymptotic normality of the median of the posterior distribution and verify
  them for the commonly used models in multivariate and spatial extreme value
  statistics. A simulation study shows that this point estimator is 
  considerably more efficient than the composite likelihood estimator in a 
  frequentist framework. From a Bayesian perspective, our approach opens the 
  way for new techniques such as Bayesian model comparison in multivariate and 
  spatial extremes. 
\end{abstract}



\end{frontmatter}

\section{Introduction} \label{sec:intro}

Extremes and the impacts of rare events have been brought into public focus in 
the context of climate change or financial crises. The temporal or spatial 
concurrence of several such events has often shown to be most catastrophic. 
Arising naturally as limits of rescaled componentwise maxima of random vectors,
max-stable distributions are frequently used to describe this joint behavior of
extremes. The generalization to continuous domains gives rise to max-stable 
processes that have become popular models in spatial extreme value statistics
\citep[e.g.,][]{dav2012}, and are applied in various fields such as meteorology
\citep{bui2008, eng2014, ein2016} and hydrology \citep{asa2015}.

For a $k$-dimensional max-stable random vector $Z = (Z_1,\dots,Z_k)$ with unit 
Fr\'echet margins, there exists an exponent function $V$ describing the 
dependence between the components of $Z$ such that $\mathbb{P}[Z\leq z] = 
\exp\{-V(z)\}$, $z\in (0,\infty)^k$. Many parametric models 
$\{F_\theta,\ \theta \in \Theta\}$ for the distribution function of $Z$ have 
been proposed \citep[cf.][]{sch2002, bol2007, kab2009, opi2013},
but likelihood-based inference remains challenging. The main reason is the lack
of simple forms of the likelihood $L(z;\theta)$ in these models, which, by 
Fa\'a di Bruno's formula, is given by
\begin{equation} \label{eq:llh_Z}
  L(z;\theta) = \sum_{\tau\in \mathscr{P}_k} L(z,\tau;\theta)
              = \sum_{\tau\in \mathscr{P}_k} \exp\{-V(z)\} \prod_{j=1}^{|\tau|} \{-\partial_{\tau_j}V(z)\},
\end{equation}
where $\mathscr{P}_k$ is the set of all partitions $\tau = \{\tau_1,\ldots,
\tau_{|\tau|}\}$ of $\{1,\ldots,k\}$ and $\partial_{\tau_j}V(\cdot;\theta)$ 
denotes the partial derivative of the exponent function $V=V_\theta$ of 
$F_\theta$ with respect to the variables $z_i$, $i\in \tau_j$. The fact that 
the cardinality of $\mathscr{P}_k$ is the $k$th Bell number that grows 
super-exponentially in the dimension $k$ inhibits the use of the maximum 
likelihood methods based on $L(z;\theta)$ in \eqref{eq:llh_Z}.

The most common way to avoid this problem is to maximize the composite pairwise
likelihood that relies only on the information in bivariate sub-vectors of $Z$ 
\citep{PRS10}. Apart from the fact that this likelihood is misspecified, there 
might also be considerable losses in efficiency by using the composition of
bivariate likelihoods instead of the full likelihood $L(z;\theta)$. To avoid 
this efficiency loss, higher order composite likelihood has been considered
\citep{GMS11,HD13,CHG16}.

In practice, inference is either based on threshold exceedances \citep{eng2012,
wadsworth-tawn14, thi2015b} or the block maxima method. For the latter, the 
data, typically a multivariate time series, is split into disjoint blocks and a
max-stable distribution is fitted to the componentwise maxima within each 
block. To increase the efficiency, not only the block maxima but additional 
information from the time series can be exploited. The componentwise occurrence
times of the maxima within each block lead to a partition $\tau$ of 
$\{1,\ldots,k\}$ with indices belonging to the same subset if and only if the 
maxima in this component occurred at the same time. The knowledge of this 
partition makes inference much easier, as a single summand $L(z,\tau;\theta)$
in the full likelihood $L(z;\theta)$ given in \eqref{eq:llh_Z} corresponds to 
the likelihood contribution of the specific partition $\tau$. This joint 
likelihood $L(z,\tau;\theta)$ was introduced in \citet{ST05} and is tractable 
for many extreme value models and, consequently, can be used for inference if 
occurrence times are available. 

In real data applications, however, the distribution of the block maxima is
only approximated by a max-stable distribution and the distribution of the 
observed partitions of occurrence times are only approximations to the limiting
distribution (as the block size tends to infinity) given by the likelihood 
$L(z,\tau;\theta)$. This approximation introduces a significant bias in the 
Stephenson--Tawn estimator and a bias correction has been proposed in 
\citet{wadsworth14}.

In many cases, only observations $z^{(1)},\ldots,z^{(N)} \in\mathbb R^k$ of the
random max-stable vector $Z$ are available, but there is no information about 
the corresponding partitions $\tau^{(1)},\dots,\tau^{(N)}$. In this case, the 
Stephenson--Tawn likelihood cannot be used since the partition information is
missing. In the context of conditional simulation of max-stable processes, 
\citet{DEMR13} proposed a Gibbs sampler to obtain conditional samples of 
$\tau^{(l)}$ given the observation $z^{(l)}$, $l=1,\ldots,N$. \cite{thi2015} 
use this approach to treat the missing partitions as latent variables in a 
Bayesian framework to estimate the parameters of a Brown--Resnick model 
\citep[cf.,][]{kab2009} for extreme temperature. They obtain samples from the
posterior distribution
\begin{equation}\label{posterior}
  L\left(\theta,\{\tau^{(l)}\}_{l=1}^N \mid \{z^{(l)}\}_{l=1}^N\right) \propto \pi_\theta(\theta) \prod_{l=1}^N L(z^{(l)},\tau^{(l)};\theta),
\end{equation}
via a Markov chain Monte Carlo algorithm, where $\pi_\theta$ is the prior 
distribution on $\Theta$.

In this paper we extend the Bayesian approach to general max-stable 
distributions and provide various examples of parametric models $F_\theta$ 
where it can be applied. The first focus is to study the statistical 
efficiency of the point estimators obtained as the median of the posterior 
distribution \eqref{posterior}. This frequentist perspective allows to compare
the efficiency of the Bayesian estimator that uses the full likelihoods to 
other frequentist estimators. A simulation study shows a substantial 
improvement of the estimation error when using full likelihoods rather than 
the commonly used pairwise likelihood estimator of \citet{PRS10}.

From the Bayesian perspective, this approach opens up many new possibilities
for Bayesian techniques in multivariate extreme value statistics. Besides 
readily available credible intervals, we discuss how Bayesian model comparison
can be implemented. Thanks to the full, well-specified likelihoods in our 
approach, no adjustment of the posterior distribution as in the composite 
pairwise likelihood methods \citep{rib2012} is required.

In Section \ref{sec:metho} we provide some background on max-stable 
distributions and their likelihoods, and we present the general methodology for
the Bayesian full-likelihood approach. Section \ref{sec:asymp} develops an 
asymptotic theory for the resulting estimator. We show in Section 
\ref{sec:examples} that our method and the asymptotic theory are applicable for 
the popular models in multivariate and spatial extremes, including the 
 Brown--Resnick and extremal-$t$ processes. The simulation studies
in Section \ref{sec:simu} quantify the finite-sample efficiency gains of the 
Bayesian approach when used as a frequentist point estimator of the extremal
dependence parameters. Interestingly, this advantage persists when the 
dependence is a nuisance parameter and one is only interested in estimating
marginal parameters (Section \ref{est_margins}). The posterior distribution and 
genuinely Bayesian techniques are studied in Section \ref{sec:bayes}, with a 
focus on Bayesian model comparison. Section 
\ref{sec:disc} concludes the paper with a discussion on computational aspects.

\section{Methodology} \label{sec:metho}

In Section \ref{subsec:maxstab} we review some facts on max-stable distributions 
and their likelihoods. We describe the general setup 
of our approach and review the Markov chain Monte Carlo algorithm from
\citet{thi2015} and the Gibbs sampler from \citet{DEMR13} in Section 
\ref{subsec:bayes_setup}. 

\subsection{Max-stable distributions, partitions and joint likelihoods}\label{subsec:maxstab}

Let us assume from now on that the max-stable vector $Z$ belongs to a 
parametric family $\{F_\theta, \theta \in \Theta\}$, where $\Theta \subset 
\mathbb R^p$ is the parameter space, and that it admits a density $f_\theta$.
The exponent function of $F_\theta$ is $V_\theta(z) = -\log F_\theta(z)$. If 
there is no confusion we might omit the dependence on $\theta$ for simplicity.

Recall that if $Z$ has standard Fr\'echet margins, it can be represented as the
componentwise maximum 
\begin{equation} \label{spec-repr}
 Z_i = \max_{j\in\mathbb N} \psi^{(j)}_i, \quad i=1,\ldots,k,
\end{equation}
where $\{\psi^{(j)}: j \in \mathbb N\}$ is a Poisson point process on 
$E = [0,\infty)^k\setminus \{0\}$ with intensity measure $\Lambda$ such that
$\Lambda( E\setminus [0,z])=V(z)$. For more details and an exact simulation 
method of $Z$ via this representation, we refer to \citet{dom2016}.

Analogously to the occurrence times in case of block maxima, the Poisson point 
process induces a random limit partition $T$ of the index set $\{1,\ldots,k\}$, 
where two indices $i_1 \neq i_2$ belong to the same subset if and only if 
$Z_{i_1} = \psi^{(j)}_{i_1}$ and $Z_{i_2} = \psi^{(j)}_{i_2}$ for the same 
$j \in \mathbb N$ \citep{DEM13}. The joint likelihood of the max-stable vector
$Z$ and the limit partition $T$ under the model $F_\theta$ satisfies
\begin{equation} \label{llh_ST}
 L(z,\tau; \theta) = \exp\{-V(z)\} \prod_{j=1}^{|\tau|} \{-\partial_{\tau_j}V(z)\}, \quad z \in (0,\infty)^k, \ \tau \in \mathscr{P}_k,
\end{equation}
 and it equals the likelihood introduced in \citet{ST05}. This fact provides 
another interpretation of Equation \eqref{eq:llh_Z}, namely that the likelihood
of $Z$ is the integrated joint likelihood of $Z$ and $T$.

In \cite{deo2017} it has been shown that the existence of a density for the
simple max-stable random vector $Z$ with exponent measure $\Lambda$ is 
equivalent to the existence of a density $\lambda_I$ for the restrictions of
$\Lambda$ to the different faces $E_I\subset E$ defined by
$$ E_I=\{z\in E;\ z_I>0 \mbox{ for } i \in I \mbox{ and } z_{I^c}=0 \mbox{ for } i \notin I\}, \quad \emptyset\neq I\subset \{1,\ldots,k\},$$
that is,
$$ \Lambda(A)=\sum_{\emptyset\neq I\subset\{1,\ldots,k\}} \int_{\{z_I: \, z \in A \cap E_I\}} \lambda_I(z_I)\mu_I(\mathrm{d}z_I),$$
Thus, the Stephenson--Tawn likelihood $L(z,\tau;\theta)$ can be rewritten as
\begin{equation}\label{eq:def_L_tau_frechet}
  L(z, \tau;\theta)=\exp\{-V(z)\}\prod_{j=1}^\ell \omega(\tau_j,z),
\end{equation}
where
\begin{equation}\label{weights}
 \omega(\tau_j, z) = \sum_{\tau_j \subset I \subset\{1,\ldots,k\}} \int_{(0,z_{\tau_j^c \cap I})} \lambda_I(z_{\tau_j},u_j) \mathrm{d}u_j,
\end{equation}
and $\tau_1,\ldots,\tau_\ell$ denote the $\ell = |\tau|$ different blocks of 
the partition $\tau$, and $z_{\tau_j}$ and $z_{\tau_j^c}$ are the restrictions
of $z$ to $\tau_j$ and $\tau_j^c= \{1, \ldots, k\} \setminus \tau_j$, 
respectively. 

Equation \eqref{eq:def_L_tau_frechet} provides a formula for the joint 
likelihood of max-stable distributions with unit Fr\'echet margins and its 
partition. From this we can deduce a formula for the joint likelihood of a
general max-stable distribution that admits a density. More precisely, let
$\bar Z$ be a $k$-dimensional max-stable random vector whose $i$th component,
$i=1,\dots,k$, has a generalized extreme value distribution with parameters 
$(\mu_i,\sigma_i,\xi_i) \in \mathbb R \times (0,\infty) \times \mathbb R$, 
that is,
\begin{align*}
  \mathbb P(\bar Z_i \leq z_i) = \exp\left\{- \left(1 + \xi_i \frac{z_i - \mu_i}{\sigma_i}\right)_+^{-1/\xi_i}\right\},   \quad z_i \in \mathbb R.
\end{align*}
Then, $U_i(\bar Z_i)$ has unit Fr\'echet distribution where $U_i$ denotes the 
marginal transformation 
\begin{align*}
  U_i(x)  = \left(1 + \xi_i \frac{x - \mu_i}{\sigma_i}\right)^{1/\xi_i}, \quad 1 + \xi_i \frac{x - \mu_i}{\sigma_i} > 0.
\end{align*} 
For a vector $z_I = (z_i)_{i\in I}$ with $I \subset \{1,\dots, k\}$, we define 
$U(z_I) = (U_i(z_i))_{i\in I}$, such that 
$\mathbb P(\bar Z \leq z) = \exp\left\{-V[U(z)]\right\}$,
where $V$ is the exponent measure of the normalized max-stable distribution
$U(\bar Z)$. Consequently, the joint density of the general max-stable vector
$\bar Z$ and the limit partition $T$ is 
\begin{align} \label{eq:def_L_tau_general} 
  L(z, \tau; \theta) = \exp\left\{ - V[U(z)]\right\} \cdot  \left(\prod_{j=1}^\ell \omega(\tau_j,U(z))\right) \cdot
 \left(\prod_{i = 1}^k \frac{1}{\sigma_i} U_i(z_i)^{1 -\xi_i}\right),  
\end{align}
for $z\in (0,\infty)^k$ such that $1 + \xi_i (z_i-\mu_i)/\sigma_i>0,\ i=1,\ldots,k$ and $\tau = \{\tau_1,\ldots,\tau_\ell\} \in \mathscr{P}_k$.

\subsection{Bayesian inference and Markov chain Monte Carlo}
\label{subsec:bayes_setup}

Extreme value statistics is concerned with the estimation and uncertainty 
quantification of the parameter vector $\theta \in \Theta$. Here, $\theta$ 
might include both marginal and dependence parameters of the max-stable model. 
In a Bayesian setup we introduce a prior $\pi_{\theta}(\theta)$ on the
parameter space $\Theta$. Given independent data $z^{(1)},\dots, z^{(N)} \in 
\mathbb R^k$ from the max-stable distribution $Z\sim F_\theta$, we are
interested in the posterior distribution of the parameter $\theta$ conditional 
on the data. As explained in Section \ref{sec:intro}, the complex structure of
the full likelihood $L(\{z^{(l)}\};\theta) = \prod_{l=1}^N L(z^{(l)};\theta)$ 
prevents a direct assessment of the posterior distribution, which is
proportional to the product of $L(\{z^{(l)}\};\theta)$ and the prior density 
$\pi_{\theta}(\theta)$. Instead, we introduce the corresponding limit 
partitions $T^{(1)},\dots, T^{(N)}$ as latent variables and sample from the 
joint distribution of $(\theta, T^{(1)}, \dots, T^{(N)})$ conditional on the 
data $z^{(1)}, \dots, z^{(N)}$, which is given in Equation \eqref{posterior}.

It is customary to use Monte Carlo Markov Chain methods to sample from a target 
distribution which is known up to a multiplicative constant only. The aim is to 
construct a Markov chain which possesses the target distribution as stationary
distribution and has good mixing properties. To this end, in each step of the 
Markov chain, the parameter vector $\theta$ and the partitions $T^{(1)}, \dots, 
T^{(N)}$ are updated separately by the Metropolis--Hastings algorithm and a 
Gibbs sampler, respectively \citep[cf.,][]{thi2015}.

For fixed partitions $T^{(l)}=\tau^{(l)}$, $l=1,\dots, N$, and the current 
state $\theta$ for the parameter vector, we propose a new state $\theta^\ast$ 
according to a probability density $q(\theta,\cdot)$ which satisfies 
$q(\theta_1,\theta_2)>0$ if and only if $q(\theta_2,\theta_1)>0$ for $\theta_1,
\theta_2 \in \Theta$. The proposal is accepted, that is, $\theta$ is updated to
$\theta^\ast$, with probability
\begin{equation}
 a(\theta,\theta^\ast)=\min\left\{\frac{\prod_{l=1}^N L(z^{(l)},\tau^{(l)};\theta^\ast) \pi_{\theta}(\theta^\ast) q(\theta^\ast,\theta)}
                                       {\prod_{l=1}^N L(z^{(l)},\tau^{(l)};\theta)      \pi_{\theta}(\theta)      q(\theta,\theta^\ast)}, 1\right\}
\end{equation}
where $L(z,\tau;\theta)$ is given by \eqref{eq:def_L_tau_general}. In general,
there are various ways of choosing an appropriate proposal density $q$. For 
instance, it might be advisable to update the vector $\theta$ component by 
component. It has to be ensured that any state $\theta_2$ with positive 
posterior density can be reached from any other state $\theta_1$ with positive 
posterior density in a finite number of steps, that is, that the Markov chain 
is irreducible. The convergence of the Markov chain to its stationary 
distribution \eqref{posterior} is then guaranteed. Note that the framework 
described above enables estimation of marginal and dependence parameters 
simultaneously. In particular, it allows for response surface methodology such 
as (log-)linear models for the marginal parameters.

For a fixed parameter vector $\theta \in \Theta$ we use the Gibbs sampler in
\citet{DEMR13} to update the current states of the partitions $\tau^{(1)},\dots, \tau^{(N)} \in\mathcal{P}_k$
conditional on the data $z^{(1)},\dots, z^{(N)}$. Thanks to independence, for 
each $l=1,\dots, N$, we can update $\tau = \tau^{(l)}$ conditional on 
$z = z^{(l)}$ separately, where the conditional distribution is 
\begin{align} \label{eq:L-tau}
 L(\tau \mid z; \theta) &={} \frac{L(z, \tau; \theta)}{\sum_{\tau' \in \mathscr{P}_k} L(z, \tau'; \theta)} = \frac{1}{C_{z}} \prod_{j=1}^\ell \omega\{\tau_j, U(z)\},
\end{align}
with $C_{z}$  the normalization constant
\begin{equation*}
C_{z}=\sum_{\tau\in\mathcal{P}_k}\prod_{j=1}^\ell \omega\{\tau_j,U(z)\}.
\end{equation*}

For $i\in\{1,\ldots,k\}$, let $\tau_{-i}$ be the restriction of $\tau$ to the 
set $\{1,\ldots,k\} \setminus \{i\}$. As usual with Gibbs samplers, our goal
is to simulate from 
\begin{equation} \label{eq:fullCondDist}
  \mathbb{P}_\theta\left(T = \cdot \mid T_{-i} = \tau_{-i}, \, Z=z\right),
\end{equation}
where $\tau$ is the current state of the Markov chain and $\mathbb{P}_\theta$ 
denotes the probability under the assumption that $Z$ follows the law 
$F_\theta$. It is easy to see that the number of possible updates according to 
\eqref{eq:fullCondDist} is always less than $k$, so that a combinatorial 
explosion is avoided. Indeed, the index $i$ can be reallocated to any of the 
components of $\tau_{-i}$ or to a new component with a single point: the number
of possible updates $\tau^\ast \in\mathcal{P}_k$ such that $\tau^\ast_{-i} = 
\tau_{-i}$ equals $\ell$ is $\{i\}$ if a partitioning set of $\tau$, and
$\ell + 1$ otherwise.

The distribution~\eqref{eq:fullCondDist} has nice properties. From
\eqref{eq:L-tau}, we obtain that
\begin{equation}\label{eq:condDistPart}
  \mathbb{P}_\theta(T = \tau^\ast \mid T_{-i} = \tau_{-i}, Z=z) =\frac{L(z,\tau^\ast)}{{\displaystyle
      \sum_{\tau' \in \mathcal{P}_k} L(z, \tau') 1_{\{\tau'_{-i} = \tau_{-i}\}}}} \propto
  \frac{\prod_{j=1}^{|\tau^\ast|} w\{\tau^\ast_j, U(z)\}}{\prod_{j=1}^{|\tau|}  w\{\tau_j,U(z)\}}.  
\end{equation}
for all $\tau^\ast \in \mathscr{P}_k$ with $\tau^\ast_{-i} = \tau_{-i}$. Since
$\tau$ and $\tau^\ast$ share many components, all the factors in the right-hand
side of~\eqref{eq:condDistPart} cancel out except at most four of them. This 
makes the Gibbs sampler particularly convenient.

We suggest a random scan implementation of the Gibbs sampler, meaning that one 
iteration of the  Gibbs sampler selects randomly an element $i \in \{1,\ldots,
k\}$ and then updates the current state $\tau$ according to the proposal 
distribution \eqref{eq:fullCondDist}. For the sake of simplicity, we use the
uniform random scan, i.e., $i$ is selected according to the uniform 
distribution on $\{1,\ldots,k\}$. 

\section{Asymptotic results}\label{sec:asymp}

In the previous section, we presented a procedure that allows to sample from
the posterior distribution of the parameter $\theta$ of a parametric model
$\{f_\theta,\theta\in\Theta\}$ given a sample of $N$ observations. In this 
section, we will discuss the asymptotic properties of the posterior 
distribution as the sample size $N$ tends to $\infty$.

The asymptotic analysis of Bayes procedures usually relies on the
Bernstein--von Mises theorem which allows for an asymptotic normal 
approximation of the posterior distribution of $\sqrt{N} (\theta-\theta_0)$, 
given the observations $z^{(1)},\ldots,z^{(N)}$ from the parametric model
$f_{\theta_0}$. The theorem then implies the asymptotic normality and 
efficiency of Bayesian point estimators such as the posterior mean or 
posterior median with the same asymptotic variance as the maximum likelihood
estimator.

\begin{theorem}[Bernstein-von Mises, Theorems 10.1 and 10.8 in \citet{vdV98}] \label{bernstein}
Let the parametric model $\{f_\theta,\theta\in\Theta\}$ be differentiable in
quadratic mean at $\theta_0$ with non-singular Fisher information matrix
$I_{\theta_0}$, and assume that the mapping $\theta \mapsto \sqrt{f_\theta(z)}$
is differentiable at $\theta_0$ for $f_{\theta_0}$-almost every $z$. Suppose 
that, for every $\varepsilon>0$, there exists a sequence of uniformly 
consistent tests $\phi_N=\phi_N(z^{(1)},\ldots,z^{(N)})\in\{0,1\}$ for
testing $H_0:\theta=\theta_0$ against $H_1:\|\theta-\theta_0\|_\infty \geq
\varepsilon$, that is
\begin{equation}\label{eq:uniftest}
\mathbb P_{\theta_0}(\phi_N = 1)=0 \quad \mbox{and} \quad \sup_{\|\theta-\theta_0\|_\infty \geq \varepsilon} \mathbb P_{\theta}(\phi_N = 0)=0 \quad \mbox{as } N\to\infty.
\end{equation}
where $\mathbb P_{\theta}$ denotes the probability measure induced by $N$ 
independent copies of $Z \sim f_\theta$. Suppose that the prior distribution 
$\pi_{\mathrm{prior}}(\mathrm{d}\theta)$ is absolutely continuous in a 
neighborhood of $\theta_0$ with a continuous positive density at $\theta_0$.
Then, under the distribution $f_{\theta_0}$, the posterior distribution 
satisfies
\[
\left\| \pi_{\mathrm{post}}(\mathrm{d}\theta\mid z^{(1)},\ldots,z^{(N)}) - \mathcal{N}\left(\theta_0+N^{-1/2}\Delta_{N,\theta_0},N^{-1}I_{\theta_0}^{-1} \right)\right\|_{TV}
  \stackrel{d}{\longrightarrow} 0 \quad\mbox{as } N\to\infty,
\]
where $\Delta_{N,\theta_0}={N}^{-1/2}\sum_{i=1}^N I_{\theta_0}^{-1} \partial_{\theta} \log f_{\theta_0} (z^{(i)})$ 
and $\| \cdot\|_{TV}$ is the total variation distance.

As a consequence, if the prior distribution $\pi_{\mathrm{prior}}(\mathrm{d}\theta)$
has a finite mean, the posterior median $\hat\theta_n^{Bayes}$ is asymptotically
normal and efficient, that is, it satisfies 
\begin{equation*}
\sqrt{N} (\hat\theta_N^{Bayes} - \theta_0) \stackrel{d}\longrightarrow \mathcal{N}(0,I_{\theta_0}^{-1}), \quad \mbox{as } N\to\infty.
\end{equation*}
\end{theorem}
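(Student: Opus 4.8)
The statement is the classical Bernstein--von Mises theorem together with its standard corollary on the posterior median, so the plan is to assemble the argument from three well-separated ingredients and then read off the consequence for the median. First I would establish \emph{local asymptotic normality} (LAN) of the model at $\theta_0$. Differentiability in quadratic mean, with the non-singular Fisher information $I_{\theta_0}$, yields the expansion
\[
\log \prod_{i=1}^N \frac{f_{\theta_0 + h/\sqrt N}(z^{(i)})}{f_{\theta_0}(z^{(i)})} = h^\top G_N - \tfrac12 h^\top I_{\theta_0} h + o_{\mathbb P_{\theta_0}}(1),
\]
uniformly over $h$ in compact sets, where $G_N = N^{-1/2}\sum_{i=1}^N \partial_\theta \log f_{\theta_0}(z^{(i)}) \stackrel{d}{\to} \mathcal N(0, I_{\theta_0})$ by the central limit theorem. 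This is exactly what the quadratic-mean-differentiability hypothesis is designed to deliver, and it identifies $\Delta_{N,\theta_0} = I_{\theta_0}^{-1} G_N$ as the recentering term appearing in the statement.

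Second I would prove \emph{posterior concentration}: for every $\varepsilon>0$ one has $\pi_{\mathrm{post}}(\|\theta-\theta_0\|_\infty \geq \varepsilon \mid z^{(1)},\dots,z^{(N)}) \to 0$ in $\mathbb P_{\theta_0}$-probability. Here the hypothesized uniformly consistent tests $\phi_N$ do the work. Splitting the posterior mass over the event $\{\phi_N=1\}$ and its complement, the first contribution vanishes because $\phi_N$ has asymptotic power one against every $\theta$ with $\|\theta-\theta_0\|_\infty \geq \varepsilon$, while on $\{\phi_N=0\}$ one bounds the integrated likelihood ratio over the far region using the tests together with the positivity of the prior density near $\theta_0$. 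This is Le Cam's testing argument, and it localizes the whole problem to a shrinking $O(N^{-1/2})$ neighborhood of $\theta_0$.

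Third I would combine the two ingredients on the localized scale $h = \sqrt N(\theta-\theta_0)$. Continuity and positivity of the prior density at $\theta_0$ make the prior asymptotically flat at that scale, so the posterior of $h$ is proportional to $\exp(h^\top G_N - \tfrac12 h^\top I_{\theta_0} h)$ up to a negligible remainder, which is the density of $\mathcal N(I_{\theta_0}^{-1} G_N, I_{\theta_0}^{-1})$. Making this precise in \emph{total variation} is the main obstacle: one must control the LAN remainder uniformly over the localized region and simultaneously argue that the Gaussian approximation assigns negligible mass outside it. This is the technical heart of Theorem 10.1 in \citet{vdV98}, and it is the step I would invoke from the cited result rather than reconstruct by hand.

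Finally, the corollary on the median follows from the total-variation convergence. If the prior has a finite mean, then the required uniform integrability holds and total-variation convergence of the standardized posterior to a fixed Gaussian transfers to convergence of its quantiles, giving $\sqrt N(\hat\theta_N^{Bayes}-\theta_0) = \Delta_{N,\theta_0} + o_{\mathbb P_{\theta_0}}(1)$. Since $\Delta_{N,\theta_0} = I_{\theta_0}^{-1} G_N \stackrel{d}{\to} \mathcal N(0, I_{\theta_0}^{-1})$ by the central limit theorem, Slutsky's lemma yields the asserted asymptotic normality and efficiency of the posterior median, matching the asymptotic variance of the maximum likelihood estimator.
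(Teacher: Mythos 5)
Your proposal is correct and takes essentially the same route as the paper, which in fact gives no proof of this statement at all but quotes it directly from Theorems 10.1 and 10.8 of van der Vaart (1998); your outline (LAN via differentiability in quadratic mean, Le Cam's testing argument for posterior concentration, localization at scale $N^{-1/2}$ to get the total-variation statement, then transfer to the point estimator) is exactly the argument underlying that citation, with the technical core appropriately deferred to the cited result. One minor imprecision: the finite-mean/uniform-integrability hypothesis is what Theorem 10.8 requires for the posterior \emph{mean}; for the \emph{median}, total-variation convergence to a Gaussian with everywhere-positive density already yields convergence of the quantiles, so your appeal to uniform integrability is superfluous there---but this mirrors the paper's own (stronger than necessary) assumption and does not affect correctness.
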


In order to apply this theorem to max-stable distributions, two main 
assumptions are required: the differentiability in quadratic mean of the 
statistical model and the existence of uniformly consistent tests satisfying 
\eqref{eq:uniftest}. Differentiability in quadratic mean is a technical 
condition that imposes a certain regularity on the likelihood $f_{\theta_0}$.
For the case of multivariate max-stable models this property has been 
considered in detail in \cite{deo2017}, where equivalent conditions on the 
exponent function and the spectral density are given.

We now discuss the existence of uniformly consistent tests and propose a 
criterion based on pairwise extremal coefficients. This criterion turns out to
be simple and general enough since it applies for most of the standard models 
in extreme value theory. Indeed, in many cases, pairwise extremal coefficients
can be explicitly computed and allow for identifying the parameter~$\theta$.  

For a max-stable vector $Z$ with unit Fr\'echet margins, the pairwise extremal
coefficient $\tau_{i_1,i_2}\in [1,2]$ between margins $1\leq i_1< i_2\leq k$ is
defined by
$$ \mathbb{P}(Z_{i_1}\leq z,Z_{i_2} \leq z ) = \exp\left\{-\tau_{i_1,i_2}/z\right\}, \quad z>0. $$
It is the scale exponent of the unit Fr\'echet variable $Z_{i_1}\vee Z_{i_2}$
and hence satisfies
$$ \tau_{i_1,i_2}= \left(\mathbb{E}\left[\frac{1}{Z_{i_1}\vee Z_{i_2}}\right] \right)^{-1}.$$
In the case that $Z$ follows the distribution $f_\theta$, we write 
$\tau_{i_1,i_2}(\theta)$ for the associated pairwise extremal coefficient.

\begin{proposition}\label{prop:uniftest}
Let $\theta_0\in\Theta$ and $\varepsilon>0$. Assume that 
\begin{equation}\label{eq:uniftest1}
\inf_{\|\theta-\theta_0\|_\infty \geq \varepsilon} \max_{1\leq i_1<i_2\leq k} |\tau_{i_1,i_2}(\theta)-\tau_{i_1,i_2}(\theta_0)|>0.
\end{equation}
Then there exists a uniformly consistent sequence of tests $\phi_N$ satisfying
\eqref{eq:uniftest}.
\end{proposition}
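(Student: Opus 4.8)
The plan is to exploit the special distributional structure behind the pairwise extremal coefficients: on the unit Fréchet scale (which is the scale on which these coefficients are defined, so I transform the margins if necessary and regard $\tau_{i_1,i_2}$ as a functional of the dependence only), the variable $Z_{i_1}\vee Z_{i_2}$ is itself unit Fréchet with scale $\tau_{i_1,i_2}(\theta)$ under $\mathbb P_\theta$. Hence $Y^{(l)}_{i_1,i_2}:=(Z^{(l)}_{i_1}\vee Z^{(l)}_{i_2})^{-1}$ is exponentially distributed with rate $\tau_{i_1,i_2}(\theta)\in[1,2]$ and mean $m_{i_1,i_2}(\theta):=\tau_{i_1,i_2}(\theta)^{-1}$. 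I would therefore estimate each coefficient by the empirical mean
\begin{equation*}
\hat m_{i_1,i_2}=\frac1N\sum_{l=1}^N Y^{(l)}_{i_1,i_2},\qquad 1\le i_1<i_2\le k,
\end{equation*}
and define the test $\phi_N=\mathbf 1\{\max_{i_1<i_2}|\hat m_{i_1,i_2}-m_{i_1,i_2}(\theta_0)|>\eta\}$ for a threshold $\eta>0$ fixed below.

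The hard part will be a concentration inequality for $\hat m_{i_1,i_2}$ that is \emph{uniform in the alternative $\theta$}, since the type II condition in \eqref{eq:uniftest} is a supremum over $\|\theta-\theta_0\|_\infty\ge\varepsilon$. This is exactly where the boundedness $\tau_{i_1,i_2}(\theta)\in[1,2]$ is indispensable: the centred moment generating function of $Y^{(l)}_{i_1,i_2}$, namely $s\mapsto e^{-s/\lambda}\lambda/(\lambda-s)$ with $\lambda=\tau_{i_1,i_2}(\theta)$, is controlled for $|s|\le 1/2$ \emph{uniformly} over $\lambda\in[1,2]$, so the summands are sub-exponential with parameters not depending on $\theta$ nor on the pair. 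Bernstein's inequality then yields constants $c_1,c_2>0$, independent of $\theta$, with
\begin{equation*}
\sup_{\theta\in\Theta}\mathbb P_\theta\!\left(|\hat m_{i_1,i_2}-m_{i_1,i_2}(\theta)|>t\right)\le 2\exp\!\left(-N\min(c_1t^2,c_2t)\right),\qquad t>0,
\end{equation*}
and, by a union bound over the $\binom k2$ pairs,
\begin{equation*}
\sup_{\theta\in\Theta}\mathbb P_\theta\!\left(\max_{i_1<i_2}|\hat m_{i_1,i_2}-m_{i_1,i_2}(\theta)|>t\right)\le 2\binom k2\exp\!\left(-N\min(c_1t^2,c_2t)\right).
\end{equation*}

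It then remains to fix $\eta$ and transfer the separation assumption. Since $\tau\mapsto\tau^{-1}$ is bi-Lipschitz on $[1,2]$ (indeed $|\tau^{-1}-\tau'^{-1}|\ge|\tau-\tau'|/4$), assumption \eqref{eq:uniftest1} passes to the reciprocals: writing $\delta:=\inf_{\|\theta-\theta_0\|_\infty\ge\varepsilon}\max_{i_1<i_2}|\tau_{i_1,i_2}(\theta)-\tau_{i_1,i_2}(\theta_0)|>0$, one obtains $\delta':=\inf_{\|\theta-\theta_0\|_\infty\ge\varepsilon}\max_{i_1<i_2}|m_{i_1,i_2}(\theta)-m_{i_1,i_2}(\theta_0)|\ge\delta/4>0$, and I would take $\eta=\delta'/2$. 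The type I error $\mathbb P_{\theta_0}(\phi_N=1)$ is then bounded by the displayed quantity at $t=\eta$ and $\theta=\theta_0$, which tends to $0$. For the type II error I would argue that, under any $\theta$ with $\|\theta-\theta_0\|_\infty\ge\varepsilon$, choosing a separating pair $(i_1^\star,i_2^\star)$ with $|m_{i_1^\star,i_2^\star}(\theta)-m_{i_1^\star,i_2^\star}(\theta_0)|\ge\delta'$ and applying the triangle inequality shows that $\{\phi_N=0\}\subseteq\{\max_{i_1<i_2}|\hat m_{i_1,i_2}-m_{i_1,i_2}(\theta)|\ge\delta'/2\}$; hence $\sup_{\|\theta-\theta_0\|_\infty\ge\varepsilon}\mathbb P_\theta(\phi_N=0)\le 2\binom k2\exp(-N\min(c_1(\delta'/2)^2,c_2\delta'/2))\to0$. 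Both parts of \eqref{eq:uniftest} follow, and the only genuinely technical ingredient is the $\theta$-uniform sub-exponential bound of the second paragraph.
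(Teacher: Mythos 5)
Your proposal is correct and follows essentially the same route as the paper: you use the same statistic (the empirical mean of $1/(Z_{i_1}\vee Z_{i_2})$, exploiting its exponential distribution with rate $\tau_{i_1,i_2}(\theta)\in[1,2]$), the same threshold test centred at $\theta_0$, the same union bound for the type I error, and the same triangle-inequality argument transferring the separation hypothesis \eqref{eq:uniftest1} from $\tau$ to $\tau^{-1}$ for the type II error. The only difference is cosmetic: the paper gets its $\theta$-uniform deviation bound from Chebyshev's inequality (the variance is at most $1$ uniformly in $\theta$), whereas you invoke Bernstein's inequality for sub-exponential variables, which yields a sharper rate but is not needed for the conclusion.
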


\begin{remark}\label{rem:id}
  The identifiability of the model parameters $\theta\in\Theta$ through the 
  pairwise extremal coefficients $\tau_{i_1,i_2}(\theta)$, $1 \leq i_1 < i_2
  \leq k$ is a direct consequence of Equation~\eqref{eq:uniftest1}. 
\end{remark}

\begin{remark}\label{rem:mon}
  If $\theta = (\theta_1,\dots, \theta_p)\in\Theta$, and for any $1 \leq j 
  \leq p$ there exists $1 \leq i_1 < i_2 \leq k$, such that 
  $\tau_{i_1,i_2}(\theta)$ depends only on $\theta_j$ and it is strictly 
  monotone with respect to this component, then Equation~\eqref{eq:uniftest1}
  is satisfied.
\end{remark}

\begin{proof}[Proof of Proposition~\ref{prop:uniftest}]
For a random vector $Z$ with distribution $f_\theta$ and 
$1\leq i_1<i_2\leq k$, the random variable $1 / (Z_{i_1} \vee Z_{i_2})$
follows an exponential distribution with parameter $\tau_{i_1,i_2}(\theta)\in[1,2]$ 
and variance $\tau_{i_1,i_2}^{-2}(\theta) \in [1/4,1]$. Hence, 
\[
T_{i_1,i_2}^{-1} =\frac 1 N \sum_{i=1}^N \frac 1 {Z_{i_1}^{(i)} \vee Z_{i_2}^{(i)}}
\]
is an unbiased estimator of $\tau_{i_1,i_2}^{-1}(\theta)$ with variance less 
than or equal to $1/N$. Chebychev's inequality entails 
\begin{equation}\label{eq:uniftest2}
\mathbb{P}_\theta\left( \left| T_{i_1,i_2}^{-1}-\tau_{i_1,i_2}^{-1}(\theta)\right| > \delta \right)
\leq{} \frac 1 {N \delta^2}, \quad \mbox{for all } \delta > 0. 
\end{equation}
Define the test $\phi_N$ by
\[
\phi_N=\left\{\begin{array}{ll}0 & \mbox{if } \max_{1 \leq i_1 < i_2\leq k} \left|T_{i_1,i_2}^{-1} - \tau_{i_1,i_2}^{-1}(\theta_0) \right| \leq \delta, \\
                               1 &\mbox{otherwise}. \end{array}\right.
\]
We prove below that, for $\delta>0$ small enough, the sequence $\phi_N$ 
satisfies \eqref{eq:uniftest}. For $\theta=\theta_0$, the union bound together
with Eq.~\eqref{eq:uniftest2} yield 
\[
\mathbb{P}_{\theta_0}(\phi_B=1) \leq \sum_{1\leq i_1<i_2\leq k} \mathbb{P}_{\theta_0}\left( \left| T_{i_1,i_2}^{-1}-\tau_{i_1,i_2}^{-1}(\theta_0)\right| > \delta \right)
                                \leq \frac{k(k-1)}{2N\delta^2} \longrightarrow 0,
\]
as $N \to \infty$. On the other hand, Eq.~\eqref{eq:uniftest1} implies that 
there is some $\gamma>0$ such that
\[ \max_{1\leq i_1<i_2\leq k} \left|\tau_{i_1,i_2}^{-1}(\theta)-\tau_{i_1,i_2}^{-1}(\theta_0)\right|\geq \gamma,\quad \mbox{ for all } \|\theta-\theta_0\|_\infty \geq \varepsilon.
\]
Let $\theta\in\Theta$ be such that $\|\theta-\theta_0\|_\infty\geq\varepsilon$
and consider $1\leq i_1<i_2\leq k$ realizing the maximum in the above equation.
By the triangle inequality,
\[
\left|T_{i_1,i_2}^{-1}-\tau_{i_1,i_2}^{-1}(\theta)\right| 
\geq \left|\tau_{i_1,i_2}^{-1}(\theta)-\tau_{i_1,i_2}^{-1}(\theta_0)\right|-\left|T_{i_1,i_2}^{-1}-\tau_{i_1,i_2}^{-1}(\theta_0)\right|
\geq \gamma-\left|T_{i_1,i_2}^{-1}-\tau_{i_1,i_2}^{-1}(\theta_0)\right|,
\]
so that, on the event $\{\phi_N=0\}\subset \left\{ \left|T_{i_1,i_2}^{-1}-\tau_{i_1,i_2}^{-1}(\theta_0)\right| \leq \delta \right\}$,
we have
$$\left|T_{i_1,i_2}^{-1}-\tau_{i_1,i_2}^{-1}(\theta)\right|\geq \gamma-\delta.$$
Applying Eq.~\eqref{eq:uniftest2} again, we deduce, for $\delta\in (0,\gamma)$,
\[
\mathbb{P}_\theta(\phi_N=0)\leq \mathbb{P}_\theta\left(\left|T_{i_1,i_2}^{-1}-\tau_{i_1,i_2}^{-1}(\theta)\right| \geq \gamma - \delta\right) \leq \frac{1}{N(\gamma-\delta)^2}.
\]
Since the upper bound goes to $0$ uniformly in $\theta$ with $\|\theta-\theta_0\|_\infty \geq \varepsilon$,
as $N \to \infty$, this proves Eq.~\eqref{eq:uniftest}.
\end{proof}

\section{Examples} \label{sec:examples}

In the previous sections we discussed the practical implementation of a Markov 
chain Monte Carlo algorithm to obtain samples from the posterior distribution
$L\left(\theta,\{\tau^{(l)}\}_{l=1}^N \mid \{z^{(l)}\}_{l=1}^N\right)$ and 
its asymptotic behavior as $N \to \infty$. The only model-specific
quantity needed to run the algorithm are the weights $\omega(\tau_j,z)$ in~\eqref{weights}. In this section, we provide explicit formulas for these 
weights for several classes of popular max-stable models and prove that the
models satisfy the assumptions of the Bernstein--von Mises theorem; see Theorem~\ref{bernstein}. 
It follows that the posterior median $\hat\theta_N^{Bayes}$ is 
asymptotically normal and efficient for these models. 

For the calculation of the weights $\omega(\tau_j,z)$, we first note that all 
of the examples in this section admit densities as a simple consequence of 
Prop.~2.1 in \cite{deo2017}. We further note that, for the models considered in 
Subsections \ref{subsec:logistic}-\ref{subsec:BR}, we have $\lambda_I \equiv 0$
for all $I \subsetneq \{1,\ldots,k\}$, i.e.\ 
$$ \Lambda(A)= \int_{A} \lambda(z) \mu(\mathrm{d}z), \quad  A \subset [0,\infty)^k \setminus \{0\},$$
and, consequently, Equation \eqref{weights} simplifies to
$$  \omega(\tau_j, z) = \int_{(0,z_{\tau_j^c})} \lambda(z_{\tau_j},u_j) \mathrm{d}u_j. $$
For the posterior median $\hat\theta_n^{Bayes}$, in the sequel, we will always 
assume that the prior distribution is absolutely continuous with strictly 
positive density in a neighborhood of $\theta_0$, and that it has finite mean. 
Given the differentiability in quadratic mean of the model, it suffices to 
verify condition~\eqref{eq:uniftest1} in Prop.~\ref{prop:uniftest}. This 
implies the existence of a uniformly consistent sequence of tests and, by 
Remark~\ref{rem:id}, the identifiability of the model. Theorem \ref{bernstein} 
then ensures asymptotic normality of the posterior median. 

Analogously to the notation $z_I = (z_i)_{i \in I}$ for a vector $z \in 
\mathbb{R}^k$ and an index set $\emptyset \neq I \subset \{1,\ldots,k\}$, we
write $A_{I,J} = (A_{ij})_{i \in I, j \in J}$ for a matrix 
$A = (A_{ij})_{1 \leq i,j \leq k}$ and index sets $\emptyset \neq I, J \subset 
\{1,\ldots,k\}$. Proofs for the results presented in this section can be found
in Appendix \ref{app:proofs}.

\subsection{The logistic model} \label{subsec:logistic}

One of the simplest multivariate extreme value distributions is the logistic 
model where
\begin{align} \label{log_expo} 
  V(z)=\left(z_1^{-1/\theta}+\cdots+z_k^{-1/\theta}\right)^\theta, \quad \theta\in (0,1).
\end{align}
The logistic model is symmetric in its variables and interpolates between 
independence as $\theta \uparrow 1$ and complete dependence as $\theta 
\downarrow 0$.

\begin{proposition}\label{log_prop}
  Let $\tau = (\tau_1, \dots, \tau_\ell) \in \mathcal P_k$ 
  and $z\in E$. The weights $\omega(\tau_j,z)$ in \eqref{weights} for
  the logistic model with exponent measure \eqref{log_expo} are
  \begin{align}\label{log_weight}
  \omega(\tau_j,z)= \theta^{-|\tau_j|+1}\frac{\Gamma(|\tau_j|-\theta)}{\Gamma(1-\theta)}\,\left(\sum_{i=1}^k z_i^{-1/\theta}\right)^{\theta-|\tau_j|} \prod_{i\in\tau_j} z_i^{-1-1/\theta}.
  \end{align}
\end{proposition}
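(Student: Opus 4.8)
The plan is to reduce the weight $\omega(\tau_j,z)$ to a single partial derivative of the exponent function and then to evaluate that derivative explicitly by the chain rule. The starting point is the reduction recorded at the beginning of Section~\ref{sec:examples}: the logistic model puts no mass on the lower-dimensional faces, so $\lambda_I\equiv 0$ for $I\subsetneq\{1,\dots,k\}$, the exponent measure has a density $\lambda$ on $(0,\infty)^k$ with $\lambda(y)=-\partial_{1\cdots k}V(y)$, and definition~\eqref{weights} collapses to
\begin{equation*}
\omega(\tau_j,z)=\int_{(0,z_{\tau_j^c})}\lambda(z_{\tau_j},u)\,\mathrm{d}u
 =-\int_{(0,z_{\tau_j^c})}\partial_{\tau_j^c}\big(\partial_{\tau_j}V\big)(z_{\tau_j},u)\,\mathrm{d}u .
\end{equation*}
First I would integrate out the coordinates $u_i$, $i\in\tau_j^c$, one at a time via the fundamental theorem of calculus. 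The upper endpoints $u_i=z_i$ reassemble $\partial_{\tau_j}V(z)$, while the contributions at $u_i\to 0^+$ must be shown to vanish; this is the one spot needing care, as it requires $\partial_{\tau_j}V(z_{\tau_j},u)\to 0$ whenever some complementary coordinate tends to $0$. For the logistic exponent this is immediate: as $u_i\downarrow 0$ the term $u_i^{-1/\theta}$ blows up inside the sum defining $V$ but enters $\partial_{\tau_j}V$ only through a strictly negative power of that sum, so the derivative decays to $0$. The result of this first step is the clean identity $\omega(\tau_j,z)=-\partial_{\tau_j}V(z)$, consistent with comparing \eqref{llh_ST} and \eqref{eq:def_L_tau_frechet}.

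The second step is the explicit differentiation. Writing $m=|\tau_j|$, $S=\sum_{i=1}^k z_i^{-1/\theta}$ and $V=S^{\theta}$, I would differentiate $S^\theta$ successively with respect to the $m$ \emph{distinct} variables $z_i$, $i\in\tau_j$. Because the indices are pairwise different, no genuine product rule arises: each differentiation lowers the exponent of $S$ by one and produces the factor $\partial_{z_i}S=-\tfrac1\theta z_i^{-1-1/\theta}$, giving
\begin{equation*}
\partial_{\tau_j}V=\Big(\prod_{r=0}^{m-1}(\theta-r)\Big)\,S^{\theta-m}\prod_{i\in\tau_j}\big(-\tfrac1\theta z_i^{-1-1/\theta}\big).
\end{equation*}

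It then remains to simplify the scalar prefactor of $-\partial_{\tau_j}V$, namely $-(-1/\theta)^m\prod_{r=0}^{m-1}(\theta-r)$. The main obstacle, though entirely elementary, is this bookkeeping of signs and powers of $\theta$: pulling the factor $\theta$ out of the product leaves $\prod_{r=1}^{m-1}(r-\theta)$, all factors positive for $\theta\in(0,1)$, and the residual powers of $-1$ cancel, so the coefficient reduces to the positive number $\theta^{1-m}\prod_{r=1}^{m-1}(r-\theta)$. Recognising $\prod_{r=1}^{m-1}(r-\theta)=\Gamma(m-\theta)/\Gamma(1-\theta)$ through the recursion $\Gamma(x+1)=x\Gamma(x)$, and substituting $S^{\theta-m}=\big(\sum_{i=1}^k z_i^{-1/\theta}\big)^{\theta-|\tau_j|}$ together with $\prod_{i\in\tau_j}z_i^{-1-1/\theta}$, yields exactly~\eqref{log_weight}.
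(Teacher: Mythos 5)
Your proof is correct and follows essentially the same route as the paper: compute $-\partial_{\tau_j}V$ for $V=S^\theta$ by the chain rule and recognise the prefactor as $\theta^{1-|\tau_j|}\Gamma(|\tau_j|-\theta)/\Gamma(1-\theta)$. The only difference is that you explicitly justify the reduction $\omega(\tau_j,z)=-\partial_{\tau_j}V(z)$ by integrating the density over the complementary coordinates and checking that the boundary terms vanish, a step the paper leaves implicit; this is a welcome (and correct) addition rather than a deviation.
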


\begin{remark}
 From \ref{log_weight}, it can be seen that we can also write
\begin{eqnarray*}
  L(\tau,z) &=&\exp(-V(z))\left(\prod_{i=1}^k z_i^{-1-1/\theta}\right) \left(\sum_{i=1}^k z_i^{-1/\theta}\right)^{-k} \theta^{-k}\prod_{j=1}^\ell \tilde \omega(\tau_j,z)
\end{eqnarray*}
 with
\[
\tilde \omega(\tau_j,z)=\theta \frac{\Gamma(|\tau_j|-\theta)}{\Gamma(1-\theta)}\left(\sum_{i=1}^k z_i^{-1/\theta}\right)^{\theta}.
\]
This suggests to use the simplified weights $\tilde \omega$ for the Gibbs
sampler.
\end{remark}

\begin{proposition} \label{prop:cond-logistic}
 For the logistic model is differentiable with $\theta_0 \in (0,1)$, the 
 posterior median $\hat\theta_N^{Bayes}$ is asymptotically normal and 
 efficient as $N \to \infty$.
\end{proposition}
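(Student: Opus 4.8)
The plan is to verify the hypotheses of the Bernstein--von Mises theorem (Theorem~\ref{bernstein}) for the one-parameter logistic family and to let the asserted asymptotic normality and efficiency of $\hat\theta_N^{Bayes}$ follow directly from its conclusion. The prior is assumed throughout this section to be absolutely continuous with strictly positive continuous density at $\theta_0$ and to have finite mean, so that only two model-side ingredients remain: (i) differentiability in quadratic mean at $\theta_0$ with non-singular Fisher information, together with differentiability of $\theta\mapsto\sqrt{f_\theta(z)}$ for $f_{\theta_0}$-almost every $z$; and (ii) the existence of a uniformly consistent sequence of tests. By Proposition~\ref{prop:uniftest}, step (ii) reduces to checking the separation condition~\eqref{eq:uniftest1} on the pairwise extremal coefficients.

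First I would dispose of (ii), which is the only genuinely model-specific computation and is essentially immediate. Letting the coordinates other than $i_1,i_2$ tend to infinity in the exponent function~\eqref{log_expo} yields the bivariate marginal exponent $V_{i_1,i_2}(z_{i_1},z_{i_2})=(z_{i_1}^{-1/\theta}+z_{i_2}^{-1/\theta})^\theta$, so that the pairwise extremal coefficient is $\tau_{i_1,i_2}(\theta)=V_{i_1,i_2}(1,1)=2^\theta$ for every pair $1\le i_1<i_2\le k$. Since the logistic model has a single parameter and $\theta\mapsto 2^\theta$ is strictly increasing on $(0,1)$, Remark~\ref{rem:mon} applies verbatim and gives~\eqref{eq:uniftest1}; Proposition~\ref{prop:uniftest} then furnishes the uniformly consistent tests, and Remark~\ref{rem:id} yields identifiability.

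It remains to address (i), the regularity of the model. Here I would invoke the equivalent conditions of \cite{deo2017}, which characterize differentiability in quadratic mean through smoothness and integrability properties of the exponent function $V_\theta$ and its spectral density; for the logistic model both are smooth in $\theta\in(0,1)$, so these conditions can be checked directly. If instead one argues from scratch via the classical sufficient criterion (Lemma~7.6 in \citet{vdV98}), the density $f_\theta(z)=\sum_{\tau\in\mathscr P_k}L(z,\tau;\theta)$ built from~\eqref{eq:def_L_tau_frechet} and the weights~\eqref{log_weight} is a finite sum of terms that are analytic in $\theta$ on $(0,1)$ — the only delicate factors being $\Gamma(|\tau_j|-\theta)$, which are positive and smooth since $|\tau_j|\ge 1>\theta$ — so pointwise differentiability of $\theta\mapsto\sqrt{f_\theta(z)}$ is clear, and the non-degeneracy $I_{\theta_0}>0$ follows from the genuine dependence of the distribution on $\theta$ (already visible in $\tau_{i_1,i_2}=2^\theta$). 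I expect the main obstacle to be the domination step needed to conclude that $I_\theta=\mathbb E_\theta[(\partial_\theta\log f_\theta)^2]$ is finite and continuous near $\theta_0$: one must produce an integrable envelope for the score $\partial_\theta\log f_\theta(z)$ that is uniform over a neighbourhood of $\theta_0$, controlling the logarithmic derivatives of the Gamma factors and of the powers $(\sum_i z_i^{-1/\theta})^{\theta-|\tau_j|}$ simultaneously across all partition terms and near the boundary of the support. Once such an envelope is in place, all hypotheses of Theorem~\ref{bernstein} hold and the conclusion follows.
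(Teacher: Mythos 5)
Your proposal is correct and follows essentially the same route as the paper: differentiability in quadratic mean is obtained from the results of \cite{deo2017} (the paper cites Prop.~4.1 there), the pairwise extremal coefficient is computed as $\tau_{i_1,i_2}(\theta)=2^\theta$, and strict monotonicity in the single parameter invokes Remark~\ref{rem:mon} and Proposition~\ref{prop:uniftest} to supply the uniformly consistent tests required by Theorem~\ref{bernstein}. The additional sketch of a from-scratch verification of quadratic-mean differentiability is a reasonable supplement but not needed, since the paper simply delegates that step to the cited reference.
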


\subsection{The Dirichlet model} \label{subsec:dirichlet}

The Dirichlet model \citep{col1991} is defined by its spectral density $h$ 
on the simplex $S^{k-1}=\{w\in [0,\infty)^k\ :\ w_1+\cdots+w_k=1\}$. For 
parameters $\alpha_1,\ldots,\alpha_k>0$, it is given by
\begin{equation} \label{diri-spec}
 h(w) =  \frac 1 k \frac{\Gamma(1+\sum_{i=1}^k \alpha_i)}{(\sum_{i=1}^k \alpha_i w_i)^{k+1}} \prod_{i=1}^k \frac{\alpha_i}{\Gamma(\alpha_i)}  
           \left(\frac{\alpha_i w_i}{\sum_{j=1}^k \alpha_j w_j}\right)^{\alpha_i-1}, \quad w \in S^{k-1},
\end{equation}
and it has no mass on lower-dimensional faces of $S^{k-1}$ \citep{col1991}. Equivalently, the exponent function of the Dirichlet model is given by
\[
V(z)=k\mathbb{E}\left[ \max_{i=1,\ldots,k} \frac{W_i}{z_i}\right],
\]
where $W$ is a random vector with density $h(w)$.

\begin{proposition}\label{diri_prop}
  Let $\tau = (\tau_1, \dots, \tau_\ell) \in \mathcal P_k$ 
  and $z\in E$. The weights $\omega(\tau_j,z)$ in \eqref{weights} for
  the Dirichlet model with spectral density \eqref{diri-spec} are
  \begin{align}\label{diri_weight}
  \omega(\tau_j,z)= \prod_{i\in\tau_j} \frac{\alpha_i^{\alpha_i} z_i^{\alpha_i-1}}{\Gamma(\alpha_i)}  \int_0^\infty  e^{-\frac 1 r \sum_{i\in \tau_j} \alpha_i z_i}
  \left(\prod_{i\in\tau_j^c} F_{\alpha_i}(\alpha_i z_i/r)\right) r^{-2-\sum_{i=1}^k \alpha_i} \mathrm d r,
  \end{align}
  where 
  \begin{eqnarray*}
    F_{\alpha}(x) = \frac{1}{\Gamma(\alpha)} \int_0^x t^{\alpha -1 } e^{-t} \mathrm d t
  \end{eqnarray*}
  is the distribution function of a Gamma variable with shape $\alpha >0$.
\end{proposition}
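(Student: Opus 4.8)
The plan is to reduce the whole computation to the density $\lambda$ of the intensity measure $\Lambda$ on the top face and then to exploit a scale-mixture representation of $\lambda$ that makes the integration over the coordinates in $\tau_j^c$ transparent. Since for the Dirichlet model $\lambda_I\equiv 0$ for all proper $I\subsetneq\{1,\dots,k\}$ (as recorded just before the subsections), the weight reduces to the single integral $\omega(\tau_j,z)=\int_{(0,z_{\tau_j^c})}\lambda(z_{\tau_j},u_j)\,\mathrm{d}u_j$ over the top face, so only $\lambda$ is needed.

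First I would pass from the spectral density $h$ in \eqref{diri-spec} to $\lambda$. Writing a point as $y=rw$ with radius $r=\sum_{i=1}^k y_i$ and angle $w=y/r\in S^{k-1}$, the spectral representation \eqref{spec-repr} exhibits $\Lambda$ as the push-forward of $r^{-2}\,\mathrm{d}r\otimes k\,h(w)\,\mathrm{d}w$ under $(r,w)\mapsto rw$; parametrising the simplex by $(w_1,\dots,w_{k-1})$ the Jacobian of this map equals $r^{k-1}$, so $\lambda(y)=k\,h(y/r)\,r^{-(k+1)}$. Substituting $w_i=y_i/r$ and $\sum_j\alpha_j w_j=r^{-1}\sum_j\alpha_j y_j$ into \eqref{diri-spec}, the powers of $r$ and the prefactor $1/k$ cancel, leaving
\[
  \lambda(y)=\Gamma\left(1+\sum_{i=1}^k\alpha_i\right)\left(\sum_{i=1}^k\alpha_i y_i\right)^{-1-\sum_{i=1}^k\alpha_i}\prod_{i=1}^k\frac{\alpha_i^{\alpha_i}y_i^{\alpha_i-1}}{\Gamma(\alpha_i)},\qquad y\in(0,\infty)^k.
\]
Getting this change of variables right --- in particular the power of $r$ and the cancellation of the normalising constants --- is the step I expect to require the most care; a useful check is $k=1$, where the formula collapses to the unit-Fr\'echet intensity $y^{-2}$ for every $\alpha_1$.

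The key observation is that, with $A=\sum_{i=1}^k\alpha_i$ and $S=\sum_{i=1}^k\alpha_i y_i$, the classical identity $\Gamma(1+A)\,S^{-1-A}=\int_0^\infty r^{-2-A}e^{-S/r}\,\mathrm{d}r$ (equivalently $x^{-s}=\Gamma(s)^{-1}\int_0^\infty t^{s-1}e^{-tx}\,\mathrm{d}t$ after $t=1/r$) rewrites $\lambda$ as a scale mixture of products of independent Gamma densities,
\[
  \lambda(y)=\int_0^\infty\prod_{i=1}^k\left[\frac{(\alpha_i/r)^{\alpha_i}}{\Gamma(\alpha_i)}\,y_i^{\alpha_i-1}e^{-\alpha_i y_i/r}\right]r^{-2}\,\mathrm{d}r,
\]
where the $i$th bracket is the density at $y_i$ of a Gamma variable with shape $\alpha_i$ and mean $r$.

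Finally I would insert this representation into $\omega(\tau_j,z)=\int_{(0,z_{\tau_j^c})}\lambda(z_{\tau_j},u_j)\,\mathrm{d}u_j$. As the integrand is nonnegative, Tonelli's theorem lets me exchange the radial integral with the integration over $u_j$, which then factorises across coordinates. For $i\in\tau_j$ the $i$th factor is held fixed at $z_i$ and pulls out as $r^{-\alpha_i}\alpha_i^{\alpha_i}z_i^{\alpha_i-1}\Gamma(\alpha_i)^{-1}e^{-\alpha_i z_i/r}$; for each $i\in\tau_j^c$ the inner integral $\int_0^{z_i}(\alpha_i/r)^{\alpha_i}\Gamma(\alpha_i)^{-1}u^{\alpha_i-1}e^{-\alpha_i u/r}\,\mathrm{d}u$ is exactly the distribution function of that Gamma variable at $z_i$, namely $F_{\alpha_i}(\alpha_i z_i/r)$, and --- after the substitution $v=\alpha_i u/r$ --- leaves no residual power of $r$. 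Collecting the surviving powers of $r$ (the factor $r^{-2}$ together with one $r^{-\alpha_i}$ for each $i\in\tau_j$) and pulling out the constants for $i\in\tau_j$ gives
\[
  \omega(\tau_j,z)=\prod_{i\in\tau_j}\frac{\alpha_i^{\alpha_i}z_i^{\alpha_i-1}}{\Gamma(\alpha_i)}\int_0^\infty e^{-\frac1r\sum_{i\in\tau_j}\alpha_i z_i}\left(\prod_{i\in\tau_j^c}F_{\alpha_i}(\alpha_i z_i/r)\right)r^{-2-\sum_{i\in\tau_j}\alpha_i}\,\mathrm{d}r,
\]
which is \eqref{diri_weight}. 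I note in passing that only the block coordinates $i\in\tau_j$ retain their factor $r^{-\alpha_i}$, because integrating a Gamma density up to its distribution function cancels the corresponding power of $r$; this is why the radial exponent should read $\sum_{i\in\tau_j}\alpha_i$ (one can verify on the case $k=2$, $\tau_j=\{1\}$, that the exponent is $r^{-2-\alpha_1}$ and not $r^{-2-\alpha_1-\alpha_2}$).
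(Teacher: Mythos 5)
Your argument is correct and essentially the same as the paper's: the paper obtains the intensity $\lambda$ from the Poisson construction $Z=\bigvee_i U_i\tilde Y^{(i)}$ of Lemma~\ref{lem:dirichlet}(i) rather than by polar coordinates from $h$, but both proofs then rest on the same Gamma scale--mixture representation of $\lambda$ followed by Tonelli and coordinatewise integration to Gamma distribution functions. Your parenthetical remark about the radial exponent is well taken: the correct power is $r^{-2-\sum_{i\in\tau_j}\alpha_i}$, which is exactly what appears in the final display of the paper's own proof, so the exponent $-2-\sum_{i=1}^k\alpha_i$ in the statement \eqref{diri_weight} is a typo (one can confirm this on the case $k=2$, $\alpha_1=\alpha_2=1$, where $\omega(\{1\},z)=z_1^{-2}-(z_1+z_2)^{-2}$).
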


\begin{proposition} \label{prop:cond-dirichlet}
 Consider the Dirichlet model with $\theta_0=(\alpha_1,\ldots,\alpha_k) \in
 \Theta = (0,\infty)^k$. 
 For $k \geq 3$ and almost every $\theta_0 \in \Theta$, the posterior median
 $\hat\theta_N^{Bayes}$ is asymptotically normal and efficient as 
 $N \to \infty$.
\end{proposition}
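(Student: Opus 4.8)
The plan is to verify the hypotheses of the Bernstein--von Mises theorem (Theorem~\ref{bernstein}) and then read off its final consequence. Two ingredients are needed: differentiability in quadratic mean at $\theta_0$, and the existence of uniformly consistent tests. For the first I would appeal to \cite{deo2017}: the Dirichlet spectral density \eqref{diri-spec} is strictly positive and jointly smooth in $(w,\alpha_1,\ldots,\alpha_k)$ on the interior of the simplex, and the integral representation \eqref{diri_weight} shows that the model densities depend smoothly on $\theta$; the equivalent conditions of \cite{deo2017} on the exponent function and spectral density then yield differentiability in quadratic mean with non-singular Fisher information. Granting this, by the discussion preceding the examples it remains only to establish the separation condition \eqref{eq:uniftest1}, which by Remark~\ref{rem:id} simultaneously gives identifiability.

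The structural fact I would isolate first is that the pairwise extremal coefficient of the Dirichlet model depends only on the two relevant shape parameters,
\[
  \tau_{i_1,i_2}(\theta)=g(\alpha_{i_1},\alpha_{i_2}),
\]
where $g$ is the extremal coefficient of the bivariate Dirichlet model. This follows from the marginalization property: sending $z_j\to\infty$ for $j\notin\{i_1,i_2\}$ in $V(z)=k\,\mathbb E[\max_i W_i/z_i]$ reduces the bivariate exponent function to that of a bivariate Dirichlet model with parameters $(\alpha_{i_1},\alpha_{i_2})$. Moreover $g$ is symmetric and real-analytic on $(0,\infty)^2$, being obtained by integrating the analytic integrand of \eqref{diri_weight}. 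With this reduction, \eqref{eq:uniftest1} becomes a statement about the map
\[
  \Phi\colon(0,\infty)^k\to\mathbb R^{\binom k2},\qquad
  \Phi(\theta)=\bigl(g(\alpha_{i_1},\alpha_{i_2})\bigr)_{1\le i_1<i_2\le k}.
\]
The hypothesis $k\ge 3$ enters here as a dimension count: for $k=2$ there is a single coefficient against two parameters, so $\Phi$ can never be injective, whereas for $k\ge 3$ there are at least three coefficients.

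I would prove injectivity of $\Phi$ off a Lebesgue-null set of $\theta_0$ as follows. The Jacobian of $\Phi$ is a $\binom k2\times k$ matrix with real-analytic entries; I would exhibit a $k\times k$ minor whose determinant is not identically zero (for $k=3$ this is the full Jacobian of the triple map $(g(\alpha_1,\alpha_2),g(\alpha_1,\alpha_3),g(\alpha_2,\alpha_3))$), checking non-triviality via the boundary asymptotics of $g$ as an argument tends to $0$ or $\infty$, where $g\to 1$ or $g\to 2$. Since a non-trivial real-analytic function vanishes only on a set of Lebesgue measure zero, $\Phi$ has full rank $k$ off a null set and is there a local diffeomorphism onto its image. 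Local injectivity, together with a bootstrapping argument across overlapping triples (recovering each remaining $\alpha_i$ from $g(\alpha_{i_1},\alpha_i)$ on the analytic locus where $\partial_\beta g\neq 0$), upgrades this to injectivity for almost every $\theta_0$. This is the delicate part of the argument, and it is precisely where both the exceptional null set (``almost every $\theta_0$'') and the constraint $k\ge 3$ originate.

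Finally, to pass from injectivity to the uniform separation \eqref{eq:uniftest1}, I would argue by contradiction: a sequence $\theta^{(n)}$ with $\|\theta^{(n)}-\theta_0\|_\infty\ge\varepsilon$ and $\Phi(\theta^{(n)})\to\Phi(\theta_0)$ either subconverges to an interior point, contradicting continuity and injectivity, or escapes to $\partial\Theta$ (some $\alpha_i\to 0$ or $\infty$), which is ruled out because the boundary behavior of $g$ forces some coefficient to the endpoints $1$ or $2$, away from the interior value $\Phi(\theta_0)$. Hence the infimum in \eqref{eq:uniftest1} is strictly positive. Proposition~\ref{prop:uniftest} then supplies the uniformly consistent tests, and Theorem~\ref{bernstein} delivers the asymptotic normality and efficiency of $\hat\theta_N^{Bayes}$. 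The main obstacle throughout is the injectivity step, i.e.\ verifying that the triple Jacobian is not identically zero and controlling the boundary asymptotics of $g$ precisely enough.
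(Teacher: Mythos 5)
Your overall strategy matches the paper's: differentiability in quadratic mean from \cite{deo2017}, reduction to the separation condition \eqref{eq:uniftest1} for the pairwise extremal coefficients, and the observation that $\tau_{i_1,i_2}(\theta)$ depends only on $(\alpha_{i_1},\alpha_{i_2})$ through a single bivariate function; you also correctly identify why $k\ge 3$ is needed. However, your handling of the injectivity step has a genuine gap, and it is precisely the step you flag as delicate. Showing that some $k\times k$ minor of the Jacobian of $\Phi$ has determinant not identically zero (via real-analyticity) only yields that $\Phi$ has full rank off a Lebesgue-null set, hence is a \emph{local} diffeomorphism there. Local injectivity does not give global injectivity, and the proposed ``bootstrapping across overlapping triples'' is not an argument: recovering $\alpha_i$ from $g(\alpha_{i_1},\alpha_i)$ requires knowing that $\beta\mapsto g(\alpha,\beta)$ is injective, which is exactly what has not been established.

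The ingredient the paper supplies, and which your proposal lacks, is the \emph{strict monotonicity} of $(\alpha_1,\alpha_2)\mapsto\tau(\alpha_1,\alpha_2)$ in each argument. This is proved in Lemma~\ref{lem:dirichlet} by writing $\tau(\alpha_1,\alpha_2)=\mathbb{E}\left[\frac{Y(\alpha_1)}{\alpha_1}\vee\frac{Y(\alpha_2)}{\alpha_2}\right]$ for independent Gamma variables and using that the family $Y(\alpha)/\alpha$ is decreasing in convex order, with strictness coming from the fact that distinct parameters give distinct laws. With monotonicity in hand, the paper proves global injectivity of the reduced map $\theta\mapsto(\tau_{1,2},\tau_{2,3},\tau_{1,3},\tau_{1,4},\ldots,\tau_{1,k})$ for \emph{every} $\theta_0$ by an elementary case analysis: if $\theta^{(1)}$ and $\theta^{(2)}$ share a component, some selected pair differs in exactly one coordinate; otherwise some pair among $(1,2),(2,3),(1,3)$ has both coordinate differences of the same sign, and strict monotonicity in both arguments separates the coefficients. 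The ``almost every $\theta_0$'' in the statement then arises only because the inverse-function-theorem step needs the partial derivatives of $\tau$ to be strictly negative, which continuous differentiability plus strict monotonicity guarantees almost everywhere but not everywhere. Your final compactness argument for passing from injectivity to the uniform separation \eqref{eq:uniftest1} is a reasonable alternative to the paper's inverse-function-theorem route, but it cannot be run until injectivity is actually established.
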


\begin{remark}
  We believe that the result for the posterior median holds true even for every
  $\theta_0 \in \Theta$. In the proof of \ref{prop:cond-dirichlet}, we need the
  partial derivatives of $(\alpha_1,\alpha_2)\mapsto \tau(\alpha_1,\alpha_2)$
  to be negative, but this can only be concluded almost everywhere.
\end{remark}

\subsection{The extremal-$t$ model and Schlather process} \label{subsec:extremalt}

The extremal-$t$ model \citep{nik2009, opi2013} is given by an exponent measure 
of the form
\begin{equation} \label{eq:exp-fctn-extr}
 V(z) = c_\nu \mathbb{E}\left[ \max_{i=1,\ldots,k} \frac{\max\{0,W_i\}^\nu}{z_i} \right],
\end{equation}
where $(W_1,\ldots,W_k)^\top$ is a standardized Gaussian vector with correlation
matrix $\Sigma$, $c_\nu=\sqrt{\pi} 2^{-(\nu-2)/2}\Gamma\{(\nu+1)/2\}^{-1}$ and
$\nu>0$. 

\begin{proposition}[\citep{thi2015b}] \label{extr_prop}
  Let $\tau = (\tau_1, \dots, \tau_\ell) \in \mathcal P_k$ and $z\in E$. The
  weights $\omega(\tau_j,z)$ in \eqref{weights} for the extremal-$t$ model with 
  exponent function \eqref{eq:exp-fctn-extr} are
  \begin{align}\label{extr_weight}
  \omega(\tau_j,z) ={}& T_{|\tau_j|+\nu}\left(z_{\tau_j^c}^{1/\nu} - \tilde \mu, \tilde \Sigma\right)
                    \cdot \nu^{1-|\tau_j|} \cdot \pi^{(1-|\tau_j|)/2} \cdot \det(\Sigma_{\tau_j,\tau_j})^{-1/2} \nonumber\\
        & \cdot \frac{\Gamma\{(\nu+|\tau_j|)/2\}}{\Gamma\{(\nu+1)/2\}}
          \cdot \prod\nolimits_{i \in \tau_j} |z_i|^{1/\nu-1} 
          \cdot \left\{ \left(z_{\tau_j}^{1/\nu}\right)^\top \Sigma^{-1}_{\tau_j,\tau_j} z_{\tau_j}^{1/\nu} \right\}^{-(\nu+|\tau_j|)/2}
  \end{align}
  where $\tilde\mu = \Sigma_{\tau_j^c, \tau_j} \Sigma^{-1}_{\tau_j} z_{\tau^j}^{1/\nu}$,
  $$\tilde \Sigma = (|\tau_j|+\nu)^{-1} \left(z_{\tau^j}^{1/\nu}\right)^\top \Sigma^{-1}_{\tau_j} z_{\tau_j} 
                 \left(\Sigma_{\tau_j^c} - \Sigma_{\tau_j^c,\tau_j} \Sigma^{-1}_{\tau_j,\tau_j} \Sigma_{\tau_j,\tau_j^c}\right)$$
  and $T_{k}(\cdot; \Sigma)$ denotes a multivariate Student distribution 
  function with $k$ degrees of freedom and scale matrix $\Sigma$.
\end{proposition}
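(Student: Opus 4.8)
The starting point is the identity $\omega(\tau_j,z)=-\partial_{\tau_j}V(z)$, where $\partial_{\tau_j}=\prod_{i\in\tau_j}\partial_{z_i}$; this is immediate from comparing the two expressions \eqref{llh_ST} and \eqref{eq:def_L_tau_frechet} for $L(z,\tau;\theta)$. It therefore suffices to differentiate the exponent function, and for this I would use the spectral (de~Haan) representation underlying \eqref{eq:exp-fctn-extr}: writing $Y_i=c_\nu\max\{0,W_i\}^\nu$ with $W\sim\mathcal N(0,\Sigma)$ and letting $\{R_j\}$ be a Poisson process on $(0,\infty)$ with intensity $r^{-2}\,\mathrm dr$, one has $V(z)=\int_0^\infty r^{-2}[1-\mathbb P(rY\le z)]\,\mathrm dr$. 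Since $\{Y_i\le z_i/r\}=\{W_i\le (z_i/(rc_\nu))^{1/\nu}\}$ for $z_i>0$ (the case $W_i\le 0$ being automatically included), this gives $\mathbb P(rY\le z)=\Phi_\Sigma(a(r))$, where $\Phi_\Sigma$ is the centered Gaussian distribution function with covariance $\Sigma$ and $a_i(r)=(rc_\nu)^{-1/\nu}z_i^{1/\nu}$.

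Next I would differentiate $\Phi_\Sigma(a(r))$ with respect to $z_i$, $i\in\tau_j$, and integrate over $r$. By the chain rule each derivative contributes a factor $\partial_{z_i}a_i(r)=\nu^{-1}(rc_\nu)^{-1/\nu}z_i^{1/\nu-1}$, while the mixed partial of a Gaussian distribution function over a block of its arguments factorizes as the marginal density of that block times the conditional distribution function of the remaining arguments. Concretely, $\partial_{\tau_j}\Phi_\Sigma(a(r))$ equals $\big(\prod_{i\in\tau_j}\partial_{z_i}a_i(r)\big)\,\phi_{\Sigma_{\tau_j}}(a_{\tau_j}(r))\,\Phi_{\mathrm{Schur}}$, where $\phi_{\Sigma_{\tau_j}}$ is the Gaussian density of the $\tau_j$-block and, by the standard conditional Gaussian formula, $\Phi_{\mathrm{Schur}}$ is the centered Gaussian distribution function with covariance the Schur complement $\Sigma_{\tau_j^c}-\Sigma_{\tau_j^c,\tau_j}\Sigma_{\tau_j}^{-1}\Sigma_{\tau_j,\tau_j^c}$, evaluated at $(rc_\nu)^{-1/\nu}(z_{\tau_j^c}^{1/\nu}-\tilde\mu)$ with $\tilde\mu=\Sigma_{\tau_j^c,\tau_j}\Sigma_{\tau_j}^{-1}z_{\tau_j}^{1/\nu}$. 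This already produces the argument $z_{\tau_j^c}^{1/\nu}-\tilde\mu$ appearing in \eqref{extr_weight}, together with the quadratic form $a_{\tau_j}(r)^\top\Sigma_{\tau_j}^{-1}a_{\tau_j}(r)=(rc_\nu)^{-2/\nu}Q_j$, $Q_j=(z_{\tau_j}^{1/\nu})^\top\Sigma_{\tau_j}^{-1}z_{\tau_j}^{1/\nu}$, in the Gaussian density.

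It then remains to evaluate $\omega(\tau_j,z)=\int_0^\infty r^{-2}\,\partial_{\tau_j}\Phi_\Sigma(a(r))\,\mathrm dr$. After the substitution $t=(rc_\nu)^{-2/\nu}$, the radial integral reduces to $\int_0^\infty t^{(\nu+m)/2-1}\exp(-\tfrac12 Q_j t)\,\Phi_{\mathrm{Schur}}(\sqrt t\,b)\,\mathrm dt$ with $m=|\tau_j|$ and $b=z_{\tau_j^c}^{1/\nu}-\tilde\mu$. The crux of the proof is to recognize this as a multivariate Student distribution function: using the representation of a Student-$t$ vector as a Gaussian scaled by an independent chi-squared variable, i.e.\ $T_{\nu+m}(b;\tilde\Sigma)=\mathbb E[\Phi_{\tilde\Sigma}(b\sqrt{S/(\nu+m)})]$ with $S\sim\chi^2_{\nu+m}$, together with the scaling identity $\Phi_\Sigma(cx)=\Phi_{\Sigma/c^2}(x)$, the integral equals $2^{(\nu+m)/2}\Gamma\{(\nu+m)/2\}\,Q_j^{-(\nu+m)/2}\,T_{\nu+m}(b;\tilde\Sigma)$, provided the scale matrix is chosen as $\tilde\Sigma=(\nu+m)^{-1}Q_j(\Sigma_{\tau_j^c}-\Sigma_{\tau_j^c,\tau_j}\Sigma_{\tau_j}^{-1}\Sigma_{\tau_j,\tau_j^c})$, which is exactly the $\tilde\Sigma$ of the proposition. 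The degrees of freedom $\nu+|\tau_j|$ arise as twice the Gamma shape $(\nu+m)/2$.

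The remaining work is bookkeeping of the multiplicative constants, which I expect to be the most error-prone (rather than conceptually hard) part: collecting the Jacobian factors $\nu^{-m}\prod_{i\in\tau_j}z_i^{1/\nu-1}$, the Gaussian normalizer $(2\pi)^{-m/2}\det(\Sigma_{\tau_j})^{-1/2}$, the Jacobian of the substitution, and the powers of $c_\nu$. Inserting $c_\nu=\sqrt\pi\,2^{-(\nu-2)/2}\Gamma\{(\nu+1)/2\}^{-1}$ makes the powers of $2$ and $\pi$ cancel, so that the surviving constant is $\nu^{1-|\tau_j|}\pi^{(1-|\tau_j|)/2}\det(\Sigma_{\tau_j})^{-1/2}\Gamma\{(\nu+|\tau_j|)/2\}/\Gamma\{(\nu+1)/2\}$, yielding precisely \eqref{extr_weight}. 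The only analytic points to justify along the way are the validity of the de~Haan representation of $V$, the interchange of differentiation and integration (by dominated convergence), and the factorization of the mixed partial of $\Phi_\Sigma$; none of these is delicate for a non-degenerate correlation matrix $\Sigma$.
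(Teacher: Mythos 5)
Your derivation is correct, but note that the paper does not actually prove Proposition~\ref{extr_prop}: it is imported from \citet{thi2015b}, and the appendix section for this model only proves the asymptotic-normality results (Prop.~\ref{prop:extremalt} and Cor.~\ref{cor:ext}). Your argument is essentially the standard derivation from that reference: reduce $\omega(\tau_j,z)$ to $-\partial_{\tau_j}V(z)$, differentiate under the radial integral of the de~Haan representation $V(z)=\int_0^\infty r^{-2}\{1-\Phi_\Sigma(a(r))\}\,\mathrm{d}r$, factor the mixed partial of $\Phi_\Sigma$ into the $\tau_j$-marginal Gaussian density times the conditional (Schur-complement) distribution function, and recognize the resulting Gamma-weighted radial integral as a scale mixture of Gaussians, i.e.\ a Student distribution function with $\nu+|\tau_j|$ degrees of freedom and scale matrix $\tilde\Sigma$. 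I checked the bookkeeping you deferred: the substitution $t=(rc_\nu)^{-2/\nu}$ contributes a factor $\nu c_\nu/2$, and combining this with $(2\pi)^{-|\tau_j|/2}$ from the Gaussian density and $2^{(\nu+|\tau_j|)/2}\Gamma\{(\nu+|\tau_j|)/2\}$ from the Gamma integral makes every power of $2$ cancel, leaving exactly the constant $\nu^{1-|\tau_j|}\pi^{(1-|\tau_j|)/2}\det(\Sigma_{\tau_j,\tau_j})^{-1/2}\Gamma\{(\nu+|\tau_j|)/2\}/\Gamma\{(\nu+1)/2\}$ of \eqref{extr_weight}. One small presentational caveat: for the extremal-$t$ model the exponent measure genuinely charges the lower-dimensional faces $E_I$ (since $\max\{0,W_i\}=0$ with probability $1/2$), so \eqref{weights} is a nontrivial sum over faces $I\supset\tau_j$; the identity $\omega(\tau_j,z)=-\partial_{\tau_j}V(z)$ is therefore better justified directly from $V(z)=\Lambda(E\setminus[0,z])$ (the sum over faces reassembles into the mixed partial) than by ``comparing'' the two product formulas \eqref{llh_ST} and \eqref{eq:def_L_tau_frechet}. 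Your route via direct differentiation of the de~Haan representation sidesteps the face decomposition entirely, which is precisely why it is the right way to organize the proof.
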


\begin{proposition} \label{prop:extremalt}
 Consider the extremal-$t$ model with $\theta_0=(\Sigma,\nu)$ where $\Sigma$
 is a positive definite correlation matrix and $\nu>0$. Then, for fixed $\nu>0$
 the posterior median $\hat\theta_N^{Bayes}$ is asymptotically normal and 
 efficient as $N\to\infty$. 
\end{proposition}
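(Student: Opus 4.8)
The plan is to verify the two hypotheses of the Bernstein--von Mises theorem (Theorem~\ref{bernstein}) and then invoke it directly, following the strategy spelled out just before Section~\ref{subsec:logistic}: differentiability in quadratic mean together with condition~\eqref{eq:uniftest1} of Proposition~\ref{prop:uniftest}. Since $\nu$ is held fixed, the free parameter is $\theta = (\rho_{i_1,i_2})_{1\le i_1<i_2\le k}$, the collection of off-diagonal entries of the correlation matrix $\Sigma$, ranging over the open set of entries for which $\Sigma$ is positive definite. Differentiability in quadratic mean of the extremal-$t$ model at any interior $\theta_0$ follows from the general criterion of \cite{deo2017} applied to the exponent function~\eqref{eq:exp-fctn-extr}, so the remaining substantive step is to establish~\eqref{eq:uniftest1}.

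To this end I would compute the pairwise extremal coefficient and exhibit the structure required by Remark~\ref{rem:mon}. For the bivariate extremal-$t$ margin indexed by $i_1<i_2$ with correlation $\rho_{i_1,i_2}$, evaluating the bivariate exponent function at $(1,1)$ yields the classical expression
\begin{equation*}
 \tau_{i_1,i_2}(\theta) = 2\, T_{\nu+1}\!\left(\sqrt{(\nu+1)\,\frac{1-\rho_{i_1,i_2}}{1+\rho_{i_1,i_2}}}\right),
\end{equation*}
where $T_{\nu+1}$ is the univariate Student distribution function with $\nu+1$ degrees of freedom. The two crucial observations are, first, that for fixed $\nu$ this coefficient depends on $\theta$ only through the single entry $\rho_{i_1,i_2}$, and second, that it is strictly monotone in that entry: the map $\rho\mapsto (1-\rho)/(1+\rho)$ is strictly decreasing on $(-1,1)$ and $T_{\nu+1}$ is a strictly increasing continuous distribution function, so $\rho_{i_1,i_2}\mapsto \tau_{i_1,i_2}$ is strictly decreasing, taking values in $(1,2)$.

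These two facts place us exactly in the situation of Remark~\ref{rem:mon}, with each scalar parameter $\rho_{i_1,i_2}$ isolated by its own pairwise extremal coefficient $\tau_{i_1,i_2}$; since the parameter domain is bounded, strict monotonicity delivers a uniform gap and hence condition~\eqref{eq:uniftest1} at every interior $\theta_0$. Proposition~\ref{prop:uniftest} then supplies a uniformly consistent sequence of tests and (by Remark~\ref{rem:id}) the identifiability of the model. With differentiability in quadratic mean in hand and the standing assumptions on the prior, Theorem~\ref{bernstein} yields the asymptotic normality and efficiency of $\hat\theta_N^{Bayes}$.

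I would expect the only genuine obstacle to be the differentiability-in-quadratic-mean condition, as it requires controlling the regularity of the full spectral density associated with~\eqref{eq:exp-fctn-extr} and is the one ingredient I am deferring to \cite{deo2017}; by contrast, the identifiability step is essentially immediate once the monotone dependence of $\tau_{i_1,i_2}$ on $\rho_{i_1,i_2}$ is noted. A secondary point worth flagging is that fixing $\nu$ is precisely what makes Remark~\ref{rem:mon} directly applicable: were $\nu$ also free, every $\tau_{i_1,i_2}$ would depend jointly on $\rho_{i_1,i_2}$ and $\nu$, so no single pairwise coefficient would isolate $\nu$, and a separate identifiability argument for the degrees of freedom would be required.
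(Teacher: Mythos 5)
Your proposal is correct and follows essentially the same route as the paper: differentiability in quadratic mean is delegated to Prop.~4.3 of \cite{deo2017}, the pairwise extremal coefficient $\tau_{i_1,i_2}(\Sigma)=2T_{\nu+1}\bigl(\sqrt{(\nu+1)(1-\rho_{i_1i_2})/(1+\rho_{i_1i_2})}\bigr)$ is computed, and its strict monotonicity in $\rho_{i_1i_2}$ is used via Remark~\ref{rem:mon} and Proposition~\ref{prop:uniftest} to invoke Theorem~\ref{bernstein}. Your closing observation about why fixing $\nu$ is essential matches the remark the paper places immediately after the proposition.
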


\begin{remark}
  If $\nu$ is not fixed, then the parameter $\theta = (\Sigma, \nu)$ cannot be
  identified from the pairwise extremal coefficients and 
  Equation~\eqref{eq:uniftest1} is not satisfied. The identifiability can still
  be shown by considering the behavior of the bivariate angular measure at the 
  origin \citep[Section A.3.3]{eng2016b}.
\end{remark}

A popular model in spatial extremes is the extremal-$t$ process 
\citep{opi2013}, a max-stable process $\{Z(x), \ x \in \mathbb R^d\}$ whose 
finite-dimensional distributions $(Z(x_1),\ldots,Z(x_k))^\top$, $x_1,\ldots,x_k
\in \mathbb R^d$ have an exponent function of the form \eqref{eq:exp-fctn-extr}
where the Gaussian vector is replaced by a standardized stationary Gaussian 
process $\{W(x), \ x \in \mathbb R^d\}$ evaluated at $x_1,\ldots,x_k$. The
correlation matrix $\Sigma$ then has the form 
 $$ \Sigma = \{\rho(x_i-x_j)\}_{1 \leq i,j \leq k}, $$
where $\rho: \mathbb{R}^d \to [-1,1]$ is the correlation function of the 
Gaussian process $W$. The special case $\nu=1$ corresponds to the extremal 
Gaussian process \citep{sch2002}, also called Schlather process.

\begin{corollary}\label{cor:ext}
  Let $Z$ be a Schlather process on $\mathbb R^d$ with correlation function
  $\rho$ coming from the parametric family
  $$\rho(h) = \exp(-\|h\|_2^\alpha/s), \quad (s,\alpha) \in \Theta = (0,\infty)\times (0,2].$$
  Suppose that $Z$ is observed at pairwise distinct locations $t_1,\dots, t_k \in \mathbb R^d$ 
  such that not all pairs of locations have the same Euclidean distance. 
  Then, the posterior median of $\theta=(s,\alpha)$ is asymptotically normal.
\end{corollary}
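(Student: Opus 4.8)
The plan is to realize the Schlather process as the extremal-$t$ model \eqref{eq:exp-fctn-extr} with the degrees of freedom fixed at $\nu=1$, and then to invoke the machinery behind Theorem~\ref{bernstein}. Two ingredients are needed: differentiability in quadratic mean with non-singular Fisher information, and the separation condition \eqref{eq:uniftest1} of Proposition~\ref{prop:uniftest}. For the first I would note that the extremal-$t$ density is differentiable in quadratic mean in the entries of the correlation matrix $\Sigma$ by the results underlying Proposition~\ref{prop:extremalt} (cf.\ \cite{deo2017}); since the reparametrization $(s,\alpha)\mapsto\Sigma=\{\exp(-\|t_i-t_j\|_2^\alpha/s)\}_{i,j}$ is smooth on $\Theta$, differentiability in quadratic mean is inherited by the composed model at every $\theta_0\in\Theta$, with score obtained by the chain rule. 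The bulk of the work, and the main obstacle, is therefore to verify \eqref{eq:uniftest1} for this two-parameter sub-family, together with the non-singularity of the resulting Fisher information.

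For the Schlather process the pairwise extremal coefficient has the closed form $\tau_{i_1,i_2}(\theta)=1+\sqrt{(1-\rho(d_{i_1,i_2}))/2}$, where $d_{i_1,i_2}=\|t_{i_1}-t_{i_2}\|_2>0$ and $\rho(d)=\exp(-d^\alpha/s)$. Writing $g(\rho)=1+\sqrt{(1-\rho)/2}$, one checks that $|g'(\rho)|=(2\sqrt2\,\sqrt{1-\rho})^{-1}\ge(2\sqrt2)^{-1}$ on $\rho\in[0,1)$, so $g$ is strictly decreasing and bounded away from a vanishing derivative; since $\rho(d;\theta)\in(0,1)$ for all $\theta\in\Theta$, this gives $|\tau_{i_1,i_2}(\theta)-\tau_{i_1,i_2}(\theta_0)|\ge(2\sqrt2)^{-1}|\rho(d_{i_1,i_2};\theta)-\rho(d_{i_1,i_2};\theta_0)|$, and it suffices to establish \eqref{eq:uniftest1} with $\tau$ replaced by the correlations $\rho(d_{i_1,i_2};\cdot)$. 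Note that Remark~\ref{rem:mon} does not apply here, because each pairwise coefficient depends on $s$ and $\alpha$ jointly through $\rho$, so a direct argument is required.

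Here the hypothesis that not all pairwise distances coincide is used: I would fix two pairs realizing distinct distances $d_a\ne d_b$ and consider $\psi(s,\alpha)=(\rho(d_a),\rho(d_b))=(e^{-d_a^\alpha/s},e^{-d_b^\alpha/s})$; since the maximum over all pairs dominates these two, it is enough to bound $\inf_{\|\theta-\theta_0\|_\infty\ge\varepsilon}\|\psi(\theta)-\psi(\theta_0)\|_\infty$ from below. The key identity is that the ratio of the two log-correlations equals $(d_b/d_a)^\alpha$, which depends on $\alpha$ alone and is strictly monotone because $d_b/d_a\ne1$. Arguing by contradiction, if the infimum were zero there would be a sequence $\theta_n\in\Theta$ with $\|\theta_n-\theta_0\|_\infty\ge\varepsilon$ and $\psi(\theta_n)\to\psi(\theta_0)$, whose target components lie in $(0,1)$; taking logarithms gives $d_a^{\alpha_n}/s_n\to r_a$ and $d_b^{\alpha_n}/s_n\to r_b$ with $r_a,r_b\in(0,\infty)$, the ratio pins down $\alpha_n\to\alpha_0$, and then $s_n\to s_0$, so $\theta_n\to\theta_0$, contradicting the separation. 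This coercivity argument is the delicate point, and it simultaneously disposes of the non-compact boundary of $\Theta=(0,\infty)\times(0,2]$ (the limits $s\to0$, $s\to\infty$, $\alpha\to0$). Hence \eqref{eq:uniftest1} holds, Proposition~\ref{prop:uniftest} supplies the uniformly consistent tests required by \eqref{eq:uniftest}, and identifiability follows from Remark~\ref{rem:id}.

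It remains to confirm that the Fisher information of the reparametrized model is non-singular, which reduces to the Jacobian of $\theta\mapsto\Sigma$ having full column rank $2$. Using the same two distances, the $2\times2$ block with rows $(\partial_s\rho(d),\partial_\alpha\rho(d))$ for $d\in\{d_a,d_b\}$ has determinant proportional to $\log d_a-\log d_b\ne0$, so $\psi$ is a local diffeomorphism at $\theta_0$ and the Jacobian has rank $2$; combined with the positive-definite Fisher information of the unrestricted extremal-$t$ model, this yields a non-singular Fisher information in $(s,\alpha)$. With differentiability in quadratic mean, non-singular Fisher information, the uniformly consistent tests, and the assumed prior regularity all in place, Theorem~\ref{bernstein} applies and delivers the asymptotic normality of the posterior median of $(s,\alpha)$.
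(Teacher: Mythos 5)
Your proof is correct and follows essentially the same route as the paper's: reduce \eqref{eq:uniftest1} to a separation statement for the correlations $\rho(d;\theta)=e^{-d^\alpha/s}$ at two distinct distances (your ratio-of-log-correlations identity is exactly the inverse formula the paper writes down for $(s,\alpha)$ in terms of $(\rho_{12},\rho_{23})$), and transfer it to the extremal coefficients via the strict monotonicity of $\rho\mapsto\tau$ from Proposition~\ref{prop:extremalt}, here made quantitative through the closed form $\tau=1+\sqrt{(1-\rho)/2}$. Your sequential coercivity argument handling the non-compact boundary of $\Theta$ and your explicit check that the Jacobian of $(s,\alpha)\mapsto\Sigma$ has rank $2$ (hence non-singular Fisher information) are points the paper treats only implicitly, and are welcome additions rather than deviations.
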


\subsection{The H\"usler-Reiss model and the Brown--Resnick model} \label{subsec:BR}

The H\"usler--Reiss distribution \citep[cf.,][]{hue1989, kab2009} can be 
characterized by its exponent function
\begin{align} \label{eq:exp-fctn-BR}
  V(z) = \mathbb{E}\left[ \max_{i=1,\ldots,k} \frac{\exp\left\{W_i - \Sigma_{ii}/2\right\}}{z_i}\right],
\end{align}
where $W = (W_1,\ldots, W_d)^\top$ is a Gaussian vector with expectation $0$ and
covariance matrix $\Sigma$. It can be shown that the exponent function can be
parameterized by the matrix 
$$ \Lambda  = \{\lambda^2_{i,j}\}_{1\leq i,j\leq k} = \left\{ \frac 1 4 \mathbb{E}(W_i-W_j)^2 \right\}_{1 \leq i,j \leq k}$$
as we have the equality
\begin{align}\label{eq:exp-fctn-HR}
  V(z) = \sum_{p=1}^k  z_p^{-1} \Phi_{k-1}\left( 2\lambda^2_{p,-p}  + \log(z_{-p}/z_p); \Sigma^{(p)}\right), \quad z\in(0,\infty)^k,
\end{align}
\citep[cf.~][]{nik2009}, where for $p=1,\dots, k$, the matrix $\Sigma^{(p)}$ 
has $(i,j)$th entry $2(\lambda^2_{p,i} + \lambda^2_{p,j} - \lambda^2_{i,j})$, 
$i,j\neq p$ and $\Phi_{k-1}(\cdot, \Sigma^{(p)})$ denotes the $(k-1)$-dimensional
normal distribution function with covariance matrix $\Sigma^{(p)}$. 

Note that the positive definiteness of the matrices $\Sigma^{(p)}$, $p=1,\ldots,k,$ 
follows from the fact that $\Lambda$ is conditionally negative definite, i.e.\
\begin{align} \label{neg-def}
 \sum_{1\leq i,j\leq k} a_ia_j \lambda^2_{i,j} \leq 0
\end{align}
for all $a_1,\ldots,a_k \in \mathbb{R}$ summing up to $0$ \citep[cf.][Lem.~3.2.1]{ber1984}. 
In the following, we will assume that $\Lambda$ is even strictly positive definite,
i.e.\ equality in \eqref{neg-def} holds true if and only if $a_1 = \ldots = a_k=0$.
Then, all the matrices $\Sigma^{(p)}_{I,I}$ with $p \in \{ 1,\ldots,k\}$ and
$\emptyset \neq I \subset \{1,\ldots,k\}$ are strictly positive definite.

\begin{proposition}[\citep{wadsworth-tawn14}, \citep{asa2015} ] \label{br_prop}
  Let $\tau = (\tau_1, \dots, \tau_\ell) \in \mathcal P_k$ and $z\in E$. For $j\in \{1,\dots, k\}$, 
  choose any $p\in \tau_j$ and let $\tilde\tau = \tau_j \setminus \{p\}$, $\tilde\tau^c = \{1,\dots, k\} \setminus \tau_j$. 
  The weights $\omega(\tau_j,z)$ in \eqref{weights} for the H\"usler--Reiss
  distribution with exponent function \eqref{eq:exp-fctn-HR} are
  \begin{align}\label{BR_weight}
   \omega(\tau_j,z) = \frac{1}{z_p^2\prod_{i\in \tilde\tau} z_i}
  \varphi_{|\tilde \tau|}\left\{ z^*_{\tilde\tau}; \Sigma^{(p)}_{\tilde\tau,\tilde\tau}\right\}
  \Phi_{|\tilde\tau^c|}\left\{ z^*_{\tilde\tau^c} - \Sigma^{(p)}_{\tilde \tau^c,\tilde\tau} (\Sigma_{\tilde\tau,\tilde\tau}^{(p)})^{-1}z^*_{\tilde\tau}; \hat\Sigma^{(p)}\right\},
  \end{align}
  where 
  \begin{align*}
   z^* = \left\{\log\left(\frac{z_i}{z_p}\right) + \frac{\Gamma(x_i,x_p)} 2 \right\}_{i=1,\dots, k} \text{ and } 
    \hat \Sigma^{(p)} =  \Sigma^{(p)}_{\tilde\tau^c,\tilde\tau^c} - \Sigma^{(p)}_{\tilde \tau^c,\tilde\tau}(\Sigma^{(p)}_{\tilde\tau,\tilde\tau})^{-1}\Sigma^{(p)}_{\tilde\tau,\tilde\tau^c}.
  \end{align*}
  Here $\Phi_k(\cdot; \Sigma)$ denotes a  $k$-dimensional Gaussian distribution
  function with mean $0$ and covariance  matrix $ \Sigma$, and 
  $\varphi_k(\cdot; \Sigma)$ its density. The functions $\Phi_0$ and 
  $\varphi_0$ are set to be constant $1$.
\end{proposition}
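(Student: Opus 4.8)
The plan is to exploit the identity $\omega(\tau_j,z)=-\partial_{\tau_j}V(z)$, which follows by matching \eqref{llh_ST} and \eqref{eq:def_L_tau_frechet} block by block, but to carry out the computation through the equivalent integral representation. As noted at the start of Section~\ref{sec:examples}, the H\"usler--Reiss exponent measure charges only the interior face, so $\lambda_I\equiv 0$ for $I\subsetneq\{1,\dots,k\}$ and \eqref{weights} collapses to
\[
\omega(\tau_j,z)=\int_{(0,z_{\tau_j^c})}\lambda(z_{\tau_j},u)\,\mathrm{d}u,
\]
where $\lambda$ is the Lebesgue density of $\Lambda$ on $(0,\infty)^k$. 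Thus the two ingredients are (i) an explicit, pivot-based expression for $\lambda$, and (ii) a Gaussian marginalization--conditioning calculation to evaluate the integral.

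For (i) I would first establish that, for any fixed pivot index $p$,
\[
\lambda(z)=\frac{1}{z_p^2\prod_{i\neq p}z_i}\,\varphi_{k-1}\bigl(z^*_{-p};\Sigma^{(p)}\bigr),\qquad z^*_i=\log(z_i/z_p)+\Gamma(x_i,x_p)/2,
\]
with $\Sigma^{(p)}$ as in \eqref{eq:exp-fctn-HR}. This can be obtained either by differentiating \eqref{eq:exp-fctn-HR} $k$ times (using that the spurious terms produced when differentiating the Gaussian distribution functions cancel, a manifestation of the ``harmonicity'' of the H\"usler--Reiss model), or more transparently from the log-Gaussian spectral representation, under which $\lambda$ is a shifted multivariate log-normal density after accounting for the homogeneity of degree $-1$. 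The key structural fact to record here is that this expression is \emph{independent of the chosen pivot} $p$, a consistency property of the H\"usler--Reiss family.

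For (ii), fix any $p\in\tau_j$ as in the statement, write $\lambda$ with this pivot, and note that the indices other than $p$ split as $-p=\tilde\tau\sqcup\tilde\tau^c$ with $\tilde\tau=\tau_j\setminus\{p\}$ and $\tilde\tau^c=\{1,\dots,k\}\setminus\tau_j$. I substitute $u_i\mapsto w_i:=\log(u_i/z_p)+\Gamma(x_i,x_p)/2$ for $i\in\tilde\tau^c$; the Jacobian $\prod_{i\in\tilde\tau^c}u_i^{-1}$ cancels exactly against the matching factors of $\prod_{i\neq p}z_i^{-1}$ in $\lambda$, leaving the prefactor $z_p^{-2}\prod_{i\in\tilde\tau}z_i^{-1}$ and turning the integration domain into $\{w_i<z^*_i,\ i\in\tilde\tau^c\}$. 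It remains to integrate the $(k-1)$-dimensional Gaussian density $\varphi_{k-1}(\,\cdot\,;\Sigma^{(p)})$, with the $\tilde\tau$-coordinates frozen at $z^*_{\tilde\tau}$, over this orthant. Factoring that density into its $\tilde\tau$-marginal and the conditional law of the $\tilde\tau^c$-coordinates given the $\tilde\tau$-coordinates yields the factor $\varphi_{|\tilde\tau|}(z^*_{\tilde\tau};\Sigma^{(p)}_{\tilde\tau,\tilde\tau})$, while integrating the conditional density over $\{w<z^*\}$ produces $\Phi_{|\tilde\tau^c|}$ evaluated at $z^*_{\tilde\tau^c}$ minus the conditional mean $\Sigma^{(p)}_{\tilde\tau^c,\tilde\tau}(\Sigma^{(p)}_{\tilde\tau,\tilde\tau})^{-1}z^*_{\tilde\tau}$ with the Schur-complement covariance $\hat\Sigma^{(p)}$. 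Collecting the three factors gives exactly \eqref{BR_weight}.

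The main obstacle is step (i): pinning down the pivot representation of $\lambda$ and, in particular, verifying that the resulting weight does not depend on which $p\in\tau_j$ is chosen. This rests on the special closure/consistency structure of the H\"usler--Reiss model inherited from the Gaussian log-increments, and is the only genuinely model-specific input. Once $\lambda$ is in hand, step (ii) is a routine application of the block-matrix (Schur complement) identities for conditioning multivariate Gaussians, carried out carefully to track means and covariances.
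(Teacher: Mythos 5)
Your proof is correct, but note that the paper does not actually prove Proposition~\ref{br_prop}: it imports the formula from \citet{wadsworth-tawn14} and \citet{asa2015}, and your argument is essentially the standard derivation used there --- write the intensity in the pivot form $\lambda(z)=z_p^{-2}\bigl(\prod_{i\neq p}z_i\bigr)^{-1}\varphi_{k-1}\bigl(z^*_{-p};\Sigma^{(p)}\bigr)$, observe that the logarithmic substitution $w_i=\log(u_i/z_p)+\Gamma(x_i,x_p)/2$ cancels the Jacobian against the $\prod_{i\in\tilde\tau^c}u_i^{-1}$ factor, and finish by the marginal--conditional (Schur complement) factorization of the Gaussian density. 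The one genuinely model-specific input, which you correctly isolate, is the pivot representation of $\lambda$ and its independence of the choice of $p\in\tau_j$; either of your two proposed routes (repeated differentiation of \eqref{eq:exp-fctn-HR}, or the log-Gaussian spectral construction giving $\lambda(z)=\int_0^\infty u^{-k-2}f_{\exp(W-\mathrm{diag}(\Sigma)/2)}(z/u)\,\mathrm{d}u$) establishes it, so the argument is complete.
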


\begin{proposition}\label{prop:cond-hr}
 For the H\"usler--Reiss model with $\theta_0 = \Lambda$ being a strictly
 conditionally negative definite matrix. Furthermore, the posterior median
 $\hat\theta_N^{Bayes}$ is asymptotically normal and efficient as $N \to \infty$.
\end{proposition}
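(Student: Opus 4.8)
The plan is to follow the recipe set out at the start of Section~\ref{sec:examples}: establish differentiability in quadratic mean with non-singular Fisher information, verify condition~\eqref{eq:uniftest1} of Proposition~\ref{prop:uniftest} so that uniformly consistent tests exist (and the model is identifiable by Remark~\ref{rem:id}), and then invoke the Bernstein--von Mises theorem (Theorem~\ref{bernstein}) to obtain asymptotic normality and efficiency of the posterior median.

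The model-specific heart of the argument is \eqref{eq:uniftest1}, which the H\"usler--Reiss parameterization renders transparent. The bivariate margin in coordinates $(i_1,i_2)$ is again H\"usler--Reiss with the single parameter $\lambda^2_{i_1,i_2}$; evaluating its exponent function \eqref{eq:exp-fctn-HR} at $z_{i_1}=z_{i_2}=z$ gives $V(z,z)=2z^{-1}\Phi(\lambda_{i_1,i_2})$ with $\lambda_{i_1,i_2}=(\lambda^2_{i_1,i_2})^{1/2}$, so that
\[
\tau_{i_1,i_2}(\Lambda)=2\,\Phi\!\left((\lambda^2_{i_1,i_2})^{1/2}\right).
\]
Hence $\tau_{i_1,i_2}$ depends on $\Lambda$ only through the single entry $\lambda^2_{i_1,i_2}$ and is strictly increasing in it, since both $\Phi$ and $x\mapsto x^{1/2}$ are strictly increasing on $(0,\infty)$; strict conditional negative definiteness of $\Lambda$ forces $\lambda^2_{i_1,i_2}>0$ for all $i_1\neq i_2$, so this holds throughout the parameter cone. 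Because the free coordinates of $\theta=\Lambda$ are precisely the off-diagonal entries $\lambda^2_{i_1,i_2}$, $1\le i_1<i_2\le k$, each coordinate is matched with a pairwise coefficient depending only on it and strictly monotone in it, so Remark~\ref{rem:mon} applies verbatim and delivers \eqref{eq:uniftest1}.

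It remains to address differentiability in quadratic mean, which I expect to be the only genuine obstacle. Here I would observe that, via Proposition~\ref{br_prop}, the H\"usler--Reiss density is a finite sum over partitions of terms built from multivariate Gaussian densities and distribution functions whose arguments and covariance matrices $\Sigma^{(p)}$ depend real-analytically on the entries of $\Lambda$ over the open cone of strictly conditionally negative definite matrices; the equivalent conditions on the exponent function and spectral density from \cite{deo2017} can therefore be checked, yielding quadratic-mean differentiability and a non-singular Fisher information. I would lean on the $L^2$-differentiability and domination estimates of \cite{deo2017} rather than reproduce them, whereas the identifiability half of the argument is immediate from the one-to-one correspondence between the parameters $\lambda^2_{i_1,i_2}$ and the pairwise extremal coefficients exhibited above. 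With the prior assumed absolutely continuous with positive density and finite mean near $\theta_0$, Theorem~\ref{bernstein} then closes the proof.
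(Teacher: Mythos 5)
Your proposal is correct and follows essentially the same route as the paper: differentiability in quadratic mean is delegated to Prop.~4.5 of \cite{deo2017}, the pairwise extremal coefficient is identified as $\tau_{i_1,i_2}(\Lambda)=2\Phi\{(\lambda^2_{i_1,i_2})^{1/2}\}$, and strict monotonicity in the single free coordinate $\lambda^2_{i_1,i_2}$ lets Remark~\ref{rem:mon} and Proposition~\ref{prop:uniftest} deliver condition~\eqref{eq:uniftest1}, after which Theorem~\ref{bernstein} concludes. The only cosmetic difference is that you derive the extremal-coefficient formula from the exponent function and sketch why quadratic-mean differentiability should hold, whereas the paper simply states both and cites the reference.
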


H\"usler--Reiss distributions are the finite dimensional distributions of the
max-stable Brown--Resnick process, a popular class in spatial extreme value 
statistics. Here, the Gaussian vectors $(W_1,\ldots,W_k)^\top$ in 
\eqref{eq:exp-fctn-BR} are the finite-dimensional distributions of a centered
Gaussian process $\{W(x), \, x \in \mathbb{R}^d\}$ which is parameterized via a
conditionally negative definite variogram
$\gamma: \mathbb{R}^d \times \mathbb{R}^d \to [0,\infty), \ \gamma(x_1,x_2) = \mathbb{E}(W(x_1)-W(x_2))^2.$
If $W$ has stationary increments, we have that $\gamma(x_1,x_2) = \gamma(x_1-x_2,0) =: \gamma(x_1 - x_2)$ 
and the resulting Brown--Resnick is stationary \citep{bro1977,kab2009}. The 
most common parametric class of variograms belonging to Gaussian processes with
stationary increments is the class of fractional variograms, which we consider 
in the following corollary.

\begin{corollary} \label{cor:cond-br}
  Consider a Brown--Resnick process on $\mathbb{R}^d$ with variogram coming 
  from the parametric family 
  $$\gamma(h) = \|h\|_2^\alpha/s, \quad (s, \alpha) \in \Theta = (0,\infty)\times (0,2).$$
  Suppose that the process is observed on a finite set of locations $t_1,\dots, 
  t_m \in \mathbb{R}^d$ such that the pairwise Euclidean distances are not all
  equal. Then the posterior median of $\theta=(s,\alpha)$ is 
  asymptotically normal.
\end{corollary}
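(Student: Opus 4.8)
The plan is to deduce the corollary from the Bernstein--von Mises theorem (Theorem~\ref{bernstein}) applied to the two-parameter submodel, reusing the H\"usler--Reiss analysis behind Proposition~\ref{prop:cond-hr}. The finite-dimensional distribution of the Brown--Resnick process at $t_1,\dots,t_m$ is the H\"usler--Reiss distribution \eqref{eq:exp-fctn-HR} with parameter matrix $\Lambda(s,\alpha)=\{\lambda^2_{ij}\}$, where $\lambda^2_{ij}=\tfrac14\gamma(t_i-t_j)=\tfrac{1}{4s}\,d_{ij}^\alpha$ and $d_{ij}=\|t_i-t_j\|_2$. Since the locations are distinct, each $d_{ij}>0$, and for $\alpha\in(0,2)$ the map $h\mapsto\|h\|_2^\alpha$ is strictly conditionally negative definite, so $\Lambda(s,\alpha)$ is strictly conditionally negative definite and the structural facts underlying Proposition~\ref{prop:cond-hr} (existence of a density, positive definiteness of the $\Sigma^{(p)}_{I,I}$, and differentiability in quadratic mean in $\Lambda$) hold at every $(s,\alpha)$. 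It therefore suffices to transfer these properties along the smooth reparameterization $\phi:(s,\alpha)\mapsto\Lambda(s,\alpha)$ and to verify identifiability of the pair $(s,\alpha)$.

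For the first two hypotheses of Theorem~\ref{bernstein} I would argue as follows. The map $\phi$ is $C^\infty$ on $(0,\infty)\times(0,2)$ because $\lambda^2_{ij}=\tfrac{1}{4s}e^{\alpha\log d_{ij}}$, so differentiability in quadratic mean at $(s_0,\alpha_0)$ follows from that of the H\"usler--Reiss family in $\Lambda$ by the chain rule, and the Fisher information of the submodel is $I_{(s,\alpha)}=J^\top I_\Lambda J$, with $J$ the Jacobian of $\phi$ and $I_\Lambda\succ0$ the non-singular H\"usler--Reiss information from Proposition~\ref{prop:cond-hr}. A direct computation gives $\partial_s\lambda^2_{ij}=-\tfrac1s\lambda^2_{ij}$ and $\partial_\alpha\lambda^2_{ij}=\lambda^2_{ij}\log d_{ij}$, so the two columns of $J$ are proportional to the vectors $(\lambda^2_{ij})_{i<j}$ and $(\lambda^2_{ij}\log d_{ij})_{i<j}$; these are linearly independent exactly when $\log d_{ij}$ is non-constant over pairs, i.e.\ precisely under the hypothesis that the distances are not all equal. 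Hence $J$ has full column rank and $I_{(s,\alpha)}$ is non-singular.

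It remains to produce uniformly consistent tests, for which by Proposition~\ref{prop:uniftest} I would verify condition \eqref{eq:uniftest1}. The pairwise extremal coefficient of the H\"usler--Reiss pair $(i,j)$ is $\tau_{ij}(s,\alpha)=2\Phi(\lambda_{ij})$ with $\lambda_{ij}=\tfrac12 d_{ij}^{\alpha/2}s^{-1/2}$, and $2\Phi$ is a strictly increasing homeomorphism of $\mathbb R$ onto $(0,2)$. Taking two pairs whose distances $d_1\neq d_2$ differ, the identity $\log\lambda_1-\log\lambda_2=\tfrac{\alpha}{2}\log(d_1/d_2)$ recovers $\alpha$ from the two coefficients, after which $\log\lambda_1$ recovers $s$; thus $(s,\alpha)\mapsto(\tau_{ij})_{i<j}$ is injective, and in fact extends continuously and injectively to $(0,\infty)\times(0,2]$. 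To upgrade injectivity to the strictly positive infimum in \eqref{eq:uniftest1}, I would argue by contradiction: if $\theta_n$ with $\|\theta_n-\theta_0\|_\infty\geq\varepsilon$ satisfied $\tau_{ij}(\theta_n)\to\tau_{ij}(\theta_0)$ for all pairs, then a subsequence converges in the compactification of $\Theta$, and each boundary limit is incompatible with the target values: $\tau_{ij}\to1$ as $s\to\infty$ and $\tau_{ij}\to2$ as $s\to0$, whereas $\tau_{ij}(\theta_0)\in(1,2)$; $\tau_{ij}\to2\Phi(\tfrac12(s^*)^{-1/2})$ independently of the pair as $\alpha\to0$, whereas the $\tau_{ij}(\theta_0)$ are genuinely distinct across pairs with distinct distances since $\alpha_0>0$; and a finite interior limit (including $\alpha\to2$) contradicts the injective extension.

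The main obstacle is precisely this last step: because $\Theta=(0,\infty)\times(0,2)$ is non-compact, injectivity alone does not give the uniform lower bound in \eqref{eq:uniftest1}, and one must rule out every mode of escape to the boundary ($s\to0,\infty$ and $\alpha\to0,2$), using that $\alpha_0\in(0,2)$ and the distances not all equal force $(\tau_{ij}(\theta_0))_{i<j}$ to be non-constant and bounded away from the degenerate boundary values $1$ and $2$. Once \eqref{eq:uniftest1} is established, Proposition~\ref{prop:uniftest} supplies the uniformly consistent tests and Theorem~\ref{bernstein} yields the asymptotic normality of the posterior median of $(s,\alpha)$.
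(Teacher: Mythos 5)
Your argument is correct and follows the same skeleton as the paper's: reduce to the H\"usler--Reiss finite-dimensional distributions with $\lambda^2_{ij}=d_{ij}^\alpha/(4s)$, use the pairwise extremal coefficients $\tau_{ij}=2\Phi(\lambda_{ij})$ and the explicit recovery of $(s,\alpha)$ from two pairs at unequal distances, then invoke Proposition~\ref{prop:uniftest} and Theorem~\ref{bernstein}, with differentiability in quadratic mean inherited from the H\"usler--Reiss family. The one step you treat genuinely differently is the upgrade from injectivity to the uniform bound~\eqref{eq:uniftest1}. You correctly flag that injectivity alone is insufficient on the non-compact $\Theta$ and resolve it by a subsequence argument in a compactification, ruling out each boundary limit ($s\to 0,\infty$, $\alpha\to 0$, and the extension to $\alpha=2$); I checked the cases, including the mixed ones such as $s_n\to\infty$ with $\alpha_n\to 2$, and they all close because $d_{ij}^{\alpha_n}$ stays bounded away from $0$ and $\infty$. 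The paper (via the proofs of Corollary~\ref{cor:ext} and Proposition~\ref{prop:cond-hr}) instead uses that the explicit inverse formulas, $\alpha=(\log\gamma_{12}-\log\gamma_{23})/(\log d_{12}-\log d_{23})$ and $s=d_{12}^\alpha/(4\lambda^2_{12})$, define a map continuous on all of $(0,\infty)^2$; composed with the homeomorphism $(1,2)\ni\tau\mapsto\{\Phi^{-1}(\tau/2)\}^2\in(0,\infty)$, continuity of the inverse of $\theta\mapsto(\tau_{ij}(\theta))_{ij}$ at the single point corresponding to $\theta_0$ is exactly the contrapositive of~\eqref{eq:uniftest1}, so no boundary analysis is needed: convergence of two well-chosen $\tau$'s already forces $\theta_n\to\theta_0$. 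Your route is longer but more defensive; in fact the paper's write-up, taken literally, overstates one implication (a lower bound on $|\lambda^2_{ij}(\theta)-\lambda^2_{ij}(\theta_0)|$ does not transfer uniformly to the $\tau$'s, since $2\Phi(\sqrt{\cdot})$ flattens at infinity), a glitch your treatment avoids. Your explicit Jacobian computation showing that $I_{(s,\alpha)}=J^\top I_\Lambda J$ is non-singular precisely when the distances are not all equal is a useful addition the paper leaves implicit.
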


\section{Simulation Study} \label{sec:simu}

Let $z^{(l)} = (z_1^{(l)},\ldots, z_k^{(l)})$, $l=1,\ldots,N$, be $N$ 
realizations of a $k$-dimensional max-stable vector $Z$ whose distribution
belongs to some parametric family $\{F_\theta, \ \theta \in \Theta\}$. As
described in Section \ref{sec:metho}, including the partition $\tau^{(l)}$ 
associated to a realization $z^{(l)}$ in a Bayesian framework allows to obtain
samples from the posterior distribution $L(\theta \mid z^{(1)}, \dots, 
z^{(N)})$ of $\theta$ given the data. This procedure uses the full dependence 
information of the multivariate distribution $Z$. This is in contrast to 
frequentist maximum likelihood estimation for the max-stable vector $Z$, where
even in moderate dimensions the likelihoods are to complicated for practical 
applications. Instead, at the price of likelihood misspecification, it is 
common practice to use only pairwise likelihoods which are assumed to be 
mutually independent. The maximum pairwise likelihood estimator \citep{PRS10} 
is then 
\begin{align}\label{PL_est}
  \hat \theta_{\rm PL} = \argmax_{\theta \in \Theta} \sum_{l=1}^N \sum_{1 \leq i < j \leq k} \log f_{\theta; i,j}(z_i^{(l)},z_j^{(l)}),
\end{align}
where $f_{\theta; i,j}$ denotes the joint density of the $i$th and $j$th 
component of $Z$ under the model $F_\theta$. Using only bivariate information
on the dependence results in efficiency losses.

In this section, we analyze the performance of our proposed Bayesian estimator 
and compare it to $\hat \theta_{\rm PL}$ and other existing methods. Since the 
latter are all frequentist approaches, for a Markov chain whose stationary 
distribution is the posterior, we obtain a point estimator 
$\hat \theta_{\rm Bayes}$ of $\theta$ as the posterior median, i.e.,
\begin{align*}
  \hat\theta_{\rm Bayes} = \text{median}\left\{L(\theta|z^{(1)}, \dots, z^{(N)})\right\}.
\end{align*}
As the parametric model we choose the logistic distribution introduced in 
Subsection \ref{subsec:logistic} with parameter space $\Theta = (0,1)$ and 
uniform prior. This choice covers a range of situations from strong to very 
weak dependence. Other choices of parametric models will result in different 
efficiency gains but the general observations in the next sections should 
remain the same.

We note that other functionals of the posterior distribution can be used to 
obtain point estimators. Simulations based on the posterior mean, for instance,
gave very similar results, and we therefore restrict to the posterior median
in the sequel. Similarly, changing the prior distributions does not have a strong
effect on the posterior distribution for the sample sizes we consider; see also 
Section~\ref{subsec:posterior}.

\subsection{Max-stable data} \label{sec:max-stable}

We first take the marginal parameters to be fixed and known and quantify the
efficiency gains of $\hat \theta_{\rm Bayes}$ compared to 
$\hat \theta_{\rm PL}$. We simulate $N = 100$ samples $z^{(1)},\dots,z^{(N)}$ 
from the logistic distribution for different dimensions $k \in \{6,10,50\}$ 
and different dependence parameters $\theta = 0.1 \times i$, $i =1, \dots, 9$. For 
each combination of dimension $k$ and parameter $\theta$ we then run a Markov 
chain with length $1500$, where we discard the first $500$ 
steps as the burn-in time. The empirical median of the remaining $1000$ elements gives 
$\hat\theta_{\rm Bayes}$. The chain is sufficiently long to reliably estimate 
the posterior median; see also the mixing properties in Section \ref{subsec:posterior}.
The maximum pairwise likelihood estimator $\hat 
\theta_{\rm PL}$ is obtained according to \eqref{PL_est}. The whole procedure 
is repeated $1500$ times to compute the corresponding root mean squared errors shown
in Figure~\ref{rmse}.

As expected, the use of full dependence information substantially decreases the
root mean squared errors and thus increases the efficiency of the estimates. In 
extreme value statistics, where typically only small data sets are available, 
this allows to reduce uncertainty due to parameter estimation. The advantage of 
this additional information becomes stronger for both higher dimensions
and weaker dependence, analogously to the observations in \citet{HDG15}.
This behavior can to some extent be understood by the results in \cite{shi1995} 
on the Fisher information of the logistic distribution for different dimensions 
and dependence parameters. When $\theta\downarrow 0$, pairwise likelihood 
performs just as well as full likelihood, which is sensible since, up to a 
multiplicative constant, the pairwise likelihood equals the full likelihood 
when $\theta=0$.

It is interesting to note that the estimates $\hat\theta_{\rm Bayes}$ appear to 
be unbiased in almost all cases, whereas the pairwise estimator has a finite 
sample bias. 

\begin{figure}
  \centering
  \includegraphics[trim = 10mm 0mm 0mm 10mm,width= .33\textwidth]{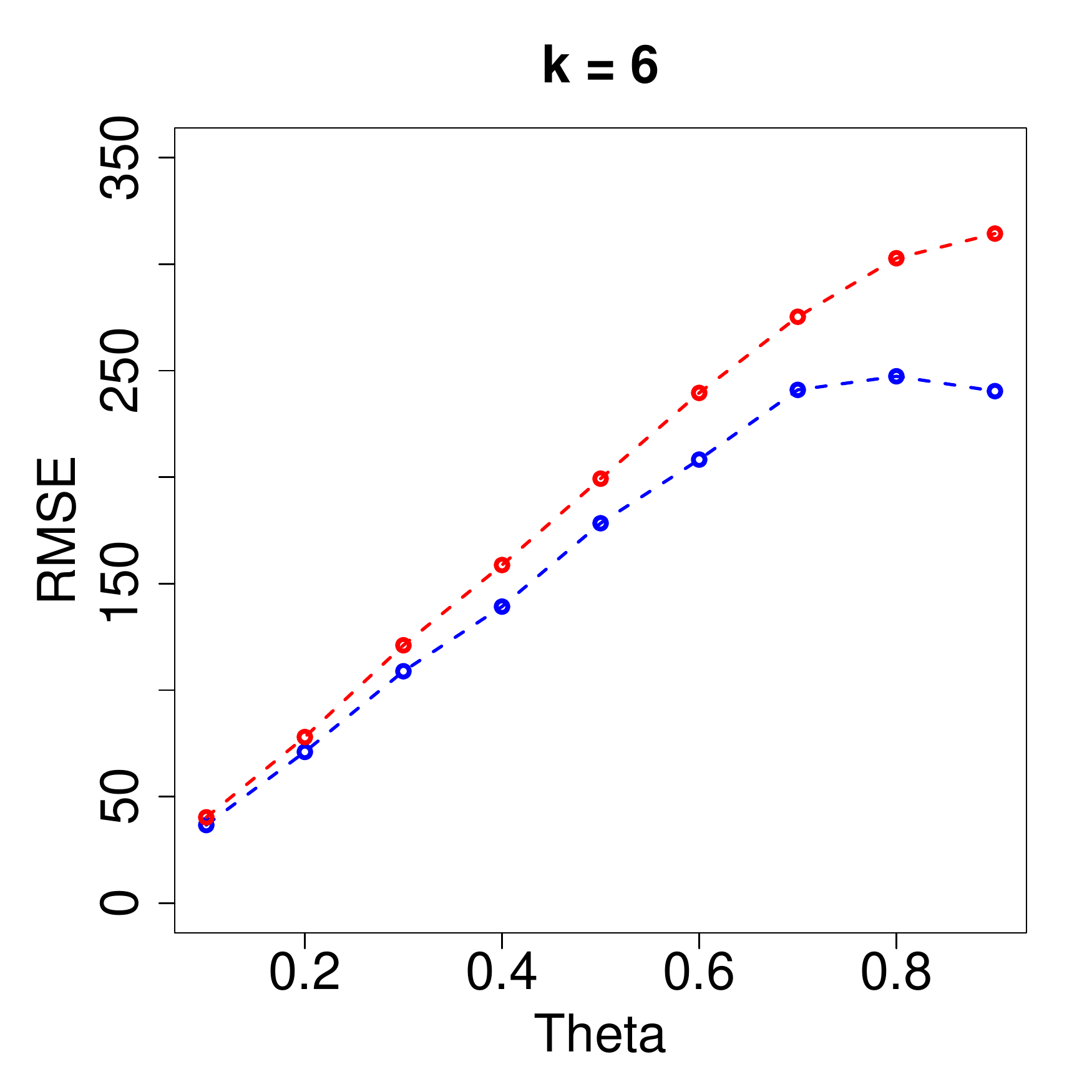}%
  \includegraphics[trim = 10mm 0mm 0mm 10mm,width= .33\textwidth]{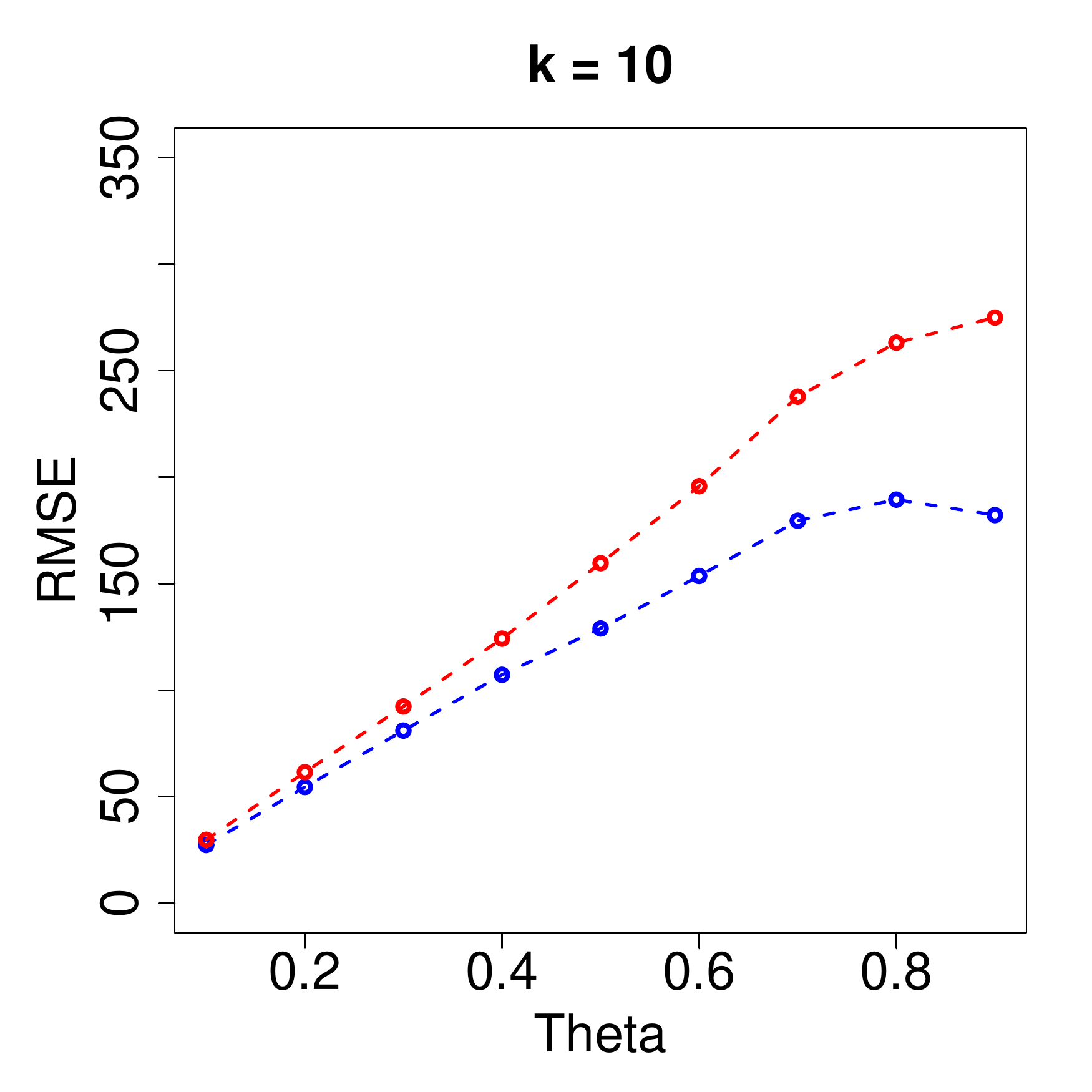}%
  \includegraphics[trim = 10mm 0mm 0mm 10mm,width= .33\textwidth]{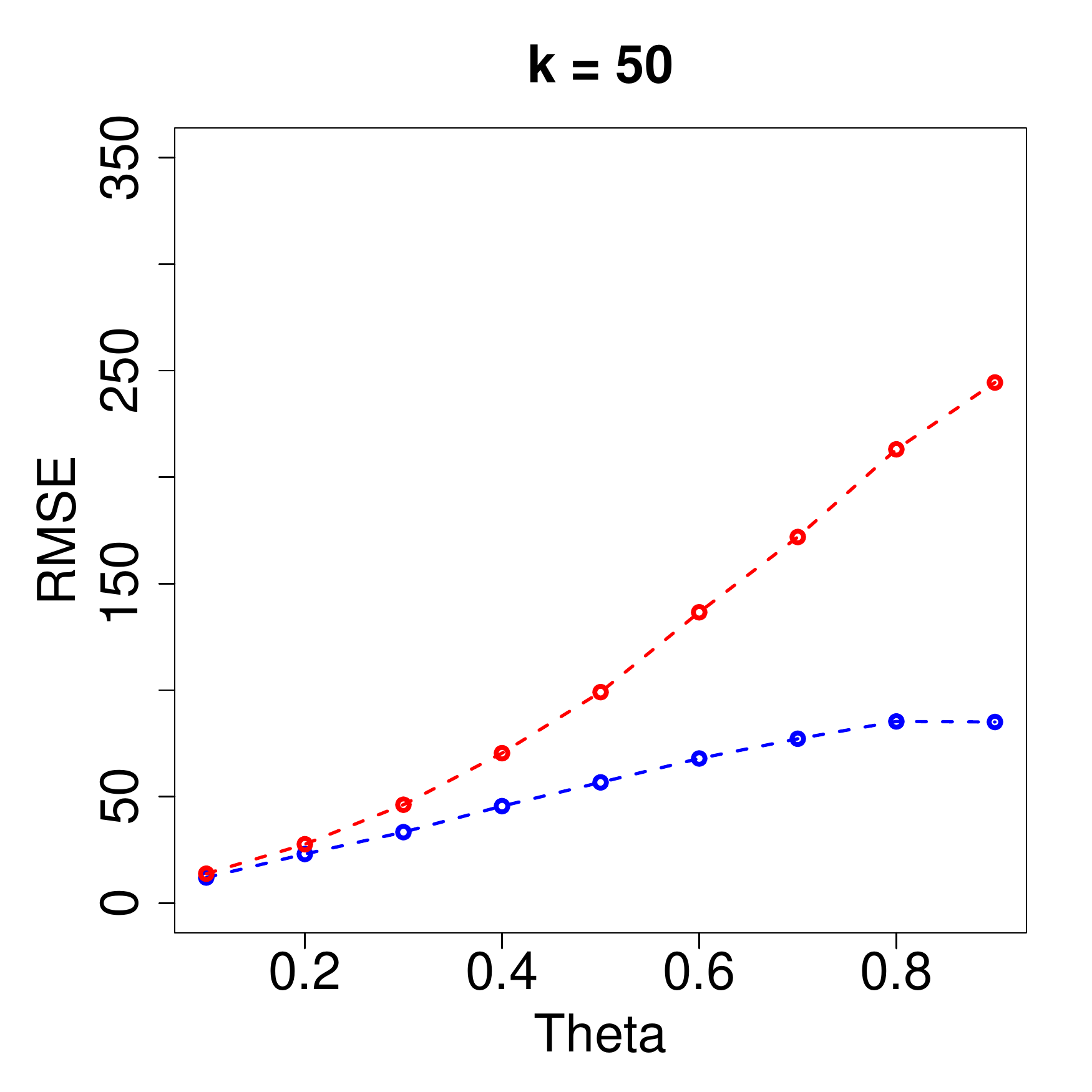}%
  \caption{Root mean squared errors (RMSE) of $\hat \theta_{\rm Bayes}$ (blue) and
           $\hat \theta_{\rm PL}$ (red) for different dimensions $k$ and different 
           parameters $\theta$. Values have been multiplied by $10000$.}
  \label{rmse}
\end{figure}

\subsection{Data in the max-domain of attraction}

In applications, the max-stable distribution $Z$ might not be observed exactly
but only as an approximation by componentwise block maxima of data vectors 
$X^{(1)},\dots, X^{(b)}$ in its max-domain of attraction with standard
Fr\'echet margins, where $b\in\mathbb N$ is the block size. Indeed, the random 
vector 
$$\tilde Z = \frac1b \left(\max_{l=1,\ldots,b} X_1^{(l)}, \ldots, \max_{l=1,\ldots,b} X_k^{(l)}\right),$$
approximates the distribution of $Z$, where the approximation improves for 
increasing $b$. In this situation we can associate to $\tilde Z$ the partition 
of occurrence times of the maxima, say $\tilde \tau$. \citet{ST05} proposed to 
use this information on the partition to simplify the likelihood of the 
max-stable distribution. For $N$ observations $\tilde z^{(1)}, \dots, 
\tilde z^{(N)}$ of $\tilde Z$ with partitions $\tilde \tau^{(1)}, \dots,\tilde \tau^{(N)}$
they defined the estimator
\begin{equation*}
  \hat \theta_{\rm ST} = \argmax_{\theta \in \Theta} \sum_{l=1}^N \log L(\tilde z^{(l)}, \tilde \tau^{(l)}; \theta).
\end{equation*}
This estimator suffers from two kinds of misspecification biases. Firstly, the 
$\tilde z^{(l)}$ are only approximately $Z$ distributed and, secondly, the 
partitions $\tilde \tau^{(l)}$ are only finite sample approximations to the true 
distribution of the limit partition $T$. For the latter, \citet{wadsworth14} 
proposed a bias  reduction method for moderate dimensions and showed in a 
simulation study that it significantly decreases the bias of the 
Stephenson--Tawn estimator in the case where the $X^{(k)}$ and thus also 
$\tilde Z$ follow exactly a max-stable logistic distribution. However, if the 
$X^{(k)}$ are samples from the outer power Clayton copula 
\citep[cf.][]{hof2011} and thus only in the max-domain of attraction of the 
logistic distribution, then even the bias reduced estimator suffers from 
significant bias \citep[cf.,][Table 3]{wadsworth14}.

We repeat the simulation study from Section \ref{sec:max-stable} with the only 
difference that, instead of sampling from $Z$, we simulate $N = 100$ samples 
$\tilde z^{(1)},\dots,\tilde z^{(N)}$ of $\tilde Z$, which is the rescaled 
maximum of $b=50$ samples from the outer power Clayton copula for different 
parameters. Based on these data in the max-domain of attraction of the logistic
distribution we estimate the dependence parameter $\theta$ using our Bayes 
estimator and compare it to the pairwise likelihood estimator. Both approaches 
ignore the additional information on the partitions $\tilde \tau^{(l)}$ that we 
have in this setup. On the other hand, we can also compute the Stephenson--Tawn
estimator and its bias reduced version by \citet{wadsworth14}, which explicitly 
include the partition information.

Table \ref{tab:simu-clayton} shows the root mean squared errors of the four 
estimators. All of them have a bias that plays a significant role for the 
overall estimation error and that is due to the model misspecification for only
approximately max-stable data. This bias is however much stronger for 
$\hat \theta_{\rm ST}$ and $\hat \theta_{\rm W}$, which use the again 
misspecified partitions. In this case, the Bayes estimator that treats the 
partitions as unknown and samples from them automatically seems to be more 
robust and does not need a bias correction. At the same time it has a small
variance and thus in many cases the smallest root mean squared error. Especially in 
higher dimensions ($\geq 20$) where the bias reduction of \citet{wadsworth14} 
can no longer be used, the Bayes estimator still provides a robust and 
efficient method of inference.

\begin{table} 
\begin{center}
\begin{tabular}{|l|rr|rr|rr|rr|} \hline
  & \multicolumn{2}{c|}{$\theta_0=0.1$} & \multicolumn{2}{c|}{$\theta_0=0.4$} & \multicolumn{2}{c|}{$\theta_0=0.7$}
  & \multicolumn{2}{c|}{$\theta_0=0.9$}\\
                                  $k$ & 6 & 10    & 6 & 10    & 6 & 10    & 6 & 10\\ \hline
$\rm{RMSE}(\theta_{\rm Bayes})$ & 36 & 29 &      144 & 111      & 241 & 191 &     262 & 220   \\ 
$ \rm{RMSE}(\theta_{\rm PL})$ & 40 & 32  &    159 & 127 &        279 & 235 &      311 & 286 \\ \hline
$ \rm{RMSE}(\theta_{\rm ST})$ & 38 & 29  &    148 & 126 &        352 & 401 &      647 & 840 \\
$ \rm{RMSE}(\theta_{\rm W})$ & 38 & 29  &    134 & 108 &        230 & 228 &      313 & 434 \\
\hline
\end{tabular}
\end{center}
\caption{Root mean squared errors of $\hat \theta_{\rm Bayes}$, $\hat \theta_{\rm PL}$, 
  $\hat \theta_{\rm ST}$ and $\hat \theta_{\rm W}$, estimated from 1500 estimates; 
  figures have been multiplied by $10000$.} \label{tab:simu-clayton}
\end{table}

\subsection{Estimation of marginal extreme value parameters}\label{est_margins}

In spatial settings, the marginal extreme value parameters are often 
estimated by using the independence likelihood \citep{CB07}, where all 
locations are assumed independent. This avoids to specify a dependence 
structure but can result in efficiency losses, even if only the marginal 
parameters are of interest.

We perform a simulation study to assess how using the full likelihoods in a 
Bayesian framework improves estimation of the marginal parameters. We fix the
dimension $k=10$ and set the marginal parameters to $\mu=1$, $\sigma=1$ and
$\xi\in\{-0.2,0.4,1\}$, equal for all $k$ margins. The dependence is logistic 
with unknown nuisance parameter $\theta_0\in\{0.1,0.4, 0.7, 0.9\}$.  

Based on $N=100$ independent samples from this model, we compare three
different estimation procedures. The first one is our Bayesian approach using
the full joint likelihood of the marginal parameters and the dependence 
parameter. We use a uniform prior for $\theta$, and independent normal priors
for $\mu$, $\log \sigma$ and $\xi$ with large standard deviations.
For the univariate case, more sophisticated choices for the prior distributions
are possible, including dependencies between the three extreme value parameters \citep[e.g.,][]{ST05a, nor2016}.

The second procedure is the maximum  pairwise likelihood estimator that only
uses bivariate dependence, and the third is the maximum independence likelihood 
estimator that completely ignores dependence between different components. Each
simulation and estimation is repeated 1500 times.

Table \ref{tab:margin_not_fix} contains the root mean squared errors of the marginal 
parameters for the three approaches. Interestingly, for the location and scale 
parameter we see only little difference between the three methods, meaning that 
they can be efficiently estimated without taking into account dependencies.
For the shape parameter, however, there are substantial improvements in the 
estimation error by including the unknown dependence structure in the model and
estimating it simultaneously. Since estimation of the shape is both the most 
difficult and the most important of the three extreme value parameters, the 
Bayesian approach is promising also for marginal tail estimation. Finally, we
observe that there is already an efficiency gain for the shape parameter when 
only the pairwise dependence is considered, but it is even more remarkable in the 
Bayesian setting with full likelihoods. Table \ref{tab:margin_not_fix} also shows
that these observations hold across different ranges
for the shape parameter $\xi$.

\begin{table} 
\begin{center}
\begin{tabular}{|l|rrr|rrr|rrr|rrr|} \hline
  & \multicolumn{3}{c|}{$\theta_0=0.1$} & \multicolumn{3}{c|}{$\theta_0=0.4$} & \multicolumn{3}{c|}{$\theta_0=0.7$} 
 & \multicolumn{3}{c|}{$\theta_0=0.9$}\\
   $\xi = -0.2$   & $\mu$ & $\sigma$ & $\xi$ & $\mu$ & $\sigma$ & $\xi$ & $\mu$ & $\sigma$ & $\xi$ & $\mu$ & $\sigma$ & $\xi$ \\ \hline
  Bayes full & 105 & 75 & 22 &      97 & 58 & 21 &     71 & 37 & 20 &     51 & 28 & 19\\ \hline
 Pairwise & 106 & 73 & 31 &         98 & 57 & 28 &      72 & 39 & 24 &    52 & 30 & 21 \\ \hline
 Independence  & 111 & 75 & 67 &   101 & 58 & 52 &     73 & 39 & 35 &     52 & 30 & 24  \\ \hline
\hline
   $\xi = 0.4$   & $\mu$ & $\sigma$ & $\xi$ & $\mu$ & $\sigma$ & $\xi$ & $\mu$ & $\sigma$ & $\xi$ & $\mu$ & $\sigma$ & $\xi$ \\ \hline
  Bayes full & 102 & 99 & 41 &      96 & 89 & 39 &     71 & 65 & 34 &     51 & 45 & 32\\ \hline
 Pairwise & 106 & 98 & 57 &         97 & 89 & 54 &      71 & 67 & 45 &     51 & 47 & 38 \\ \hline
 Independence  & 112 & 100 & 96 &   100 & 89 & 79 &     72 & 67 & 56 &     51 & 48 & 42  \\ \hline
 \hline
  $\xi = 1$  & $\mu$ & $\sigma$ & $\xi$ & $\mu$ & $\sigma$ & $\xi$ & $\mu$ & $\sigma$ & $\xi$ & $\mu$ & $\sigma$ & $\xi$ \\ \hline
  Bayes full & 109 & 155 & 85 &      100 & 144 & 76 &     77 & 111 & 59 &     53 & 75 & 46\\ \hline
 Pairwise & 106 & 146 & 94 &         96 & 135 & 90 &      74 & 108 & 72 &     52 & 75 & 53 \\ \hline
 Independence  & 110 & 146 & 127 &   98 & 135 & 107 &     74 & 109 & 81 &     52 & 76 & 57  \\ \hline
\end{tabular}
\end{center}
\caption{Root mean squared errors of $(\mu,\sigma,\xi)$ estimates with different values of $\xi$ for the Bayesian approach, 
 pairwise likelihoods and independence likelihoods, respectively, where $\theta$
 is an unknown nuisance parameter; figures have been multiplied by $1000$.} \label{tab:margin_not_fix}
\end{table}

\section{Applications in a Bayesian framework}\label{sec:bayes}

In the previous sections we discussed the efficiency gains of the Bayesian full
likelihood approach in the frequentist framework of point estimates. The Markov 
chain from Section \ref{subsec:bayes_setup} however produces not only a point
estimate but an estimate of the entire posterior distribution. For instance, 
this can directly be used to produce credible intervals for the parameter of 
interest. As a further application of our approach in the Bayesian framework,
we will present Bayesian model comparison in this section.

\subsection{The posterior distribution and credible intervals} \label{subsec:posterior}

As an illustration of the methodology we simulate a sample of $N = 15$ data
$z^{(l)} = (z_1^{(l)},\ldots, z_k^{(l)})$, $l=1,\ldots,N$, from a $k$-dimensional 
max-stable vector $Z$ whose distribution belongs to the parametric family of
logistic distributions introduced in Subsection \ref{subsec:logistic} with 
parameter space $\Theta = (0,1)$. We run the Markov chain from Subsection 
\ref{subsec:bayes_setup}. The left panel of Figure~\ref{MCs} shows the Markov
chain for the parameter $\theta$ with simulated data from the logistic 
distribution in dimension $k=10$ with $\theta_0 = 0.8$. The prior distribution 
is uniform, that is, $\pi_\theta = \text{Unif}(0,1)$. The chain seems to have 
converged to its stationary distribution, namely the posterior distribution 
\begin{equation}\label{posterior-theta}
  L\left(\theta \mid \{z^{(l)}\}_{l=1}^N\right) \propto \pi_\theta(\theta) \prod_{l=1}^N L(z^{(l)};\theta),
\end{equation}
after a burn-in period of about $200$ steps. The auto correlation of the Markov
chain in Figure \ref{acf} suggests that there is serial dependence up to a lag
of $30$ steps. The parallel chain that updates the partitions is difficult to
plot. The right panel of Figure \ref{MCs} therefore shows in each step as a 
summary the mean number $m$ of sets in the partitions $\tau^{(1)}, \dots, 
\tau^{(N)}$, that is, $m = 1/N \sum_{l=1}^N |\tau^{(l)}|$.
For complete independence ($\theta_0=1$) we must have $m = k = 10$, whereas for
complete dependence ($\theta_0=0$) we have $m = 1$.

\begin{figure}
  \centering
  \includegraphics[trim = 0mm 0mm 0mm 10mm,width= .45\textwidth]{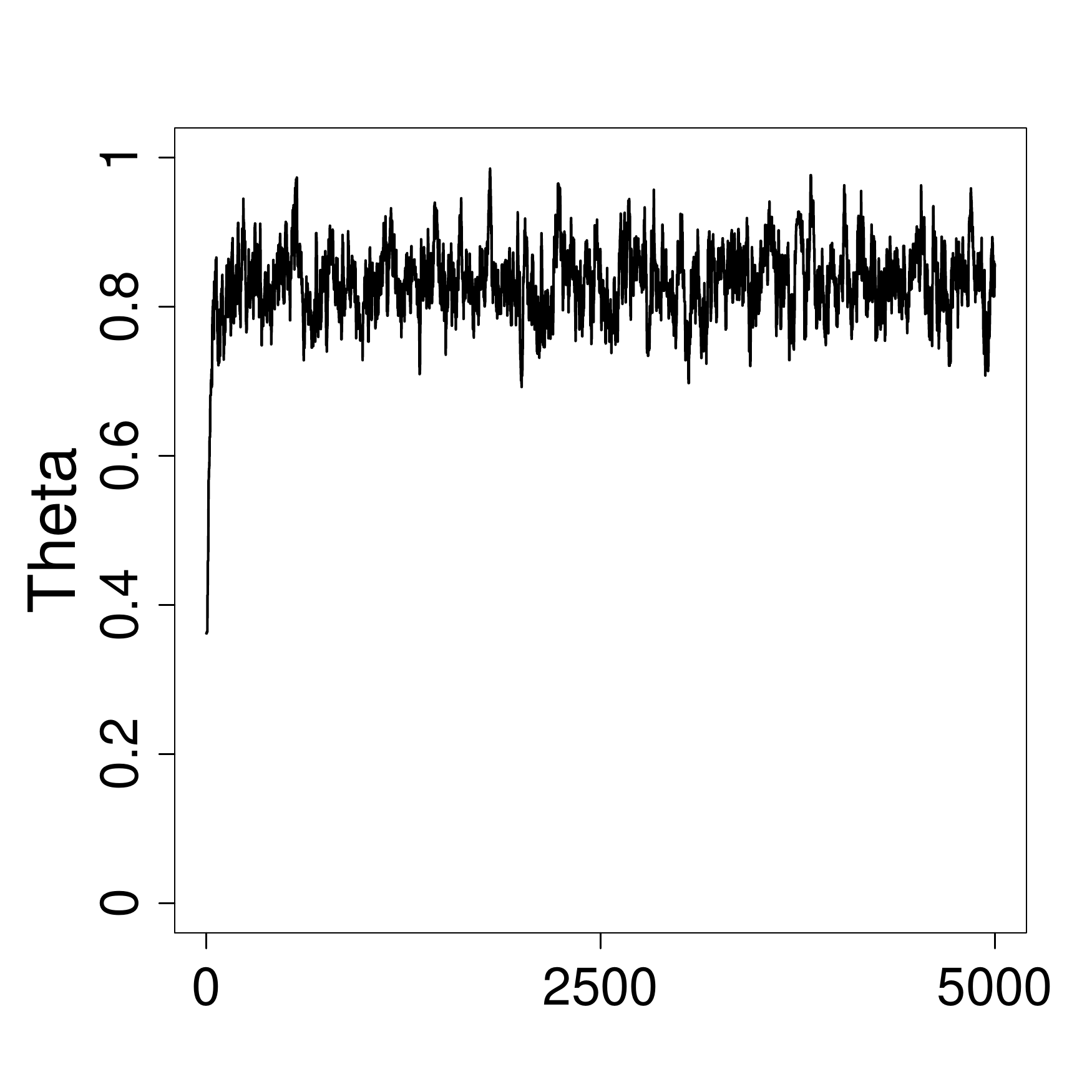}
  \includegraphics[trim = 0mm 0mm 0mm 10mm,width= .45\textwidth]{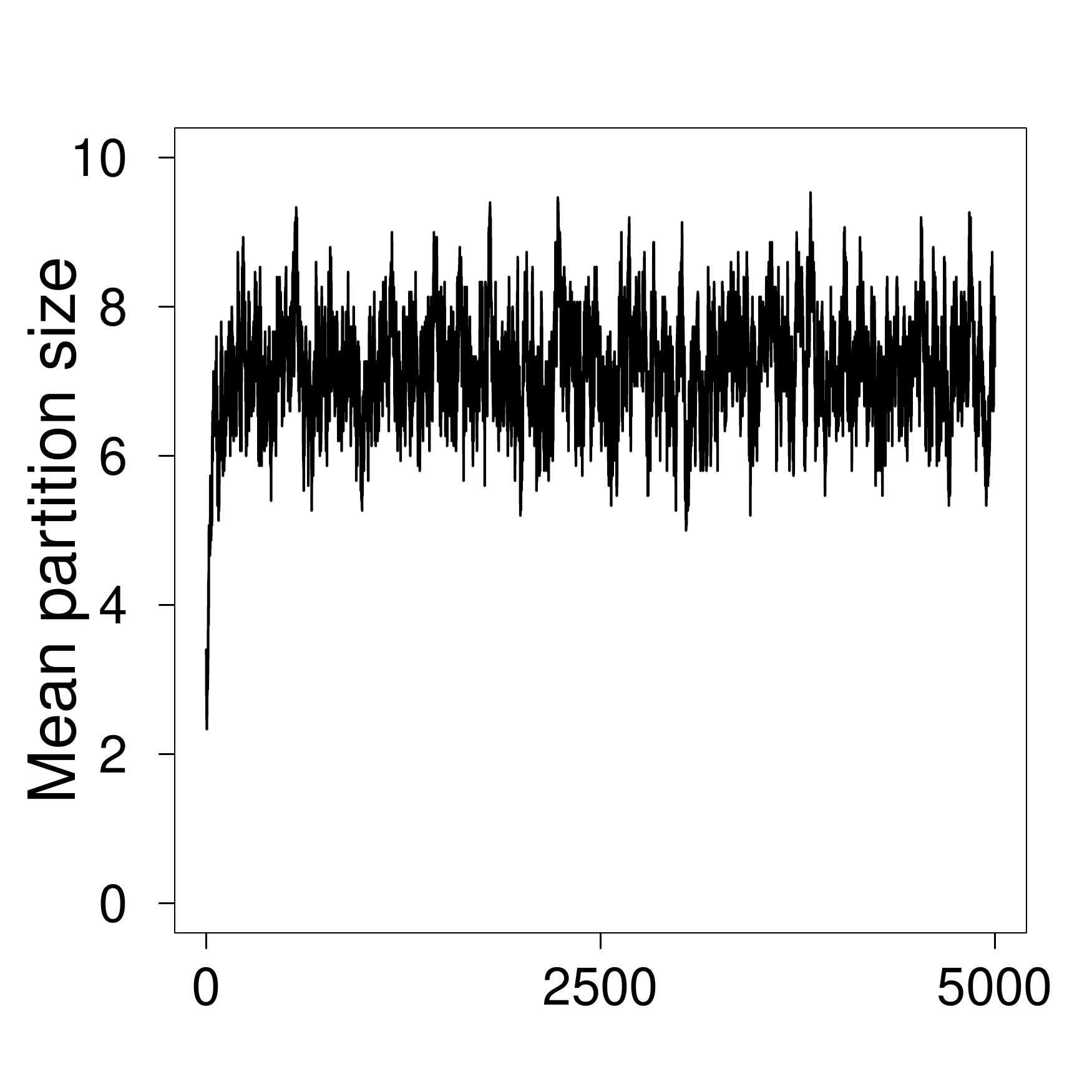}
  \caption{Markov chains for $\theta$ (left) and the mean partition size (right) with uniform prior.}
  \label{MCs}
\end{figure}
\begin{figure}[t]
  \centering
  \includegraphics[trim = 0mm 0mm 0mm 10mm,width= .45\textwidth]{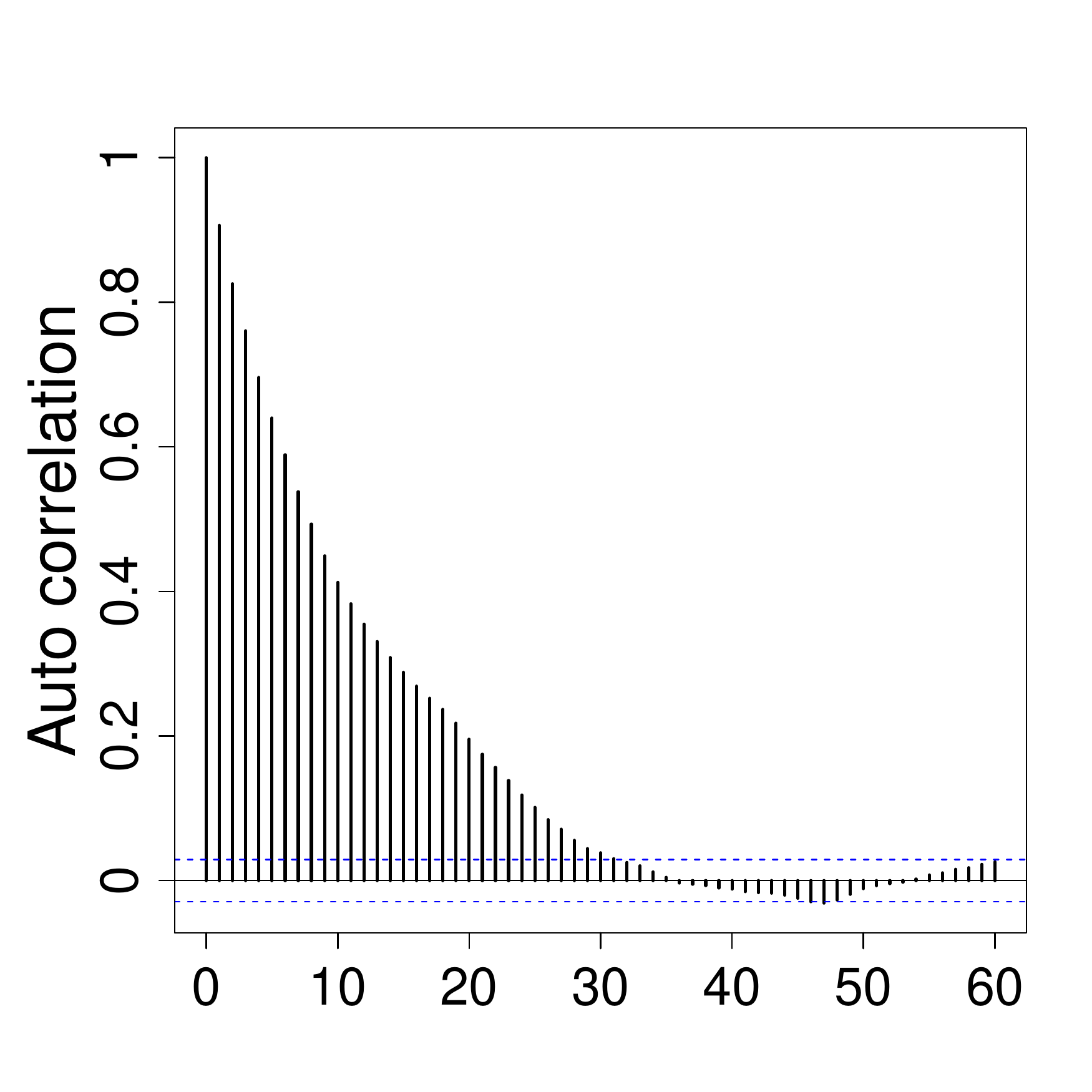}
  \caption{Auto correlation function of the Markov chain for the parameter $\theta$.}
  \label{acf}
\end{figure}

The left panel of Figure~\ref{hist} shows a histogram and an approximated 
smooth version of the posterior distribution, together with the uniform prior.
In order to assess the impact of the prior distribution on the posterior, the 
two other panels contain the corresponding plots for the same data set but for 
different priors, namely the beta distributions $\pi_\theta = 
\text{Beta}(0.5,0.5)$ (center) and $\pi_\theta = \text{Beta}(4,4)$ (right).
Even for a relatively small amount of $N=15$ data points, the influence of the
prior is not very strong.
\begin{figure}
 \centering
  \includegraphics[trim = 25mm 0mm 0mm 10mm,width= .3\textwidth]{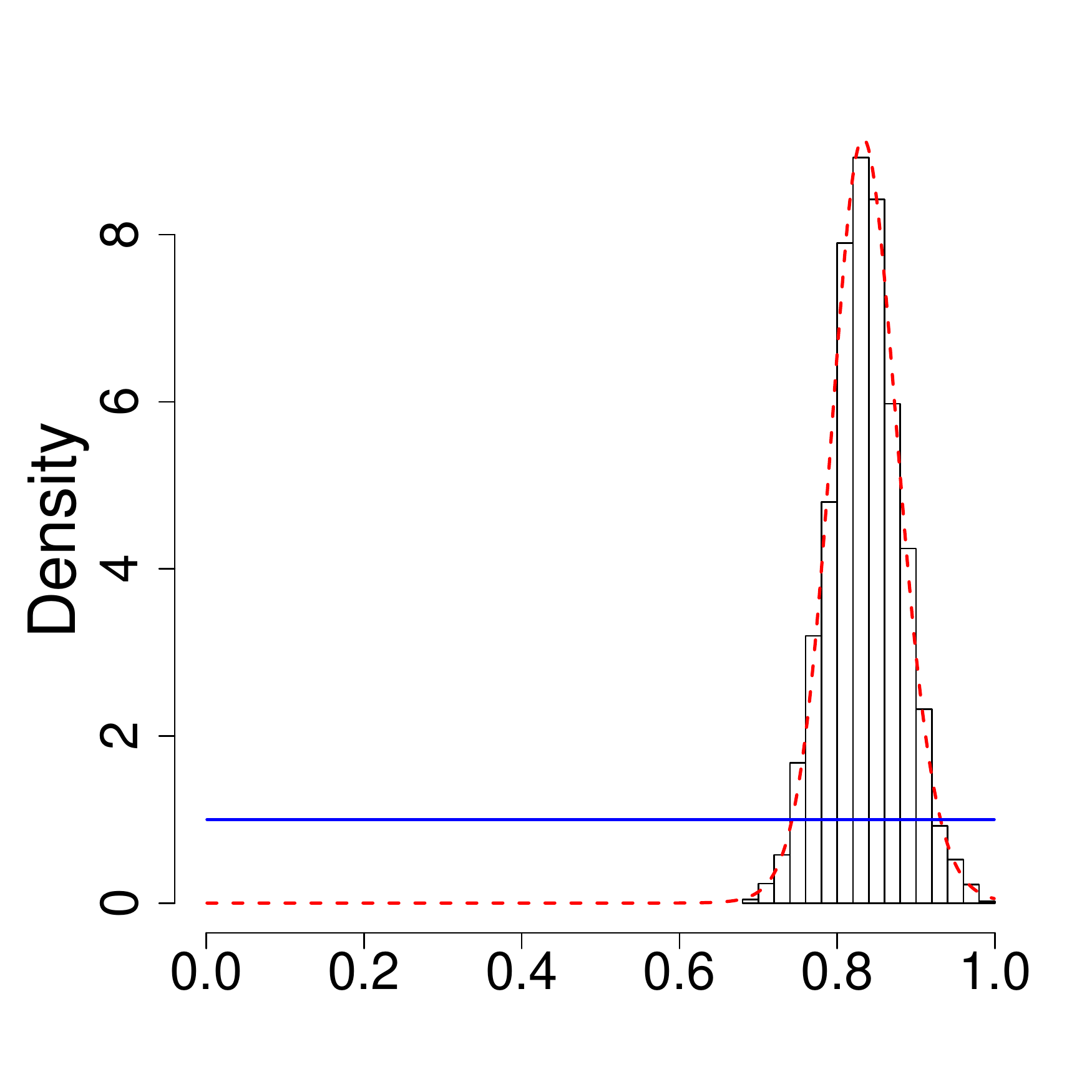}
  \includegraphics[trim = 25mm 0mm 0mm 10mm,width= .3\textwidth]{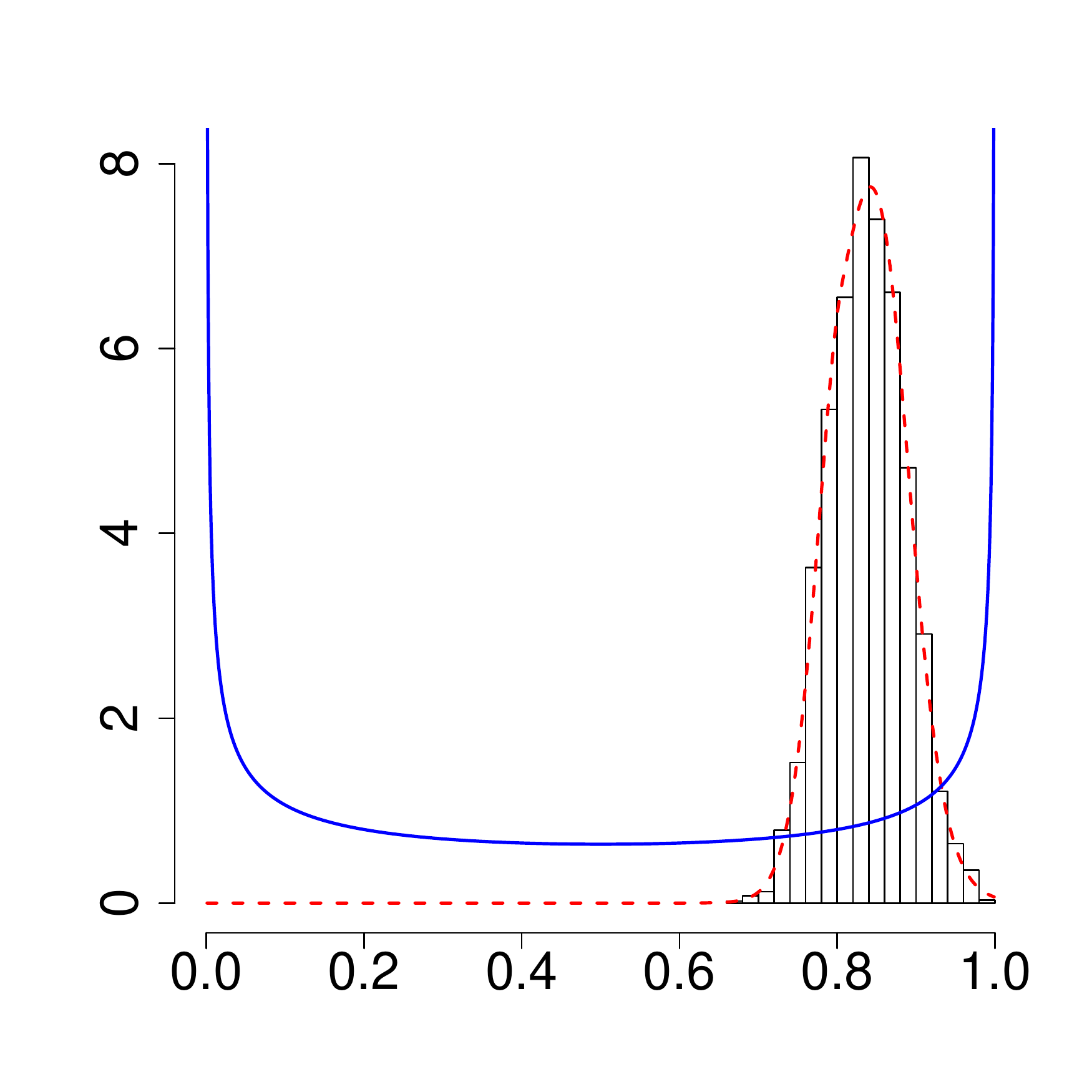}
  \includegraphics[trim = 25mm 0mm 0mm 10mm,width= .3\textwidth]{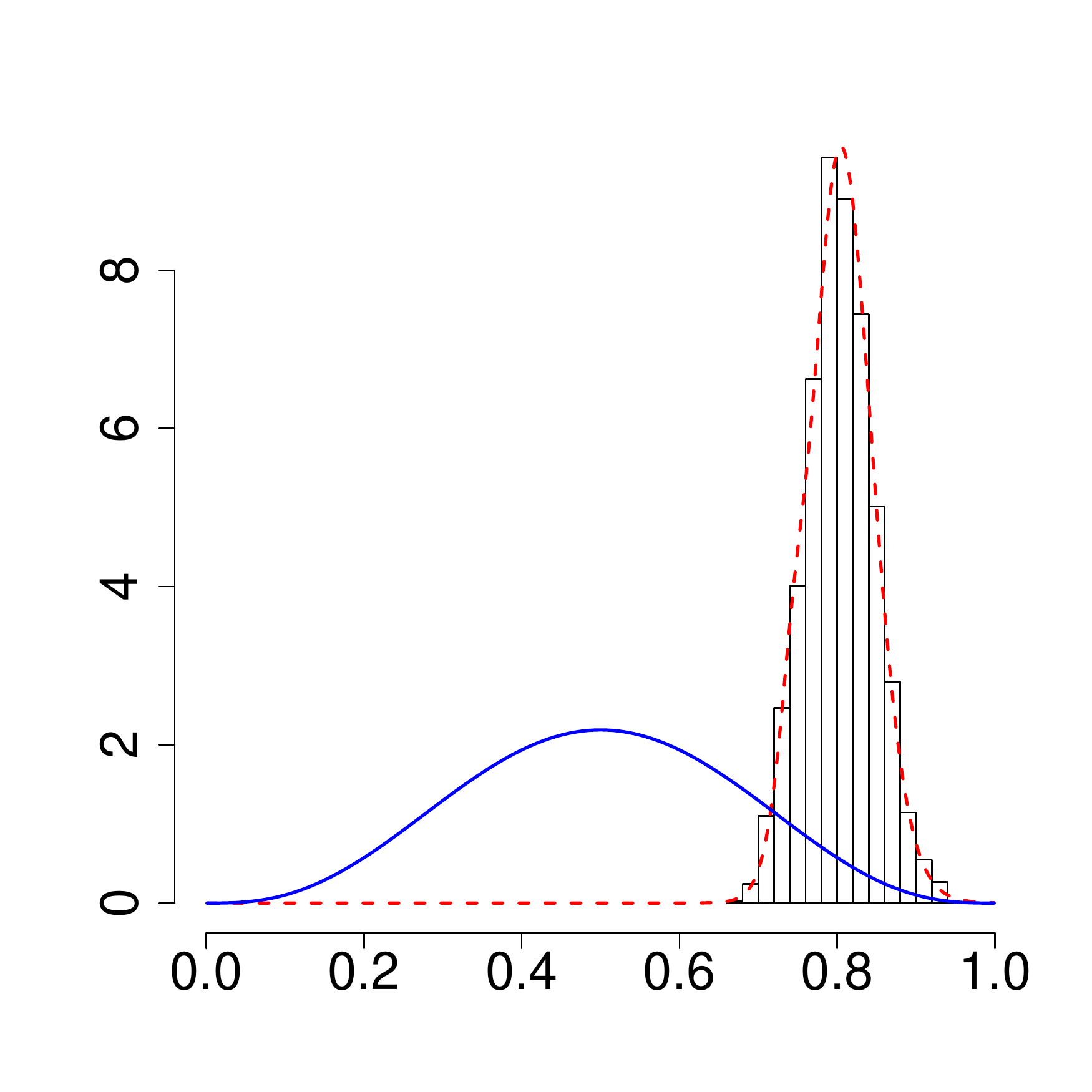}
  \caption{Histogram and smooth approximation of the posterior distribution (dotted red) 
           for different priors (solid blue): $\text{Unif}(0,1)$ (left), $\text{Beta}(0.5,0.5)$
           (center) and $\text{Beta}(4,4)$ (right).}
  \label{hist}
\end{figure}

The Bayesian setup provides us with a whole distribution for the parameter 
instead of a point estimate only. From this we can readily deduce credible
intervals for the parameter $\theta$. This is an advantage compared to 
frequentist composite likelihood methods since the Fisher information matrix
has a ``sandwich'' form adjusting for the misspecified likelihood, and 
confidence intervals are thus not easily computed \citep{PRS10}. When using 
composite likelihoods in a Bayesian setup, the posterior distributions are much
too concentrated and the empirical coverage rates are very small. Adjustments 
are necessary to obtain appropriate inference \citep{rib2012}. Since our 
approach uses the full, correct likelihood, no adjustment is needed to obtain 
accurate empirical coverage rates. Indeed, in Table \ref{credible} we provide
the coverage rates of the $95\%$ credible intervals obtained in the simulation
study in Section \ref{sec:max-stable} for some values of $\theta_0$. 
\begin{table}[b] 
\begin{center}
\begin{tabular}{|l|rrr|rrr|rrr|rrr|} \hline
  & \multicolumn{3}{c|}{$\theta_0=0.1$} & \multicolumn{3}{c|}{$\theta_0=0.4$} & \multicolumn{3}{c|}{$\theta_0=0.7$}
  & \multicolumn{3}{c|}{$\theta_0=0.9$}\\
  $k$ & 6 & 10 & 50   & 6 & 10 & 50    & 6 & 10 & 50   & 6 & 10 & 50\\ \hline
Coverage (in $\%$) & 94 & 93 & 90 &     95 & 94 & 94     & 94 & 94 & 94 &    94 & 94 & 90  \\ 
\hline
\end{tabular}
\end{center}
\caption{Empirical coverage rates of $95\%$ credible intervals obtained from
         the posterior distributions using full likelihood.} \label{credible}
\end{table}

\subsection{Bayesian model comparison}

Starting from data $z$ from a family of max-stable distributions
$\{F_\theta, \, \theta \in \Theta\}$, we consider two sub-models 
$M_1: \theta \in \Theta_1$ and $M_2: \theta \in \Theta_2$ for disjoint sets 
$\Theta_1, \Theta_2 \subset \Theta$. In Bayesian statistics, comparison of such
models is often based on the Bayes factor $B_{1,2}$, which translates the prior
odds into the posterior odds \citep[e.g.,][]{kas1995}, that is,
\begin{equation} \label{eq:bayes-factor}
 \frac{\pi_{{\rm posterior}}(\Theta_1)}{\pi_{{\rm posterior}}(\Theta_2)} = B_{1,2} \times \frac{\pi_{{\rm prior}}(\Theta_1)}{\pi_{{\rm prior}}(\Theta_2)}.
\end{equation}
The Bayes factor can also be written as $B_{1,2} = L (z \mid M_1) / L(z \mid M_2)$,
where
  \begin{equation} \label{eq:bayes-factor-int}
   L(z \mid M_i) = \int_\Theta L(z; \theta) \pi(\theta \mid M_i) {\rm d}\theta, \quad i=1,2,
  \end{equation}
are the so-called marginal probabilities of the data and $\pi(\cdot \mid M_i)$ 
is the prior density of the parameter $\theta$ under the model $M_i$. Since the
max-stable likelihood cannot be computed, the integral in 
\eqref{eq:bayes-factor-int} is computationally infeasible. However, we can use 
the estimation of the posterior probability \eqref{posterior-theta} discussed 
in the previous subsection and estimate
\begin{equation} \label{eq:bayes-factor-2}
B_{1,2}=\frac{\pi_\theta(\Theta_2)}{\pi_\theta(\Theta_1)} \times \frac{\int_{\Theta_1}L(\theta \mid \{z^{(l)}\}_{l=1}^N) {\rm d}\theta}{\int_{\Theta_2}L(\theta\mid \{z^{(l)}\}_{l=1}^N) {\rm d}\theta}.
\end{equation}

As an example, we consider a simple regression model
\begin{align}\label{regress}
  \xi_i = \alpha + i \beta, \quad i=1,\ldots,k,
\end{align}
for the marginal shape parameters $\xi_1,\ldots,\xi_k$ of the $k$-dimensional
max-stable distribution in dimension $k$. One might be interested in testing if
there is a linear trend in the shape parameters, and, thus, in comparing the 
models $M_1: \{\beta = 0\}$ and $M_2: \{\beta \neq 0\}$. In order to compute 
the Bayes factor as the ratio of the posterior probabilities of the two models
according to \eqref{eq:bayes-factor-2}, the prior distribution $\pi_\beta$ of
$\beta$ must be a mixture $0.5 \times \delta_{\{0\}} + 0.5 \times \pi_\beta^c$
of a Dirac point mass $\delta_{\{0\}}$ on $0$ and an appropriate continuous
distribution $\pi_\beta^c$ on $\mathbb{R}$. This ensures that we have a 
positive posterior probability on both sets $\{\beta = 0\}$ and 
$\{\beta \neq 0\}$ and the Bayes factor is well-defined.

Similarly as in Section \ref{est_margins}, we simulate $N=15$ data from a 
max-stable logistic distribution with dimension $k=10$ and dependence parameter
$\theta_0=0.5$, with marginal parameters $\mu_i=1$, $\sigma_i=1$ and $\xi_i$ as
in \eqref{regress} with $\alpha = 1$ and different values for $\beta$,
$i=1, \dots, k$. The prior distributions for the dependence, location and scale
parameters are chosen as in Section \ref{est_margins}. The prior for $\alpha$ 
is standard normal and for the prior for $\beta$ is a mixture of 
$0.5 \times \delta_{\{0\}} + 0.5 \times \pi_\beta^c$ of a point mass and a 
centered normal with standard deviation $0.5$ as the continuous component
$\pi_\beta^c$.

A Markov chain whose stationary distribution is the posterior distribution of
the parameters given the data can be constructed analogously to Section 
\ref{subsec:bayes_setup}. However, given the current state $\beta$ of the
Markov chain, the proposal $\beta^\ast$ is not drawn from a continuous 
distribution with density $q$, but from a mixture
$$p_0(\beta) \delta_{\{0\}}(\cdot) + (1-p_0(\beta)) q^c(\beta, \cdot)$$
of a Dirac point mass on $\{0\}$ and a continuous distribution with density
$q^c(\beta,\cdot)$, with mixture weight $p_0(\beta) \in (0,1)$. To ensure 
convergence of the Markov Chain, the densities $q^c(\beta,\cdot)$ should be 
chosen such that $q^c(\beta,\beta^\ast) > 0$ if and only if 
$q^c(\beta^\ast,\beta)>0$.

Figure \ref{BF} shows the Bayes factors $B_{1,2}$ that compare the model 
without trend $M_1: \{\beta = 0\}$ and the model with trend 
$M_2: \{\beta \neq 0\}$ for the simulated data described above. The true trend
varies from $\beta=0$, in which case $M_1$ would be correct, over positive
values up to $\beta=0.08$ where $M_2$ is the correct model. As comparison, we
implemented a Bayesian approach based on the independence likelihood 
\citep{CB07}, which is the product of the marginal densities and ignores the
dependence structure. The results show that using the full likelihood that 
takes the dependence into account and treats it as a nuisance parameter
significantly facilitates the distinction between the two different models. The
Bayes factors for the full likelihood show stronger support for $M_1$ if 
$\beta=0$, and decrease more rapidly to $0$ if $\beta>0$ than the Bayes factors
for independence likelihood. 

 \begin{figure}
\centering
  \includegraphics[trim = 25mm 0mm 0mm 10mm,width= .45\textwidth]{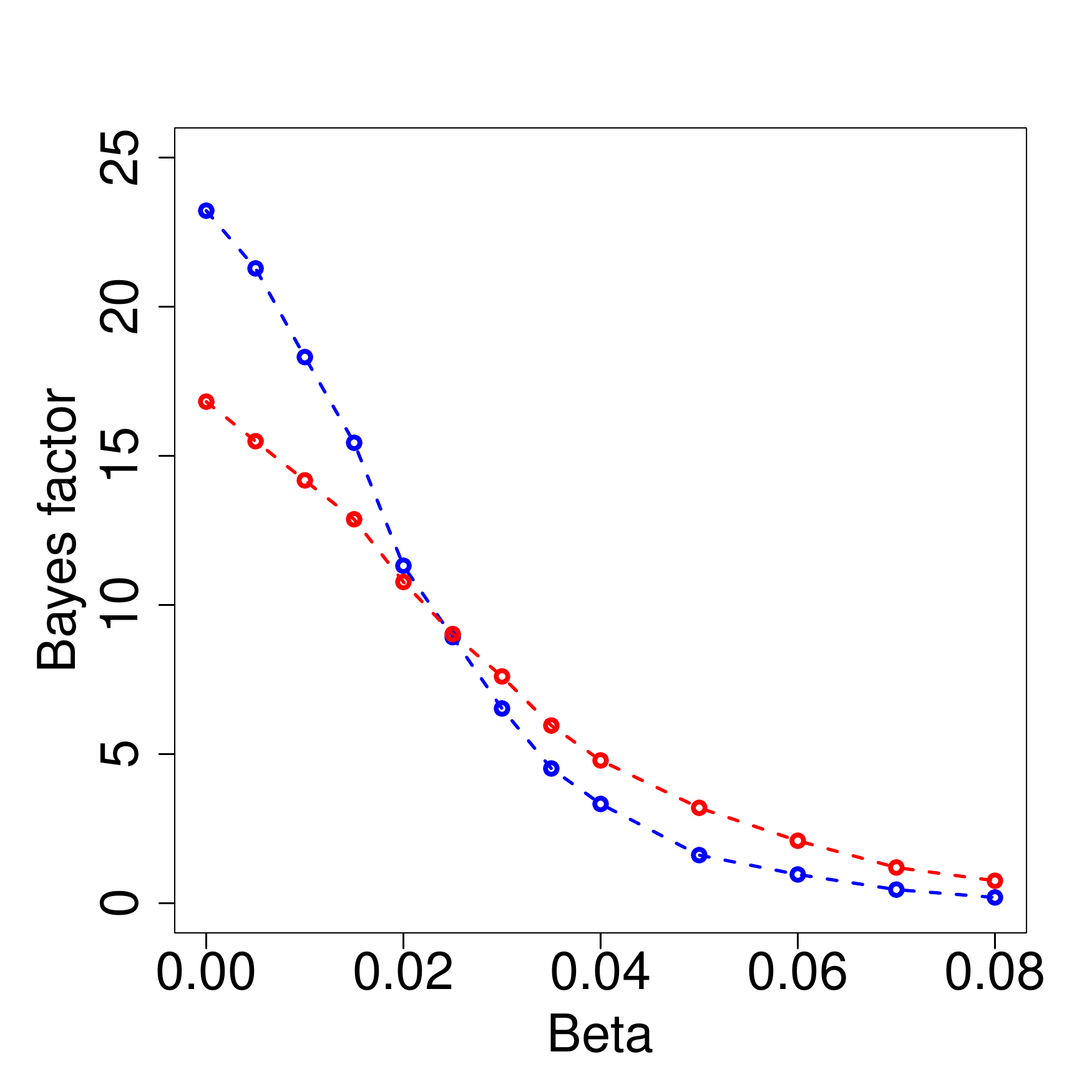}
  \caption{Bayes factors for different value of $\beta$ for full likelihood
           (blue) and independence likelihood (red).}
  \label{BF}
\end{figure}

Finally, we note that a similar approach has been proposed in the univariate
setting for estimation of the shape parameter in \citet{ST05a} in order to 
allow the Gumbel case $\xi = 0$ with positive probability.

\section{Discussion}\label{sec:disc}

We present an approach that allows for inference of max-stable distributions 
based on full likelihoods by perceiving the underlying random partition of the
data as latent variables in a Bayesian framework. The formulas for 
$\omega(\tau_j,z)$ provided in Section \ref{sec:examples} allow in principle 
to perform Bayesian inference based on full likelihoods for many popular 
max-stable distributions in any dimension. However, computational challenges 
arise for both the extremal-$t$ and the Brown--Resnick model in higher 
dimensions since the corresponding $\omega(\tau_j,z)$ require the evaluation of
a multivariate Student and Gaussian distribution functions, respectively, which 
have to be approximated numerically; see also \cite{thi2015}.

Making use of the weights $\omega(\tau_j,z)$, the posterior distribution of the
parameters becomes numerically available by samples based on Markov chain Monte 
Carlo techniques. As the results in Section \ref{subsec:posterior} indicate, 
the posterior distribution does not show strong influence of the prior 
distribution even in case of a rather small amount of data; cf., Figure 
\ref{hist}.  In most of the examples presented here, the proposal 
distributions for the model parameters in the Metropolis--Hastings algorithms
were chosen to be centered around the current state of the Markov chain with an
appropriate standard deviation, resulting in chains with satisfactory 
convergence and mixing properties; cf., Figures \ref{MCs} and \ref{acf}, for 
instance. Further improvements of these properties might be possible, e.g., by 
implementing an adaptive design of the Markov chain Monte Carlo algorithms.

In the frequentist framework, we propose to use the posterior median as a point
estimator for the model parameters. As the simulation studies in Section 
\ref{sec:simu} show, the use of full likelihoods considerably improves the 
estimation errors compared to the commonly used composite likelihood method
even in the case of a rather small sample size. This complements our theoretical
results on the asymptotic efficiency of the posterior median. Besides the point
estimator in the frequentist setting, we can also make use of the posterior
distribution in a Bayesian framework. In Section \ref{sec:bayes}, we discuss 
the use of credible intervals and Bayesian model comparison for max-stable 
distributions. Further applications such as Bayesian prediction 
are possible.

\appendix

\section{Proofs postponed from Section \ref{sec:examples}} \label{app:proofs}

\subsection{Proofs from Subsection \ref{subsec:logistic}}

\begin{proof}[Proof of Prop.\ \ref{log_prop}]
Taking partial derivative of the exponent function \eqref{log_expo} we obtain
\begin{eqnarray*}
   - \partial_{\tau_j} V_{\theta}(z)&=& \prod_{i=1}^{|\tau_j| -1} \left(\frac{i}{\theta} -1 \right)\left(\sum_{i=1}^k z_i^{-1/\theta}\right)^{\theta - |\tau_j|}
   \prod_{i\in\tau_j} z_i^{-1/\theta -1}. 
\end{eqnarray*}
We note that 
\[
\frac{\Gamma(|\tau_j|-\theta)}{\Gamma(1-\theta)}=\prod_{i=1}^{|\tau_j|-1}(i-\theta).
\]
Using this, Equation \eqref{eq:def_L_tau_frechet} becomes for the logistic model
\[
L(\tau,z;\theta)=\exp\{-V(z)\}\prod_{j=1}^\ell \omega(\tau_j,z) 
\]
with
\[
\omega(\tau_j,z)= \theta^{-1|\tau_j|+1}\frac{\Gamma(|\tau_j|-\theta)}{\Gamma(1-\theta)}\,\left(\sum_{i=1}^k z_i^{-1/\theta}\right)^{\theta-|\tau_j|} \prod_{i\in\tau_j} z_i^{-1-1/\theta}.
\]
\end{proof}

\begin{proof}[Proof of Prop.\ \ref{prop:cond-logistic}]
From Prop.~4.1 in \cite{deo2017} it follows that the model is differentiable  
in quadratic mean. For any $1 \leq i_1 < i_2 \leq k$, the pairwise extremal 
coefficient of the logistic model with parameter $\theta\in(0,1)$ is 
$\tau_{i_1,i_2}(\theta) = 2^\theta$, a strictly increasing function in 
$\theta$. The assertion of the proposition follows by Remark~\ref{rem:mon}.
\end{proof}

\subsection{Proofs from Subsection \ref{subsec:dirichlet}}

Both the proof of  Prop.~\ref{diri_prop} and Prop.~\ref{prop:cond-dirichlet}
rely on the following lemma.

\begin{lemma}\label{lem:dirichlet}
Let $Y(\alpha_1),\ldots,Y(\alpha_k)$ be independent random variables such that
$Y(\alpha)$ has a Gamma distribution with shape parameter $\alpha>0$ and scale
$1$.
\begin{itemize}
\item[(i)] Let $U_1>U_2>\dots$ be the points of a Poisson point process on 
$(0,\infty)$ with intensity $u^{-2}\mathrm{d}u$ and $\tilde Y^{(1)}, 
\tilde Y^{(2)}, \ldots$ independent copies of the random vector 
$\tilde Y = (Y(\alpha_i)/\alpha_i)_{1\leq i\leq k}$. Then the simple max-stable
random vector $Z=\bigvee_{i\geq 1} U_i\tilde Y^{(i)}$ has angular density 
\eqref{diri-spec}.
\item[(ii)] In the Dirichlet max-stable model \eqref{diri-spec}, the pair 
extremal coefficient $\tau_{i_1,i_2}$, $1\leq i_1<i_2\leq k$, is given by
\[
\tau_{i_1,i_2} = \tau(\alpha_{i_1},\alpha_{i_2}) = \mathbb{E}\left[ \frac{Y(\alpha_{i_1})}{\alpha_{i_1}}\vee\frac{Y(\alpha_{i_2})}{\alpha_{i_2}}\right].
\]
Furthermore, $\tau: (0,\infty)^2 \to [1,2]$ is continuously differentiable
and strictly decreasing in both components.
\end{itemize}
\end{lemma}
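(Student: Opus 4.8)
The plan is to handle the two parts separately, using the de~Haan spectral representation to connect the Poisson construction to the angular density in part~(i), and convex ordering (through the beta--gamma algebra) to obtain the monotonicity in part~(ii). For part~(i) I would start from the general fact that if $\{U_i\}_{i\geq1}$ is a Poisson point process on $(0,\infty)$ with intensity $u^{-2}\,\mathrm{d}u$ and the $\tilde Y^{(i)}$ are independent copies of a vector $\tilde Y$ with $\mathbb{E}[\tilde Y_j]=1$ for each $j$, then $Z=\bigvee_{i\geq1}U_i\tilde Y^{(i)}$ is simple max-stable with unit Fr\'echet margins and exponent function $V(z)=\mathbb{E}[\max_i \tilde Y_i/z_i]$; this follows from the Poisson mapping theorem and the elementary identity $\int_0^\infty u^{-2}\mathbf{1}\{u>a^{-1}\}\,\mathrm{d}u=a$. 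Here $\tilde Y=(Y(\alpha_i)/\alpha_i)_i$ satisfies $\mathbb{E}[\tilde Y_j]=\mathbb{E}[Y(\alpha_j)]/\alpha_j=1$ since $Y(\alpha_j)\sim\Gamma(\alpha_j,1)$ has mean $\alpha_j$, so the hypotheses hold and this $V$ is exactly the Dirichlet exponent function.

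To extract the angular density I would pass to polar coordinates $S=\|\tilde Y\|_1$, $W=\tilde Y/S$, recalling that the spectral measure $H$ on the simplex is characterised by $\int g\,\mathrm{d}H=\mathbb{E}[S\,g(W)]$ and that the density $h$ in \eqref{diri-spec} is $h=k^{-1}\,\mathrm{d}H/\mathrm{d}w$. Writing the joint density of $\tilde Y$ and substituting $\tilde y=sw$ with Jacobian $s^{k-1}$, the density of $(S,W)$ factorises and the only real computation is the radial integral $\int_0^\infty s\cdot s^{\sum_i\alpha_i-1}e^{-s\sum_i\alpha_i w_i}\,\mathrm{d}s=\Gamma\bigl(1+\sum_i\alpha_i\bigr)\bigl(\sum_i\alpha_i w_i\bigr)^{-1-\sum_i\alpha_i}$. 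Collecting the powers of $\sum_i\alpha_i w_i$ reproduces \eqref{diri-spec} up to the factor $k$; this step is routine bookkeeping.

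For part~(ii), specialising $V(z)=\mathbb{E}[\max_i \tilde Y_i/z_i]$ to the bivariate margin $(i_1,i_2)$ and using $\tau_{i_1,i_2}=V(1,1)$ on that margin gives immediately $\tau(\alpha_{i_1},\alpha_{i_2})=\mathbb{E}[Y(\alpha_{i_1})/\alpha_{i_1}\vee Y(\alpha_{i_2})/\alpha_{i_2}]$. Writing $a\vee b=\tfrac12(a+b+|a-b|)$ and using $\mathbb{E}[a]=\mathbb{E}[b]=1$ yields the convenient identity $\tau=1+\tfrac12\mathbb{E}|a-b|$, whence the bounds $0\le|a-b|\le a+b$ give $\tau\in[1,2]$. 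Continuous differentiability I would obtain from the representation $\tau(\alpha_1,\alpha_2)=\int_0^\infty[1-F_{\alpha_1}(\alpha_1 t)F_{\alpha_2}(\alpha_2 t)]\,\mathrm{d}t$ by differentiating under the integral sign, the Gamma distribution function being smooth in its shape parameter with derivatives dominated uniformly on compact subsets of $(0,\infty)^2$.

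The crux, and the step I expect to be the main obstacle, is the strict monotonicity. The clean route combines convex ordering with the beta--gamma algebra: for $0<\alpha'<\alpha$ one has $Y(\alpha')/Y(\alpha)\sim\mathrm{Beta}(\alpha',\alpha-\alpha')$ independent of $Y(\alpha)$, so $\mathbb{E}[Y(\alpha')/\alpha'\mid Y(\alpha)]=Y(\alpha)/\alpha$; that is, $Y(\alpha)/\alpha$ is a non-degenerate conditional expectation of $Y(\alpha')/\alpha'$, and conditional Jensen gives $Y(\alpha)/\alpha\le_{\mathrm{cx}}Y(\alpha')/\alpha'$, with strict inequality $\mathbb{E}[G(Y(\alpha)/\alpha)]<\mathbb{E}[G(Y(\alpha')/\alpha')]$ whenever $G$ is strictly convex, the conditional variance being strictly positive. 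It remains to supply a strictly convex test function: integrating out $b=Y(\alpha_2)/\alpha_2$ first, $\mathbb{E}|a-b|=\mathbb{E}[G(a)]$ with $G(x)=\mathbb{E}|x-b|$, and since $b$ has the strictly positive density $f_b$ on $(0,\infty)$ one computes $G'(x)=2F_b(x)-1$ and $G''(x)=2f_b(x)>0$, so $G$ is strictly convex. Hence $a\mapsto\mathbb{E}[G(a)]$ is strictly decreasing in $\alpha_1$, making $\tau=1+\tfrac12\mathbb{E}[G(a)]$ strictly decreasing in $\alpha_1$; by symmetry the same holds in $\alpha_2$. The only genuinely delicate point is this upgrade from $\le$ to $<$ --- verifying $G''=2f_b>0$ and invoking strict Jensen against the non-degenerate bridge --- while the rest is bookkeeping.
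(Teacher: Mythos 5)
Your proposal is correct, and its overall architecture matches the paper's: part~(i) is a direct computation of the angular density (you work in polar coordinates on the simplex via $\int g\,\mathrm{d}H=\mathbb{E}[S\,g(W)]$, the paper computes the exponent-measure density $\lambda(z)=\int_0^\infty f_{\tilde Y}(z/u)u^{-k-2}\,\mathrm{d}u$ and restricts to the simplex --- the same radial Gamma integral either way), and part~(ii) rests on convex ordering of the family $(Y(\alpha)/\alpha)_{\alpha>0}$. Where you genuinely diverge is in how the convex-order argument is set up and, above all, how strictness is obtained. The paper simply cites the literature for the fact that $(Y(\alpha)/\alpha)_{\alpha>0}$ is non-increasing in convex order, applies it to the convex maps $u\mapsto u\vee(y_2/\alpha_2)$, and upgrades $\leq$ to $<$ by observing that equality of $\mathbb{E}[a\vee(y_2/\alpha_2)]$ for almost every $y_2$ would force $Y(\alpha_1)/\alpha_1$ and $Y(\alpha_1')/\alpha_1'$ to be equal in distribution. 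You instead make the ordering self-contained via the beta--gamma coupling $Y(\alpha')=B\,Y(\alpha)$ with $B\sim\mathrm{Beta}(\alpha',\alpha-\alpha')$ independent of $Y(\alpha)$, so that $Y(\alpha)/\alpha=\mathbb{E}[Y(\alpha')/\alpha'\mid Y(\alpha)]$ and conditional Jensen applies; and you obtain strictness by producing a genuinely strictly convex test function, $G(x)=\mathbb{E}|x-b|$ with $G''=2f_b>0$, exploiting the identity $\tau=1+\tfrac12\mathbb{E}|a-b|$ (which also gives the range $[1,2]$ for free). Your route costs a little more writing but buys a proof that needs no external reference and localizes the ``strict'' part of strict Jensen in an explicit second derivative; the paper's route is shorter but leans on a citation and on the distribution-determining property of $c\mapsto\mathbb{E}[(X-c)_+]$. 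The continuous-differentiability step is handled the same way in both (differentiation under an integral sign, yours over the survival function, the paper's over the double Gamma integral), and at the same level of detail.
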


\begin{proof}
For the proof of the first part, we note that the intensity of the spectral
measure is given by
\[
\lambda(z)=\int_0^\infty f_{\tilde Y}(z/u)u^{-k-2}\mathrm{d}u,\quad z\in(0,\infty)^k,
\]
where 
\[
f_{\tilde Y}(\tilde y)=\prod_{i=1}^k \frac{\alpha_i^{\alpha_i}}{\Gamma(\alpha_i)}\tilde y_i^{\alpha_i-1}e^{-\alpha_i\tilde y_i},\quad \tilde y\in(0,\infty)^k,
\]
is the density of the random vector $\tilde Y$. A direct computation yields
\[
\lambda(z)= \frac{\Gamma(1+ \sum_{i=1}^d \alpha_i)}{(\sum_{i=1}^d \alpha_i z_i )^{1+\sum_{i=1}^d \alpha_i }} \prod_{i=1}^d  \frac{\alpha_i^{\alpha_i} z_i^{\alpha_i-1}}{\Gamma(\alpha_i)} .
\]
We see that the restriction of $\lambda$ to the simplex  $S^{k-1}$ is equal to 
$h$ which proves the claim.

The first statement of the second part, is a direct consequence of the first
part since
\[
\tau(\alpha_1, \alpha_2)=-\log\mathbb{P}(Z_1 \leq 1,Z_2\leq 1)=\mathbb{E}\left[ \frac{Y(\alpha_1)}{\alpha_1}\vee\frac{Y(\alpha_2)}{\alpha_2}\right].
\]
The proof of the strict monotonicity relies on the notion of convex order, see
Chapter 3.4 in \citet{DDGK05}. For two real-valued random variables $X_1$, 
$X_2$ we say that $X_1$ is lower than $X_2$ in convex order if
$\mathbb{E}[\varphi(X_1)]\leq  \mathbb{E}[\varphi(X_1)]$ for all convex 
functions $\varphi:\mathbb{R}\to\mathbb{R}$ such that the expectations exist. 
It is known that the family of random variables $(Y(\alpha)/\alpha)_{\alpha>0}$ 
is non-increasing in convex order \citep[Section 4.3]{ROS00}, and, in this 
case, the Lorenz order is equivalent to the convex order 
\citep[Property 3.4.41]{DDGK05}. We show below that this implies that 
$\tau(\alpha_1,\alpha_2)$ is strictly decreasing in its arguments.

Let $\alpha_1'> \alpha_1>0$ and $\alpha_2>0$ and let us prove that 
$\tau(\alpha_1',\alpha_2)<\tau(\alpha_1,\alpha_2)$. For independent random 
variables $Y(\alpha_1)$, $Y(\alpha_1')$ and $Y(\alpha_2)$, we have 
\[
\tau(\alpha_1',\alpha_2)=\mathbb{E}\left[ \frac{Y(\alpha_{1}')}{\alpha_1'}\vee\frac{Y(\alpha_2)}{\alpha_{2}}\right]  
\quad \mbox{and}\quad \tau(\alpha_1,\alpha_2)=\mathbb{E}\left[ \frac{Y(\alpha_{1})}{\alpha_{1}}\vee\frac{Y(\alpha_{2})}{\alpha_{2}}\right].
\]   
Using that $Y(\alpha_1')/\alpha_1'$ is lower than $Y(\alpha_1)/\alpha_1$ in 
convex order, we obtain
\begin{equation}\label{eq:cvx-order}
\mathbb{E}\left[ \frac{Y(\alpha_{1}')}{\alpha_{1}'}\vee\frac{y_{2}}{\alpha_{2}}\right]\leq \mathbb{E}\left[ \frac{Y(\alpha_{1})}{\alpha_{1}}\vee\frac{y_{2}}{\alpha_{2}}\right]\quad \mbox{for all } y_2>0,
\end{equation}
because the map $u\mapsto u\vee(y_2/\alpha_2)$ is convex. Replacing $y_2$ by
$Y(\alpha_2)$ and integrating, we get 
$\tau(\alpha_1',\alpha_2)\leq \tau(\alpha_1,\alpha_2)$. The equality 
$\tau(\alpha_1',\alpha_2)= \tau(\alpha_1,\alpha_2)$ would imply that for
almost every $y_2>0$ the equality holds in \eqref{eq:cvx-order} which is true 
if and only if $Y(\alpha_1)/\alpha_1$ and $Y(\alpha_1')/\alpha_1'$ have the
same distribution. Since this is not the case, 
$\tau(\alpha_1',\alpha_2)< \tau(\alpha_1,\alpha_2)$ and $\tau$ is strictly
decreasing in $\alpha_1$. By symmetry, $\tau$ is also strictly decreasing in 
$\alpha_2$.

Finally, the fact that $(\alpha_1,\alpha_2)\mapsto \tau(\alpha_1,\alpha_2)$ is
continuously differentiable follows from the integral representation
\begin{equation}\label{eq:tau_alpha}
\tau(\alpha_1,\alpha_2)=\int_0^\infty\int_0^\infty \frac{y_1}{\alpha_1}\vee \frac{y_2}{\alpha_2}\frac{1}{\Gamma(\alpha_1)\Gamma(\alpha_2)}y_1^{\alpha_1-1}y_2^{\alpha_2-1}e^{-y_1}e^{-y_2}\,\mathrm{d}y_1\mathrm{d}y_2,
\end{equation}
for $\alpha_1,\alpha_2>0$, and standard theorems for integrals depending on a
parameter.
\end{proof}

\begin{proof}[Proof of Prop.~\ref{diri_prop}]
From the construction given in the first part of Lemma \ref{lem:dirichlet},
we obtain
\begin{eqnarray*}
  \lambda(z)
&=&\int_{0}^\infty   \left( \prod_{i=1}^k \frac{\alpha_i^{\alpha_i}}{\Gamma(\alpha_i)}\left(\frac{z_i}{r}\right)^{\alpha_i-1}e^{-(\alpha_i z_i/r)} \right) r^{-2-k}\mathrm{d}r\\
&=&\prod_{i=1}^k \frac{\alpha_i^{\alpha_i}z_i^{\alpha_i-1}}{\Gamma(\alpha_i)} \int_{0}^\infty e^{-\frac1r \sum_{i=1}^k \alpha_i z_i}  r^{-2-\sum_{i=1}^k \alpha_i}\mathrm{d}r
\end{eqnarray*}
and, consequently,
\begin{eqnarray*}
&&\int_{u_j<z_{\tau_j^c}}\lambda(z_{\tau_j},u_j)\mathrm{d}u_j\\
&=& \prod_{i \in \tau_j}  \frac{\alpha_i^{\alpha_i}z_i^{\alpha_i-1}}{\Gamma(\alpha_i)} 
    \int_{0}^\infty  \int_{u_j<z_{\tau_j^c}}\ \prod_{i \in \tau_j^c} \left(\frac{\alpha_i^{\alpha_i} z_i^{\alpha_i-1}}{\Gamma(\alpha_i)}
      e^{-(\alpha_i z_i/r)} \right) \mathrm d u_j \cdot e^{-\frac 1 r \sum_{i \in \tau_j} \alpha_i z_i} r^{-2- \sum_{i=1}^k \alpha_i}\mathrm{d}r\\
&=& \prod_{i\in\tau_j} \frac{\alpha_i^{\alpha_i} z_i^{\alpha_i-1}}{\Gamma(\alpha_i)}  \int_0^\infty  e^{-\frac 1 r \sum_{i\in \tau_j}\alpha_i z_i}
  \left(\prod_{i\in\tau_j^c} F_{\alpha_i}(\alpha_i z_i/r)\right) r^{-2-\sum_{i \in \tau_j} \alpha_i} \mathrm d r,
\end{eqnarray*}
where
\begin{eqnarray*}
  F_{\alpha}(x) = \frac{1}{\Gamma(\alpha)} \int_0^x t^{\alpha -1 } e^{-t} \mathrm d t,
\end{eqnarray*}
is the distribution function of a Gamma variable with shape $\alpha >0$.
\end{proof}

\begin{proof}[Proof of Prop.~\ref{prop:cond-dirichlet}]
 Prop.~4.2 in \cite{deo2017} implies that the model is differentiable in
 quadratic mean. In order to verify Eq.~\eqref{eq:uniftest1} for the Dirichlet
 model, we consider the mapping
 $$ \Psi: \ (0,\infty)^k \to [1,2]^k, \ \theta = (\alpha_1,\ldots,\alpha_k) \mapsto (\tau_{1,2},\tau_{2,3}, \tau_{1,3}, \tau_{1,4}, \ldots, \tau_{1,k}).$$
 We first show that $\Psi$ is injective. To this end, let $\theta^{(1)} \neq \theta^{(2)} \in \Theta$
 where $\psi_i = (\alpha_1^{(i)},\ldots,\alpha_k^{(i)})$, $i=1,2$. We distinguish
 between two cases. First, we assume that $\theta^{(1)}$ and $\theta^{(2)}$ share
 at least one common component. Then, there is a pair $(i,j) \in \{(1,2), (2,3), (1,3), (1,4), \ldots, (1,k)\}$
 such that $(\alpha^{(1)}_i, \alpha^{(1)}_j)$ and $(\alpha^{(2)}_i, \alpha^{(2)}_j)$
 differ in exactly one component. As $\tau_{i,j} = \tau(\alpha_i,\alpha_j)$ is 
 strictly decreasing both in $\alpha_i$ and $\alpha_j$, by Lemma 
 \ref{lem:dirichlet}, we have that 
 $\tau(\alpha^{(1)}_i,\alpha^{(1)}_j) \neq \tau(\alpha^{(2)}_i,\alpha^{(2)}_j)$.
 Secondly, we consider the case that $\theta^{(1)}$ and $\theta^{(2)}$
 do not share any common component. Then, there is a pair
 $(i,j) \in \{(1,2), (2,3), (1,3)\}$ such that both components
 $(\alpha^{(1)}_i - \alpha^{(2)}_i, \alpha^{(1)}_j - \alpha^{(2)}_j)$
 have the same sign and, again, by the strict monotonicity of
 $\tau(\alpha_i,\alpha_j)$, it follows that 
 $\tau(\alpha^{(1)}_i,\alpha^{(1)}_j) \neq \tau(\alpha^{(2)}_i,\alpha^{(2)}_j)$.
 Hence, in both cases, $\Psi(\theta^{(1)}) \neq \Psi(\theta^{(2)})$, that is, 
 $\Psi$ is injective and there exists a unique inverse function 
 $\Psi^{-1}: \Psi((0,\infty)^k) \to (0,\infty)^k$.
 
Consider the set
 $$ \Theta' = \{(\alpha_1,\ldots,\alpha_k) \in (0,\infty)^k: \ \partial_{\alpha_i} \tau(\alpha_i,\alpha_j) < 0 
                                                  \text{ and } \partial_{\alpha_j} \tau(\alpha_i,\alpha_j) < 0
                \text{ for all } 1 \leq i < j \leq k\}.$$
Note that since $\tau$ is continuously differentiable and strictly decreasing 
in its argument, $\Theta\setminus\Theta'$ has Lebesgue measure $0$. For all 
$\theta_0 =(\alpha_1,\ldots,\alpha_k) \in \Theta'$, the Jacobian $D \Psi$ 
satisfies
 \begin{align*}
\det\{D\Psi(\theta_0)\} =& \left\{\partial_{\alpha_1} \tau(\alpha_1,\alpha_2) \cdot \partial_{\alpha_2} \tau(\alpha_2,\alpha_3) \cdot \partial_{\alpha_3} \tau(\alpha_1,\alpha_3)\right.\\
 & \hfill \left.                      + \partial_{\alpha_2} \tau(\alpha_1,\alpha_2) \cdot \partial_{\alpha_3} \tau(\alpha_2,\alpha_3) \cdot \partial_{\alpha_1} \tau(\alpha_1,\alpha_3) \right\}
                 \cdot \prod_{j=4}^k \partial_{\alpha_j} \tau(\alpha_1,\alpha_j)\\ \neq{} 0.
\end{align*}
The inverse function theorem then implies that $\Psi^{-1}$ is continuously 
differentiable at $\Psi(\theta_0)$, that is, for every $\varepsilon >0$, there
exists $\delta>0$ such that $\|\Psi(\theta_0) - \Psi_\infty(\theta)\|_\infty < \delta$
implies $\|\theta_0 - \theta\|_\infty < \varepsilon$. In particular, we obtain
 $$ \inf_{\|\theta_0 - \theta\|_\infty > \varepsilon} \|\Psi(\theta_0) - \Psi(\theta)\|_\infty \geq \delta,$$
that is, Eq.~\eqref{eq:uniftest1}, and the asymptotic normality and efficiency
of the posterior median for $\theta_0 \in \Theta'$ follow from Prop.~\ref{prop:uniftest}. 
Finally, we note that each extremal coefficient $\tau$ is continuously 
differentiable and strictly decreasing with respect to both components by Lemma
\ref{lem:dirichlet}. Thus, $\partial_{\alpha_1} \tau(\alpha_1,\alpha_2) < 0$
and $\partial_{\alpha_2} \tau(\alpha_1,\alpha_2) < 0$ for almost every
$\theta \in \Theta$.
\end{proof}

\subsection{Proofs from Subsection \ref{subsec:extremalt}}

\begin{proof}[Proof of Prop.~\ref{prop:extremalt}]
  By Prop.~4.3 in \cite{deo2017}, the model is differentiable in quadratic mean
  (even if $\nu > 0$ is not fixed). For any $1 \leq i_1 < i_2 \leq k$, and 
  fixed $\nu>0$, the pairwise extremal coefficient of the extremal-$t$ model
  with parameter matrix $\Sigma = \{\rho_{ij}\}_{1\leq i,j \leq k}$ is 
  \begin{align}\label{tau_ext}
    \tau_{i_1,i_2}(\Sigma) = 2 T_{\nu + 1}\left(\sqrt{(\nu+1) \frac{1 - \rho_{i_1i_2}}{1 + \rho_{i_1i_2}}} \right),
  \end{align}
  where $T_{\nu+1}$ denotes the distribution function of a $t$-distribution with
  $\nu+1$ degrees of freedom. Therefore, $\tau_{i_1,i_2}(\Sigma)$ as a function
  of $\rho_{i_1i_2}\in[-1,1]$ is strictly decreasing and the claim follows
  by Remark \ref{rem:mon} together with Prop.~\ref{prop:uniftest}. 
\end{proof}

\begin{proof}[Proof of Cor.~\ref{cor:ext}]
 Analogously to the proof of Cor.~4.4 in \cite{deo2017}, it can be shown that
 the model is differentiable in quadratic mean. Suppose that $\|t_1-t_2\|_2 
 \neq \|t_2-t_3\|_2$ and observe that the mapping
 $\Psi:\Theta \to \Psi(\Theta)$, $\theta = (s,\alpha) \mapsto \{\rho_{ij}\}_{1 \leq i, j \leq k}  = \{\exp(- \|t_i -t_j\|_2^\alpha/s)\}_{1 \leq i, j \leq k}$
 is continuously  differentiable. Since
 $$ \alpha = \frac{\log\{\log\rho_{12}\} - \log\{\log\rho_{23}\}}{\log \|t_1-t_2\|_2 - \log\|t_2-t_3\|_2},\qquad s= - \frac{\|t_1-t_2\|_2^\alpha}{\log\rho_{12}},$$
 the same holds true for the inverse mapping $\Psi^{-1}$. 
 
 Further, from the continuity of $\Psi^{-1}$ at $\Psi(\theta_0)$ at any 
 $\theta_0\in \Theta$, we obtain that for every $\varepsilon >0$ there is some
 $\delta>0$ such that for all $\Psi(\theta) \in \Psi(\Theta)$ with 
 $\|\Psi(\theta_0) - \Psi(\theta)\|_\infty < \delta$ we have 
 $\|\theta_0 - \theta\|_\infty < \varepsilon$. Consequently,
 $$ \inf_{\|\theta_0 - \theta\|_\infty > \varepsilon} \|\Psi(\theta_0) - \Psi(\theta)\|_\infty \geq \delta.$$
 From the proof of Prop.~\ref{prop:extremalt} and with the notation in 
 \eqref{tau_ext}, we obtain that
 $\|\Psi(\theta_0) - \Psi(\theta)\|_\infty \geq \delta$ implies
 $\max_{1\leq i_1<i_2\leq k} |\tau_{i_1,i_2}\{\Psi(\theta_0)\} - \tau_{i_1,i_2}\{\Psi(\theta)\}|>\delta'$ for some $\delta'>0$,
 that is, Equation \eqref{eq:uniftest1} holds. The assertion follows then from
 Prop.~\ref{prop:uniftest}.
\end{proof}

\subsection{Proofs from Subsection \ref{subsec:BR}}

\begin{proof}[Proof of Prop.~\ref{prop:cond-hr}]
  From Prop.~4.5 in \cite{deo2017}, it follows that the model is differentiable 
  in quadratic mean. For any $1 \leq i_1 < i_2 \leq k$, the pairwise extremal 
  coefficient of the H\"usler--Reiss model with parameter matrix 
  $\Lambda = \{\lambda^2_{i,j}\}_{1\leq i,j\leq k}$ is 
  \begin{align*}
    \tau_{i_1,i_2}(\Lambda) = 2\Phi_1\left\{ \sqrt{\lambda_{i_1,i_2}^2}\right\},
  \end{align*}
   which is a strictly increasing function in $\lambda_{i_1,i_2}^2 > 0$, and the claim follows
  by Remark \ref{rem:mon} together with Prop.~\ref{prop:uniftest}. 
\end{proof}

\begin{proof}[Proof of Cor.~\ref{cor:cond-br}]
 Analogously to Cor.~4.6 in \cite{deo2017}, the model can be shown to be 
 differentiable in quadratic mean. Suppose that $\|t_1-t_2\|_2 \neq 
 \|t_2-t_3\|_2$. As the mapping
 $\Psi:\Theta \to \Psi(\Theta)$, $\theta = (\lambda,\alpha) \mapsto \{\lambda^2_{ij}\}_{1 \leq i,j \leq k}  = \{ \frac {\|t_i - t_j\|_2^\alpha}{4 s}\}_{1 \leq i,j \leq k}$ 
 is continuously differentiable and
 $$ \alpha = \frac{\log\gamma_{12} - \log \gamma_{23}}{\log \|t_1-t_2\|_2 - \log \|t_2-t_3\|_2},\qquad s = \frac{\|t_i-t_j\|_2^\alpha}{4 \lambda^2 _{ij}},$$
 the inverse mapping $\Psi^{-1}$ is continuously differentiable, as well. The 
 same arguments as in the proof of Cor.~\ref{cor:ext} together with the proof 
 of Prop.~\ref{prop:cond-hr} yield the assertion.
\end{proof}

\bibliography{MaxStable_Bayesian}
\bibliographystyle{abbrvnat}

\end{document}